\documentclass[runningheads,orivec,envcountsame,envcountsect]{llncs}

\usepackage{enumitem}
\setlist[itemize]{topsep=2pt, parsep=0pt, partopsep=0pt, itemsep=2pt}
\setlist[enumerate]{topsep=2pt, parsep=0pt, partopsep=0pt, itemsep=2pt}

\usepackage[T1]{fontenc}
\usepackage[utf8]{inputenc}

\spnewtheorem*{restatetheorem}{Theorem}{\bfseries}{}
\spnewtheorem*{restatelemma}{Lemma}{\bfseries}{}
\spnewtheorem*{restatecorollary}{Corollary}{\bfseries}{}

\usepackage[b]{esvect} 
\usepackage{graphicx}
\graphicspath{{figures/}}
\usepackage{amssymb}
\usepackage{amsmath}
\allowdisplaybreaks
\usepackage{proof}
\usepackage{tikz}
\usetikzlibrary{cd,positioning,decorations.text,decorations.pathmorphing}
\tikzcdset{scale cd/.style={every label/.append style={scale=#1},cells={nodes={scale=#1}}}}
\usepackage{tikz-cd}
\usepackage[only=llparenthesis,rrparenthesis,llbracket,rrbracket]{stmaryrd}
\usepackage{qcircuit}

\usepackage{color}  
\definecolor{darkblue}{rgb}{0,0,0.5}
\definecolor{darkgreen}{rgb}{0,0.4,0}
\usepackage[
  bookmarks,
  colorlinks=true,     
  linkcolor=darkblue,
  citecolor=darkgreen,
  urlcolor=blue,
  bookmarksopen,
  bookmarksopenlevel=2,
  bookmarksdepth=2,
  bookmarksnumbered=true
]{hyperref}


\urlstyle{rm}

\usepackage[capitalise,nameinlink]{cleveref}
\Crefname{section}{Section}{Sections}
\Crefname{figure}{Figure}{Figures}
\Crefname{table}{Table}{Tables}
\Crefname{theorem}{Theorem}{Theorems}
\Crefname{lemma}{Lemma}{Lemmas}
\Crefname{definition}{Definition}{Definitions}

\makeatletter
\providecommand{\qed}{\hbox{\rule{1ex}{1ex}}}
\newcommand{\qedhere}{%
  \ifmmode
    \tag*{\qed}
  \else
    \hfill\qed
  \fi
}
\makeatother

\newcommand\ket[1]{\ensuremath{|#1\rangle}}

\newcommand\Span[1]{\ensuremath{{\mathsf{Span}}{#1}}}

\newcommand\comp[2][]{#2^{\bot^{#1}}}
\newcommand\Rpart[1]{\mathsf{Re(#1)}}
\newcommand\Ipart[1]{\mathsf{Im(#1)}}

\newcommand\Var{\ensuremath{\mathsf{Var}}}
\newcommand\Val{{\s V}}
\DeclareRobustCommand{\ValD}{\ensuremath{\vv{\mathsf{V}}}}

\newcommand\lra{\longrightarrow}

\newcommand\ansubst[2]{\ensuremath{\langle #1 \rangle_{#2}}}

\newcommand\AbsBasis{\ensuremath{\mathbb{A}}}
\newcommand\dom[1]{\mathrm{dom}(#1)}
\newcommand\sdom[1]{\mathrm{dom}^{\sharp}(#1)}
\newcommand\FV[1]{\mathrm{FV}(#1)}

\def\R{\mathbb{R}}            
\def\C{\mathbb{C}}            
\def\Val{\mathrm{V}}          
\def\Sph{\mathcal{S}_1}       
\def\scal#1#2{\langle{#1}~|~{#2}\rangle}

\def\<{\langle}
\def\>{\rangle}
\def\Pair#1#2{(#1,#2)} 
\def\Lam#1#2#3{\lambda#1^{#2}{.}#3} 
\def\letkeyword{\mathsf{let}}
\def\inkeyword{\mathsf{in}}

\def\LetP#1#2#3#4#5#6{\letkeyword\,\Pair{#1^{#2}}{#3^{#4}}=#5~\inkeyword~#6}





\def\case#1#2#3#4#5{\ensuremath{\mathsf{case}~#1~\mathsf{of}~\{#2\mapsto #4 \mid #3\mapsto #5\}}}
\def\gencase#1#2#3#4#5{\ensuremath{\mathsf{case}~#1~\mathsf{of}~\{#2\mapsto #4 \mid \dotsb \mid #3\mapsto #5\}}}

\def\Kron#1#2{\ensuremath{\delta_{#1,#2}}}

\def\lraneq{\rightsquigarrow}
\def\eval{\lra^*}
\def\lave{\mathrel{\reflectbox{$\eval$}}}

\def\Arr{\Rightarrow}
\def\Type{\mathbb{T}}
\def\BasisType{\Type_\flat}
\def\sem#1{\llbracket#1\rrbracket}

\def\SUB#1#2{#1\le#2}

\def\TYP#1#2#3{#1~{\vdash}~#2~{:}~#3}
\def\SORTH#1#2#3#4{#1~{\vdash}~#2\perp#3~{:}~#4}
\def\ORTH#1#2#3#4#5#6{#1~{\vdash}~(#2~{\vdash}~#3)\perp(#4~{\vdash}~#5)~{:}~#6}

\def\snam#1{\textsc{\scriptsize\upshape(#1)}}
\def\real{\Vdash}

\def\ds{\displaystyle}
\outer\long\def\COUIC#1{}
\def\sqrthalf{{\textstyle\frac{1}{\sqrt{2}}}}


\newcommand\B{\mathbb B}
\newcommand\XB{\mathbb X}
\newcommand\Hd{\mathsf{H}}

\newcommand{\cnot}[2]{\mathsf{CNOT}\ #1\ #2}

\newcommand{\pauliX}[1]{\mathsf{NOT}\ #1}
\newcommand{\pauliZXB}{\mathsf{Z}_{\XB}}
\newcommand{\cnotXB}[2]{\mathsf{CNOT}_{\XB}\ #1\ #2}

\newcommand{\pauliXXB}[1]{\mathsf{NOT}_{\XB}\ #1}

\newcommand{\Bell}{\mathsf{Bell}}
\newcommand{\lambdaB}{\lambda_B}
\newcommand\basis[1]{\ensuremath{\flat_{#1}}}
\newcommand\genbasis[3]{\ensuremath{\basis{\{#1\}_{#2}^{#3}}}}


\begin{document}

\title{Basis-Sensitive Quantum Typing via Realisability}

\author{
  Alejandro Díaz-Caro\inst{1,2} 
  \and
  Octavio Malherbe\inst{3} 
  \and
  Rafael Romero\inst{2,4} 
}

\authorrunning{A. Díaz-Caro, O. Malherbe, and R. Romero}

\institute{
  Université de Lorraine, CNRS, Inria, LORIA, France
  \and
  Universidad Nacional de Quilmes, Argentina
  \and
  Universidad de la República, Facultad de Ingeniería, IMERL, Uruguay
  \and
  ICC, CONICET-Universidad de Buenos Aires, Argentina\\
}

\maketitle 

\begin{abstract}
  We present $\lambdaB$, a quantum‐control $\lambda$‐calculus that refines
  previous basis-sensitive systems by allowing abstractions to be expressed with
  respect to arbitrary---possibly entangled---bases.  Each abstraction and
  $\mathsf{let}$ construct is annotated with a basis, and a new basis-dependent
  substitution governs the decomposition of value distributions.  These
  extensions preserve the expressive power of earlier calculi while enabling
  finer reasoning about programs under basis changes.  A realisability semantics
  connects the reduction system with the type system, yielding a direct
  characterisation of unitary operators and ensuring safety by construction.
  From this semantics we derive a validated family of typing rules, forming the
  foundation of a type-safe quantum programming language.  We illustrate the
  expressive benefits of $\lambdaB$ through examples such as Deutsch’s algorithm
  and quantum teleportation, where basis-aware typing captures classical
  determinism and deferred-measurement behaviour within a uniform framework.

  \keywords{Quantum computing \and Realisability semantics\and $\lambda$-calculus.}
\end{abstract}

\section{Introduction}
The no-cloning theorem \cite{WoottersZurek1982} and the no-deleting theorem
\cite{PatiBraunstein2000} are two well-known results in quantum mechanics that
state it is impossible to copy or delete an arbitrary qubit. 
There is, however, a subtlety: although arbitrary qubits
cannot be copied or deleted, this is possible for known---and, in the case of
deletion, separable---qubits. This implies that qubits with known values behave
as classical data and can be treated accordingly. Moreover, it suffices to know
the basis to which a qubit belongs in order to copy and, to some extent, delete it.

In most quantum programming languages, qubits are defined with respect to a
canonical basis—often referred to as the computational basis. In this setting,
classical bits correspond to the basis vectors, whereas qubits are unit-norm
linear combinations of them. Classical bits can be copied and deleted freely,
while such operations on arbitrary qubits are restricted.

In this paper we introduce a quantum $\lambda$-calculus within the
quantum-control paradigm—by contrast with the classical-control one, where the
control flow is classical and cannot be superposed. Our approach is inspired by
a line of work on basis-sensitive quantum typing. The earliest system,
Lambda-S \cite{DiazcaroDowekRinaldiBIO19}, could distinguish whether a qubit
was in the computational basis, allowing duplication and erasure only in that
case. Later, Lambda-S$_1$
\cite{DiazcaroGuillermoMiquelValironLICS19,DiazCaroMalherbe2022} refined this
approach by restricting to unit-norm vectors, introducing higher-order
abstractions, and ensuring that terms of type qubit-to-qubit denote unitary
maps. More recently, the Lambda-SX calculus \cite{DiazcaroMonzonAPLAS25}
generalised Lambda-S to arbitrary non-entangled single-qubit bases, albeit
through a purely syntactic framework restricted to first order. The calculus we
present here extends these ideas to a unit-norm, higher-order setting over
multiple-qubit bases, grounded in a realisability semantics.

The realisability methodology, originating with Kleene's work on Heyting
arithmetic \cite{KleeneJSL45}, provides a constructive framework that connects
operational semantics with type systems. In our context, it allows the
extraction of a sound type system directly from the reduction semantics of the
calculus, ensuring that safety properties hold by construction. The approach
proceeds as follows:
\begin{enumerate}
  \item Define a calculus with a deterministic evaluation strategy.
  \item Define types as sets of closed values in the language.
  \item Define the typing judgement so that asserting that a term has a given
  type is, by definition, to state that the term reduces to a value of that
  type.
\end{enumerate}
Each typing rule therefore corresponds to a provable theorem in this setting.
Rather than building ad hoc typing rules, the system is derived from the
computational content of the calculus, making it possible to define whole
families of type systems by proving the validity of new rules.

Following this approach, we enrich abstractions with explicit basis
decorations. Intuitively, the reduction system treats values expressed in the
chosen basis as classical data, while linear combinations of these values
represent quantum data and reduce linearly within the term. This refinement
enables duplication and erasure for qubits in known bases while maintaining
linear handling for unknown qubits.

The objective of this work is twofold. First, to employ the extracted type
system to provide a more precise description of programs. Second, to take
advantage of the extended syntax to express quantum algorithms in a flexible
and compositional way, rather than merely translating circuits.

The idea of tracking basis information within quantum programming languages has
appeared in several forms across the literature.  
The work in~\cite{Perdrix2008} proposed an abstract model for static
entanglement analysis, where the basis of qubits is used to control duplication
and track non-entanglement properties.  
The approach in~\cite{DiazcaroMonzonAPLAS25} introduced a typed
$\lambda$-calculus that incorporates basis annotations into the type system,
ensuring linear handling of quantum data while providing strong
meta-theoretical guarantees.  
Our system generalises these ideas by allowing abstractions to range over
arbitrary---possibly entangled---bases, extending basis awareness beyond the
single-qubit setting and recovering higher-order computation.

A number of other frameworks support reasoning about multiple bases through
different paradigms.  
The ZX-calculus~\cite{CoeckeRossICALP08} provides a graphical language where
spiders correspond to computations relative to the computational and diagonal
bases, while the Many-Worlds
Calculus~\cite{ChardonnetdeVismeValironVilmartLMCS25} also accommodates
superpositions of programs but within a diagrammatic semantics.  
By contrast, our approach operates directly within a typed term calculus,
allowing basis transitions to be tracked syntactically.

From a categorical standpoint, the framework
in~\cite{HeunenKaarsgaard2021} models measurement and decoherence through
\emph{quantum information effects}, capturing basis change as a semantic
effect.  
Similarly,~\cite{CaretteJeunenKaarsgaardSabry2024} introduces
\textsc{Quantum}$\Pi$, a language combining two interpretations of a reversible
classical calculus—one per basis—via an effect construction.  
In both cases, basis sensitivity arises semantically.  
In contrast, $\lambdaB$ integrates it directly at the syntactic and typing
levels, ensuring that basis transformations are explicit in program structure.

Other languages aim to unify classical and quantum computation.
For instance,~\cite{VoichickLiRandHicks2023} generalises classical
control constructs and interprets duplication and discarding semantically as
entanglement and partial trace.  
Our calculus follows the opposite philosophy: linearity and duplicability are
syntactically restricted by the basis-sensitive type system, guaranteeing
coherent quantum control without relying on external semantic constraints.

Finally, related efforts in semantic characterisation—such as the fully
abstract model of~\cite{ClairambaultdeVisme2019} or the dual calculi
described in~\cite{ChoudhuryGay2025}—approach the logical foundations of
quantum computation from complementary angles.  
In contrast, $\lambdaB$ focuses on the syntactic distinction of bases and
superpositions within a unified quantum $\lambda$-calculus, providing a concrete
basis-sensitive typing and higher-order quantum control.

The contributions of this paper can be summarised as follows.
\begin{itemize}
  \item We introduce a \emph{basis-sensitive} quantum $\lambda$-calculus in the
  quantum-control paradigm, extending previous systems to a setting that is
  unit-norm, higher-order, and supports multiple-qubit entangled bases.
\item We develop a \emph{realisability-based semantics} that connects
  reduction and typing, enabling the systematic extraction of a sound
  type system.
  \item We formalise a \emph{type algebra} capable of tracking basis
  information throughout programs, and derive typing rules corresponding to
  provable theorems in the realisability interpretation.
  \item We illustrate the expressive power of the calculus through examples
  involving basis-dependent operations and multi-qubit structures, showing how
  it subsumes and extends previous single-basis approaches.
\end{itemize}

The paper is organised as follows.  
\Cref{sec:calculus} introduces the core language, its syntax, congruence rules,
and basis-dependent substitution.  
\Cref{sec:reduction} defines the operational semantics, and
\Cref{sec:model} presents the realisability model with unitary type semantics,
the characterisation of unitary operators, and the typing rules.  
\Cref{sec:examples} presents some representative examples.
All omitted proofs appear in the appendices, and
\Cref{sec:conclusion} concludes with final remarks and perspectives.

\section{Core language}\label{sec:calculus}
This section presents the calculus on which our realisability model will be
built. It is a $\lambda$-calculus extended with linear combinations of terms, in the line of~\cite{ArrighiDowekLMCS17}.

The syntax of the calculus is described in
\Cref{tab:Syntax}. The calculus is divided into four
distinct syntactic categories: \emph{pure values} ($\Val$), \emph{pure terms}
($\Lambda$), \emph{value distributions} ($\ValD$), and \emph{term
distributions} ($\vec\Lambda$). Values are composed of variables, decorated
lambda abstractions, and two basis values representing orthogonal vectors,
$\ket 0$ and $\ket 1$. A pair of values is also considered a value.  Terms
include values, applications, pair constructors and destructors, and pattern
matching for orthogonal vectors, represented by the $\mathsf{case}$ operator.
Both term and value distributions are built as $\C$-linear combinations of
terms and values, respectively. 
We use $v,u,w$ to denote values and $t,s,r$ for terms, writing $\vv{\cdot}$ when
they are distributions.
We also consider the following notation for linear combinations of
pairs. We stress that this notation for pairs does not appear in the syntax,
but is rather useful to describe specific states.
\[
  \Pair{\alpha  t+\vv{t_1}}{\vv{t_2}} 
  := \alpha\Pair{t}{\vv{t_2}} + \Pair{\vv{t_1}}{\vv{t_2}}
  \qquad\qquad
  \Pair{r}{\alpha  t+\vv{t_1}} 
  := \alpha\Pair{r}{t} + \Pair{r}{\vv{t_1}}
\]

The subsets $B$, $B_1$, and $B_2$ appearing in the abstractions and pair
destructors denote the bases of the vector spaces in which the terms are
expressed; their precise nature is made explicit below.
Before formally defining the notion of bases, we first establish the
algebraic structure underlying the space of value distributions.
This is achieved by introducing the congruence relation defined in
\Cref{tab:Congruence}.
The congruence captures the intended behaviour of addition and scalar
multiplication, allowing us to treat linear combinations of terms and values
as genuine algebraic entities rather than mere syntactic constructions.
In particular, it enforces the commutativity and associativity of addition and
the distributivity of scalar multiplication, thereby justifying summation
notation~$\sum$.

\begin{table}[t]
  \begin{align*}
    v &::= x \mid \Lam{x}B{\vv{t}} \mid (v, v) \mid \ket{0} \mid \ket{1} &
    (\Val)\\
    t &::= w \mid tt \mid (t,t) \mid
    \LetP{x}{B_1}{y}{B_2}{\vv{t}}{\vv{t}}\mid
    \gencase{\vv{t}}{\vv{v}}{\vv{v}}{\vv{t}}{\vv{t}} &
    (\Lambda) \\
    \vv{v} &::= v \mid \vv{v}+\vv{v} \mid \alpha \vv{v} \mid \vv 0
    \qquad\hfill(\alpha\in\C) & (\ValD) \\
    \vv{t} &::= t \mid \vv{t}+\vv{t} \mid \alpha \vv{t} \mid \vv 0
    \qquad\hfill(\alpha\in\C) & (\vv \Lambda)
  \end{align*}
  \caption{Syntax of the calculus, where $B, B_1, B_2 \subseteq \ValD$.}
  \label{tab:Syntax}
\end{table}
\begin{table}[t]
  \begin{align*}
    0\vv t &\equiv \vv 0 & \vv t+\vv 0 &\equiv \vv t \\
    1\vv{t} &\equiv \vv{t} &
    \alpha(\beta\vv{t}) &\equiv (\alpha\beta)\vv{t} \\
    \vv{t_1}+\vv{t_2} &\equiv \vv{t_2}+\vv{t_1} &
    (\vv{t_1}+\vv{t_2})+\vv{t_3} &\equiv \vv{t_1}+(\vv{t_2}+\vv{t_3}) \\
    (\alpha+\beta)\vv{t} &\equiv \alpha\vv{t}+\beta\vv{t} &
    \alpha(\vv{t_1}+\vv{t_2}) &\equiv \alpha\vv{t_1}+\alpha\vv{t_2} \\
    \vv{t}(\alpha\vv{s}) &\equiv \alpha(\vv{t}\vv{s}) &
    (\alpha\vv{t})\vv{s} &\equiv \alpha(\vv{t}\vv{s}) \\
    (\vv{t}+\vv{s})\vv{r} &\equiv \vv{t}\vv{r}+\vv{s}\vv{r} &
    \vv{t}(\vv{s}+\vv{r}) &\equiv \vv{t}\vv{s}+\vv{t}\vv{r} &&
  \end{align*}
  \vspace{-2\baselineskip}
  \begin{align*}
    &\LetP{x_1}{X}{x_2}{Y}{(\alpha\vv{t})}{\vv{s}}
    \equiv \alpha(\LetP{x_1}{X}{x_2}{Y}{\vv{t}}{\vv{s}}) \\
    &\LetP{x_1}{X}{x_2}{Y}{\vv{t}+\vv{s}}{\vv{r}}
    \equiv
      (\LetP{x_1}{X}{x_2}{Y}{\vv{t}}{\vv{r}})
      +(\LetP{x_1}{X}{x_2}{Y}{\vv{s}}{\vv{r}}) 
      \\
    &\gencase{\alpha \vv{t}}{\vv{v_1}}{\vv{v_n}}{\vv{s_1}}{\vv{s_n}}
    \equiv \alpha(\gencase{\vv{t}}{\vv{v_1}}{\vv{v_n}}{\vv{s_1}}{\vv{s_n}}) \\
    &\gencase{(\vv{t}+\vv{s})}{\vv{v}}{\vv{w}}{\vv{r_1}}{\vv{r_2}}
    \equiv\begin{aligned}[t]
      &\gencase{\vv{t}}{\vv{v}}{\vv{w}}{\vv{r_1}}{\vv{r_2}}\\
      &+\gencase{\vv{s}}{\vv{v}}{\vv{w}}{\vv{r_1}}{\vv{r_2}}
    \end{aligned}
  \end{align*}
  \caption{Term congruence}
  \label{tab:Congruence}
\end{table}

The rules in \Cref{tab:Congruence} ensure that the
set of value distributions satisfies the axioms of a vector space.  The notion
of basis in this calculus builds on this algebraic foundation. Once the vector
structure is established, we can define which subsets $B\subseteq \ValD$
qualify as \emph{bases}, thereby justifying the decorations appearing in the
syntax of \Cref{tab:Syntax}.

Before proceeding further, let us briefly illustrate the intuition behind the
congruence in \Cref{tab:Congruence}.  
The key idea is that arguments are decomposed over the bases associated with
their corresponding abstractions.  
As in linear algebra, a vector can be rewritten as a linear combination of
basis elements; for instance,
\(
  (1,0)
  = \tfrac{1}{\sqrt{2}}\!\left(\tfrac{(1,1)}{\sqrt{2}}
  + \tfrac{(1,-1)}{\sqrt{2}}\right)
\)
expresses $(1,0)$ in the basis
$\{\tfrac{(1,1)}{\sqrt{2}}, \tfrac{(1,-1)}{\sqrt{2}}\}$.
By identifying $(1,0)$ with $\ket{0}$ and $(0,1)$ with $\ket{1}$, we see that
the calculus allows every vector (or value distribution) to be expressed as a
superposition of elements of the basis attached to each abstraction.
The congruence ensures that these linear combinations behave algebraically as
in a complex vector space, supporting the standard operations of addition,
scalar multiplication, and the zero vector.

To properly characterise the sets that decorate the lambda abstractions, we
must first specify the kind of values they contain.
\begin{definition}[Qubits]\label{def:Qubit}
  A \emph{one-dimensional qubit} is a value distribution of the form
  $\alpha\ket{0} + \beta\ket{1}$, where $|\alpha|^2 + |\beta|^2 = 1$. An
  \emph{$n$-dimensional qubit} is a value distribution of the form
  $\alpha\Pair{\ket{0}}{\vv{w_1}} + \beta\Pair{\ket{1}}{\vv{w_2}}$, where
  $\vv{w_1}$ and $\vv{w_2}$ are $(n-1)$-dimensional qubits, and $\alpha$ and
  $\beta$ satisfy the same normalisation condition.
\end{definition}
We use the usual Dirac shorthand $\ket{xy}$ for $(\ket{x},\ket{y})$, and
extend it to longer tuples by associating to the right.

From now on, we call the value distributions, i.e. the elements of $\ValD$,
\emph{vectors}. The vector space $\ValD$ is equipped with an inner
product defined by,  
\(
  \scal{\vv{v}}{\vv{w}} := \sum_{i=1}^n\sum_{j=1}^m
  \overline{\alpha_i}\,\beta_j\,\delta_{v_i,w_j},
\)
and an $\ell_2$-norm 
\(
  \|\vv{v}\| := \sqrt{\scal{\vv{v}}{\vv{v}}}
  = \sqrt{\sum_{i=1}^n|\alpha_i|^2}
\),
where $\vv{v}=\sum_{i=1}^n\alpha_i v_i$ and
$\vv{w}=\sum_{j=1}^m\beta_j w_j$, and $\delta_{v_i,w_j}$ is the Kronecker
delta, equal to $1$ if $v_i=w_j$ and $0$ otherwise.

With this notion of inner product, we can complete the description of the
calculus syntax. As expected, two values are \emph{orthogonal} when their inner
product is equal to zero. We can now formally define the sets that
decorate abstractions.
\begin{definition}[Basis]\label{def:NthDimensionalBasis}
  A set of value distributions $B$ is an \emph{$n$-dimensional orthonormal
  basis} if it satisfies:
  \begin{enumerate}
    \item Each element of $B$ is an $n$-dimensional qubit \emph{(cf.~\Cref{def:Qubit})}.
    \item Distinct elements of $B$ are pairwise orthogonal.
  \end{enumerate}
\end{definition}

From now on, the syntax introduced in \Cref{tab:Syntax}
is restricted to sets $B$ forming $n$-dimensional
orthonormal bases.

Unlike standard orthonormal bases, we require that their elements be qubits
(rather than variables or abstractions).  
These sets indicate the basis in which a term is expressed: qubits in the
decorating basis are treated call-by-value, while others are decomposed as
$\C$-linear combinations of basis elements, with the function applied linearly
to each component.  If a term cannot be expressed in the decorating basis,
evaluation becomes stuck.

As in classical linear algebra, no non-trivial linear combination of basis
elements yields the null vector; otherwise, some basis vector would violate
pairwise orthogonality. Consequently, each decomposition over a basis is
unique.

\begin{theorem}[Unique decomposition]\label{thm:UniqueDecomposition}
  If $B$ is an $n$-dimensional basis, then every $n$-dimensional qubit has a
  unique decomposition over $B$.
  \qed
\end{theorem}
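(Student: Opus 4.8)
The plan is to show both existence and uniqueness of the decomposition of an arbitrary $n$-dimensional qubit $\vv{q}$ over a given $n$-dimensional basis $B = \{\vv{b}_1,\dots,\vv{b}_k\}$. The key observation is that the inner product $\scal{\cdot}{\cdot}$ defined on $\ValD$ makes it (or rather the relevant finite-dimensional subspace spanned by the finitely many pure values occurring in $\vv{q}$ and the $\vv{b}_i$) into a genuine finite-dimensional complex inner-product space, with the distinct pure values forming an orthonormal family by construction of the Kronecker-delta inner product. So the whole statement reduces to a standard linear-algebra fact once we know $B$ spans the right space and has the right cardinality.

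First I would establish \emph{uniqueness}, which is the easier half and is essentially the remark made just before the theorem: if $\sum_i \alpha_i \vv{b}_i = \sum_i \alpha_i' \vv{b}_i$ then $\sum_i (\alpha_i - \alpha_i')\vv{b}_i \equiv \vv 0$, and pairing both sides with $\vv{b}_j$ and using pairwise orthogonality (\Cref{def:NthDimensionalBasis}(2)) together with $\|\vv{b}_j\| = 1$ (each $\vv{b}_j$ is a qubit, hence unit-norm, by \Cref{def:NthDimensionalBasis}(1) and \Cref{def:Qubit}) forces $\alpha_j = \alpha_j'$ for every $j$. This argument uses only that the $\vv{b}_i$ are orthonormal, so it is robust. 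For \emph{existence}, I would argue by induction on $n$, mirroring the recursive structure of \Cref{def:Qubit}. In the base case $n=1$, the space of pure values reachable is spanned by $\ket 0,\ket 1$, so the ambient space has dimension two; an orthonormal set of $1$-dimensional qubits has at most two elements, and a dimension count (an orthonormal family is linearly independent, and a qubit $\alpha\ket0+\beta\ket1$ with $|\alpha|^2+|\beta|^2=1$ is nonzero) shows $B$ must in fact be a basis of that plane, so every $1$-dimensional qubit lies in its span. For the inductive step, an $n$-dimensional qubit $\vv q = \alpha\Pair{\ket0}{\vv w_1} + \beta\Pair{\ket1}{\vv w_2}$ lives in the tensor-like space $\C^2 \otimes H_{n-1}$ where $H_{n-1}$ is the ambient space for $(n-1)$-dimensional qubits; I would check that an $n$-dimensional orthonormal basis in the sense of the definition has exactly $\dim(\C^2\otimes H_{n-1}) = 2^n$ elements and is linearly independent, hence spans, so again $\vv q$ decomposes.

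The main obstacle I anticipate is the existence part, specifically pinning down \emph{why $B$ has the full cardinality $2^n$} — i.e.\ why an orthonormal family of $n$-dimensional qubits that merely satisfies the two conditions of \Cref{def:NthDimensionalBasis} is automatically maximal. The definition as stated only requires the elements to be orthonormal qubits; it does not a priori say there are $2^n$ of them. I would need to check whether the paper intends "basis" to carry a spanning/maximality condition implicitly (the phrase "As in classical linear algebra, no non-trivial linear combination of basis elements yields the null vector" and the preceding discussion suggest the authors think of these as genuine bases), and if so either invoke that or prove maximality from an additional normalisation constraint. Assuming the intended reading, the cleanest route is: orthonormal $\Rightarrow$ linearly independent; if $B$ were not spanning there would be a unit vector $\vv u$ orthogonal to all of $B$, and one shows $\vv u$ can be taken to be an $n$-dimensional qubit (normalising and adjusting a component — here a small technical lemma that the orthogonal complement of a set of qubits inside $H_n$ still contains a qubit), contradicting maximality of $B$; so $B$ spans $H_n$, and every $n$-dimensional qubit, being a vector of $H_n$, has a decomposition over $B$. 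Combined with the uniqueness argument above, this yields the theorem.
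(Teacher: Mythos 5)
Your uniqueness argument is essentially the paper's own proof: the paper takes the difference of two decompositions, obtains $\sum_i(\alpha_i-\beta_i)\vv{b_i}=0$, and concludes by linear independence, which is exactly what your pairing-with-$\vv{b_j}$ computation establishes (orthonormality $\Rightarrow$ linear independence). That half is correct and matches.

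The gap is in the existence half, and it is exactly where you anticipated trouble: it cannot be closed. \Cref{def:NthDimensionalBasis} requires only that the elements of $B$ be pairwise orthogonal $n$-dimensional qubits; it imposes no cardinality, spanning, or maximality condition, and your proposed completion (``if $B$ were not spanning there would be a unit qubit orthogonal to all of $B$, contradicting maximality'') invokes a maximality hypothesis that is simply not part of the definition. Indeed existence is false as you feared: $B=\{\ket{0}\}$ satisfies both clauses of the definition, yet $\ket{1}$ (or $\ket{+}$) has no decomposition over it. The paper is consistent with this reading elsewhere --- it explicitly allows evaluation to become stuck ``if a term cannot be expressed in the decorating basis,'' and the basis-dependent substitution is declared \emph{undefined} when no decomposition exists --- so the theorem is intended, and proved in the appendix, only as a uniqueness statement (any two decompositions over $B$ coincide), not as an existence claim. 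So you should drop the existence half (or note it holds only for bases that happen to span, e.g.\ those of cardinality $2^n$) rather than try to derive maximality; with that excision your proof coincides with the paper's.
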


\begin{corollary}[Preservation under congruence]\label{cor:EquivalentDecomposition}
  If $\vv{v} \equiv \vv{w}$, then they share the same decomposition over any
  basis $B$.
  \qed
\end{corollary}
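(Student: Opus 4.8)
The plan is to reduce the corollary to \Cref{thm:UniqueDecomposition} together with the fact that $\equiv$ is an equivalence relation. First I would spell out the notion of decomposition purely in terms of $\equiv$: for an $n$-dimensional qubit $\vv u$ and an $n$-dimensional basis $B$, its decomposition over $B$ is the finitely supported family $(\lambda_b)_{b\in B}$ of complex numbers with $\vv u \equiv \sum_{b\in B}\lambda_b\,b$, which exists and is unique by \Cref{thm:UniqueDecomposition}. Phrased this way, ``$\vv u$ has decomposition $(\lambda_b)_{b\in B}$ over $B$'' is simply the relation $\vv u \equiv \sum_{b\in B}\lambda_b\,b$, and is therefore stable under replacing $\vv u$ by any congruent vector.

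The argument itself is then short. The congruence of \Cref{tab:Congruence} is, by construction, reflexive, symmetric and transitive, so $\vv v \equiv \vv w$ gives $\vv w \equiv \vv v$. Letting $(\lambda_b)_{b\in B}$ be the decomposition of $\vv v$ over $B$, we have $\vv v \equiv \sum_{b\in B}\lambda_b\,b$, whence by transitivity $\vv w \equiv \sum_{b\in B}\lambda_b\,b$; so $(\lambda_b)_{b\in B}$ is a decomposition of $\vv w$ over $B$. Applying the uniqueness part of \Cref{thm:UniqueDecomposition} to $\vv w$ forces this to be \emph{the} decomposition of $\vv w$, so $\vv v$ and $\vv w$ share it, and since $B$ is arbitrary the corollary follows.

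The only point needing care is the implicit hypothesis that $\vv w$, like $\vv v$, is an $n$-dimensional qubit, so that \Cref{thm:UniqueDecomposition} may be invoked for it. I would dispatch this by checking that being an $n$-dimensional qubit is preserved by each congruence step of \Cref{tab:Congruence}: the normalisation condition survives because those steps are exactly the vector-space axioms, under which the $\ell_2$-norm is invariant, and the nested $\ket 0/\ket 1$ branching shape of \Cref{def:Qubit} is matched by a routine induction on the generating steps. This invariance is the only mild obstacle; the heart of the proof is the two-line transitivity computation above. In fact the corollary is best read as the assertion that ``decomposition over $B$'' descends to $\equiv$-classes, which \Cref{thm:UniqueDecomposition} renders automatic once the invariance is in place.
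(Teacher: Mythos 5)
Your proof is correct and takes essentially the same route as the paper: both reduce the corollary to \Cref{thm:UniqueDecomposition}, the paper by observing $\vv{v}-\vv{w}\equiv\vv{0}$ and re-running the linear-independence argument, you by using transitivity of $\equiv$ to transfer the decomposition of $\vv{v}$ to $\vv{w}$ and then citing uniqueness — a purely cosmetic difference. Your side concern about qubit-hood being preserved under $\equiv$ is harmless but not really needed, since decompositions (and the qubit condition itself) are already read modulo the congruence, exactly as in the paper's one-line proof.
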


We denote the computational basis by $\B=\{\ket{0},\ket{1}\}$ and the diagonal
basis by $\XB=\{\ket{+},\ket{-}\}$, where
$\ket{\pm}=\tfrac{1}{\sqrt{2}}(\ket{0}\pm\ket{1})$.  

As usual, the expression $\vv t[\vv v/x]$ denotes the
usual capture-avoiding substitution of $\vv v$ for $x$ in $\vv t$.
However, beta-reduction depends on the basis chosen for the abstraction, so we must
define a substitution that takes this mechanism into account. Intuitively, this
operation substitutes variables with vectors expressed in the chosen basis; the
accompanying coefficients are those of the value distribution being
substituted.

Alongside this substitution, we introduce a special basis, denoted
$\AbsBasis$, which acts as the canonical basis for $\lambda$-abstractions. In
this way, we restrict function distributions to a single admissible basis.

\begin{definition}[Basis-dependent substitution]
  Let $\vv t$ be a term distribution, $\vv v$ a value distribution, $x$ a
  variable, and $B$ an orthonormal basis. We define the substitution
  $\vv t\ansubst{\vv v/x}{B}$ as follows:
  \[
    \vv t\ansubst{\vv v/x}{B} \;=\;
    \begin{cases}
      \ds\sum_{i\in I}\alpha_i\,\vv t\,[\vv{b_i}/x] &
        \text{if } B=\{\vv{b_i}\}_{i\in I}\ \text{and}\
        \vv v \equiv \ds\sum_{i\in I}\alpha_i\,\vv{b_i},\\[6pt]
      \ds\sum_{i\in I}\alpha_i\,\vv t\,[v_i/x] &
        \text{if } B=\AbsBasis\ \text{and}\
        \vv v = \ds\sum_{i\in I}\alpha_i\,v_i,\\[2pt]
      \text{undefined} & \text{otherwise.}
    \end{cases}
  \]
\end{definition}

The two cases in the definition capture distinct substitution modes.  
In the first, substitution proceeds linearly using the decomposition of
$\vv v$ over the explicit basis $B$.  
In the second, when $B=\AbsBasis$, substitution proceeds linearly over the pure
values that form $\vv v$.  
This case recovers the substitution mechanism of 
non base-sensitive calculi---as first defined in
\cite{ArrighiDowekLMCS17}---but generalises it by introducing
basis-dependent behaviour.  
This special case is also the only one applicable to
$\lambda$-abstractions, since abstractions do not belong to orthonormal bases.

  The definition also extends to pairs of values. If
  $\vv v=\sum_{i\in I}\alpha_i\,\Pair{\vv{v_i}}{\vv{w_i}}$,
  with $\vv{v_i}\in\Span(B_1)$ and $\vv{w_i}\in\Span(B_2)$,
  then
  \[
    \vv t\ansubst{\vv v/x\otimes y}{B_1\otimes B_2}
      \;=\; \sum_{i\in I}\alpha_i\,
      \bigl(\vv t\ansubst{\vv{v_i}/x}{B_1}\ansubst{\vv{w_i}/y}{B_2}\bigr),
  \]
  where $B_1$ and $B_2$ are (orthonormal) bases---or $\AbsBasis$---associated
  with $x$ and $y$, respectively; the symbol $\otimes$ in $B_1\otimes B_2$ is
  purely notational.

\begin{example}
  Let
  $\vv v = \alpha\ket{01}
           + \beta\ket{10}$.
  Then 
  \[
    (y,x)\ansubst{\vv v/x\otimes y}{\B\otimes\B}
    = \alpha\,(y,x)[\ket{0}/x][\ket{1}/y]
    + \beta\,(y,x)[\ket{1}/x][\ket{0}/y] 
    = 
    \alpha\ket{10}
    + \beta\ket{01}
  \]
\end{example}

With this substitution in place, we can establish certain needed properties.
First, basis-dependent substitution distributes over linear combinations.

\begin{lemma}[Distributivity over linear combinations]\label{lem:distributiveSubstitution}
  For term distributions $\vv{t_i}$, a value distribution $\vv{v}$, a
  variable $x$, coefficients $\alpha_i\in\C$, and a basis $B$ such that
  $\ansubst{\vv v/x}{B}$ is defined:
  \(
    \Bigl(\sum_i \alpha_i\vv{t_i}\Bigr)\ansubst{\vv v/x}{B}
    \equiv
    \sum_i \alpha_i\vv{t_i} \ansubst{\vv v/x}{B}
  \).
  \qed
\end{lemma}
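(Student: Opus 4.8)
The plan is to prove the statement by unfolding the definition of basis-dependent substitution and reducing everything to the distributivity of ordinary capture-avoiding substitution over linear combinations, together with the congruence axioms from \Cref{tab:Congruence}. The key observation is that the decomposition of $\vv v$ over $B$ (or the list of pure-value summands of $\vv v$ when $B = \AbsBasis$) does not depend on which term we are substituting into; so the same coefficients $\alpha_i$ and the same basis elements $\vv{b_i}$ (resp. pure values $v_i$) are used regardless of whether we substitute into $\sum_i \alpha_i \vv{t_i}$ as a whole or into each $\vv{t_i}$ separately.

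First I would fix the form of $\vv v$. Since $\ansubst{\vv v/x}{B}$ is assumed defined, either $B = \{\vv{b_j}\}_{j\in J}$ and $\vv v \equiv \sum_{j\in J}\gamma_j\,\vv{b_j}$ (uniquely, by \Cref{thm:UniqueDecomposition} and \Cref{cor:EquivalentDecomposition}), or $B = \AbsBasis$ and $\vv v = \sum_{j\in J}\gamma_j\,v_j$ is a sum of pure values. I would treat the first case in detail; the second is identical \emph{mutatis mutandis}, replacing $\vv{b_j}$ by $v_j$ throughout. Applying the definition to the left-hand side gives
\[
  \Bigl(\sum_i \alpha_i\vv{t_i}\Bigr)\ansubst{\vv v/x}{B}
  = \sum_{j\in J}\gamma_j\,\Bigl(\sum_i \alpha_i\vv{t_i}\Bigr)[\vv{b_j}/x].
\]
Now I would invoke the standard fact that ordinary capture-avoiding substitution $[\vv{b_j}/x]$ is a linear map on term distributions modulo $\equiv$ --- this is immediate from the linearity clauses of \Cref{tab:Congruence} (it commutes with $+$, with scalar multiplication, and sends $\vv 0$ to $\vv 0$), so $\bigl(\sum_i \alpha_i\vv{t_i}\bigr)[\vv{b_j}/x] \equiv \sum_i \alpha_i\,\vv{t_i}[\vv{b_j}/x]$. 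Substituting this in and then using commutativity, associativity, and distributivity of scalar multiplication over $+$ (again \Cref{tab:Congruence}) to interchange the two finite sums yields
\[
  \sum_{j\in J}\gamma_j\sum_i \alpha_i\,\vv{t_i}[\vv{b_j}/x]
  \equiv \sum_i \alpha_i \sum_{j\in J}\gamma_j\,\vv{t_i}[\vv{b_j}/x]
  = \sum_i \alpha_i\,\vv{t_i}\ansubst{\vv v/x}{B},
\]
where the last equality is just the definition of $\vv{t_i}\ansubst{\vv v/x}{B}$ applied to each summand (each is defined, with the same decomposition of $\vv v$).

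The step I expect to be the only real point requiring care is the claim that ordinary substitution is linear modulo $\equiv$. This is ``routine'' but deserves an explicit pointer: it follows by a straightforward induction on the structure of term distributions, with the base cases being pure terms (where $[\vv{b_j}/x]$ is genuine capture-avoiding substitution) and the inductive cases being exactly the vector-space constructors $+$, $\alpha(-)$, $\vv 0$, for which the congruence axioms of \Cref{tab:Congruence} give the result directly. One should also note that the reindexing/interchange of the double sum is legitimate because both index sets $I$ and $J$ are finite, so no convergence issues arise. With these remarks in place the chain of congruences above is complete, establishing the lemma.
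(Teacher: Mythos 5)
Your proposal is correct and follows essentially the same route as the paper's proof: decompose $\vv v$ over $B$ (or over its pure-value summands when $B=\AbsBasis$), apply the definition of $\ansubst{\vv v/x}{B}$, push the ordinary substitution through the linear combination, and interchange the two finite sums via the congruence of \Cref{tab:Congruence}. The only difference is that you make explicit (by structural induction) the linearity modulo $\equiv$ of ordinary substitution, a step the paper performs silently.
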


The next result states that substitution behaves consistently within each
equivalence class induced by the congruence $\equiv$.

\begin{lemma}[Compatibility with congruence]\label{lem:EquivSubstitutions}
  For value distributions $\vv{v},\vv{w}$, a term distribution $\vv{t}$, and
  an orthonormal basis $B$ such that both
  $\ansubst{\vv{v}/x}{B}$ and $\ansubst{\vv{w}/x}{B}$ are defined:
  if $\vv{v}\equiv\vv{w}$, then
  $\vv{t}\ansubst{\vv{v}/x}{B}
  =\vv{t}\ansubst{\vv{w}/x}{B}$.
  \qed
\end{lemma}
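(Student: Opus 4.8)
The plan is to prove \Cref{lem:EquivSubstitutions} by a case analysis on the basis $B$, handling the two defined cases of basis-dependent substitution separately, and in each case reducing the claim to the uniqueness of decomposition established earlier.

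First I would treat the case $B=\{\vv{b_i}\}_{i\in I}$ an orthonormal basis. By hypothesis both $\ansubst{\vv v/x}{B}$ and $\ansubst{\vv w/x}{B}$ are defined, so there are decompositions $\vv v\equiv\sum_{i\in I}\alpha_i\vv{b_i}$ and $\vv w\equiv\sum_{i\in I}\beta_i\vv{b_i}$. By transitivity of $\equiv$ and the assumption $\vv v\equiv\vv w$, we get $\sum_{i\in I}\alpha_i\vv{b_i}\equiv\sum_{i\in I}\beta_i\vv{b_i}$. Now I invoke \Cref{cor:EquivalentDecomposition} (Preservation under congruence): congruent vectors share the same decomposition over $B$, hence $\alpha_i=\beta_i$ for all $i\in I$. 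Therefore the defining sums coincide literally, $\sum_{i\in I}\alpha_i\,\vv t\,[\vv{b_i}/x]=\sum_{i\in I}\beta_i\,\vv t\,[\vv{b_i}/x]$, which is exactly $\vv t\ansubst{\vv v/x}{B}=\vv t\ansubst{\vv w/x}{B}$. Note this gives literal equality, not merely congruence, because once the coefficients are pinned down the two expressions are syntactically identical.

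Next I would handle the case $B=\AbsBasis$. Here the substitution uses the \emph{pure-value} decomposition $\vv v=\sum_{i\in I}\alpha_i v_i$, which is \emph{not} canonical: the same element of $\ValD$ admits many such formal presentations, and $\equiv$ is precisely what identifies them. So uniqueness of decomposition is not available directly. Instead I would argue by induction on the derivation of $\vv v\equiv\vv w$ in \Cref{tab:Congruence} (equivalently, structural induction on a proof that reduces both sides to a common canonical form, e.g.\ a normal form where equal pure values are collected and scalars summed). For each congruence axiom I check that applying $\vv t\sqsubst{-/x}$ componentwise and re-summing yields congruent results: the vector-space axioms (commutativity, associativity, distributivity of scalars, unit and zero laws) on the left translate directly into the corresponding instances of the same axioms on the right, because $\vv t\sqsubst{(\alpha v)/x}$ and $\alpha\,\vv t\sqsubst{v/x}$ agree and $\vv t\sqsubst{(v+w)/x}$ is handled via \Cref{lem:distributiveSubstitution}; the congruence-closure and transitivity/symmetry/reflexivity steps are immediate. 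In this branch the conclusion is only equality \emph{up to $\equiv$} in general, but since the lemma as stated asserts $=$, I would either (i) observe that when both decompositions are already in the canonical collected form the two sums are literally equal by the same uniqueness-of-multiset argument, or (ii) read the statement's ``$=$'' as the intended identification and note the congruence form suffices for all later uses; I will state explicitly which reading is taken.

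The main obstacle is precisely this $\AbsBasis$ case: for an explicit orthonormal basis the result is a one-line consequence of \Cref{cor:EquivalentDecomposition}, but for $\AbsBasis$ there is no canonical decomposition into pure values, so one must descend to the congruence rules themselves. The delicate points there are (a) making sure the componentwise action of $\vv t\sqsubst{-/x}$ respects the ``collecting'' of repeated pure values — i.e.\ that $\vv t\sqsubst{v/x}+\vv t\sqsubst{v/x}\equiv 2\,\vv t\sqsubst{v/x}$, which is just an instance of a congruence axiom applied to the substituted term — and (b) being careful that $\vv t$ may itself be a distribution, so that each rewriting step on $\vv v$ produces congruent, not identical, term distributions, which is exactly what \Cref{lem:distributiveSubstitution} is there to absorb. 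Once these bookkeeping facts are in place, the induction closes routinely.
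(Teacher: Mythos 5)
Your first case is exactly the paper's proof: both substitutions being defined plus $\vv v\equiv\vv w$ let you invoke \Cref{cor:EquivalentDecomposition} to identify the coefficients over $B$, after which the two defining sums are syntactically identical. The entire $\AbsBasis$ branch is unnecessary—the lemma explicitly restricts $B$ to be an \emph{orthonormal} basis, and $\AbsBasis$ is not one (its "elements" are arbitrary pure values, including abstractions)—so the difficulty you flag there, namely that only equality up to $\equiv$ could be obtained, is precisely why the statement excludes that case rather than a gap you need to close.
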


Remark that the property in \Cref{lem:EquivSubstitutions} does not extend across
  different bases; that is,
  $\vv{t}\ansubst{\vv{v}/x}{A}\not\equiv\vv{t}\ansubst{\vv{v}/x}{B}$.
  For example,
  \[
    (\Lam{x}{C}{y})\ansubst{\ket{+}/y}{\XB}
    = \Lam{x}{C}{\ket{+}} 
    \not\equiv
    \tfrac{1}{\sqrt{2}}\big((\Lam{x}{C}{\ket{0}})
    + (\Lam{x}{C}{\ket{1}})\big)
    = (\Lam{x}{C}{y})\ansubst{\ket{+}/y}{\B}.
  \]
  This difference arises because the relation $\equiv$ does not commute with
  lambda abstraction, nor with the case construct. Although the two terms are
  operationally equivalent, the calculus distinguishes between the
  superposition of results,
  $\Lam{x}{B}{\alpha\vv{v_1} + \beta\vv{v_2}}$,
  and the superposition of functions,
  $\alpha(\Lam{x}{B}{\vv{v_1}}) + \beta(\Lam{x}{B}{\vv{v_2}})$.
  This distinction reflects a physical intuition: the former corresponds to a
  single experiment producing a superposition of outcomes, while the latter
  represents a superposition of distinct experiments.

Finally, we introduce a convenient notation for generalised substitutions over
a term by closed values. A substitution $\sigma$ can be seen as a finite set of
individual substitutions applied consecutively to a term. Formally, for a term
$\vv{t}$, closed value distributions $\vv{v_1},\dots,\vv{v_n}$, variables
$x_1,\dots,x_n$, and bases $B_1,\dots,B_n$:
\[
  \vv{t}\ansubst{\sigma}{}
  := \vv{t}\ansubst{\vv{v_1}/x_1}{B_1}\dotsb\ansubst{\vv{v_n}/x_n}{B_n}.
\]
Since each $\vv{v_i}$ is closed, the order of substitutions is irrelevant. We
regard $\sigma$ as a partial function from variables to pairs of closed value
distributions and bases, and write $\dom{\sigma}$ for its domain. The operation
extends naturally: for a term $\vv{t}$, substitution $\sigma$, a new variable
$x\notin\dom{\sigma}$, value distribution $\vv{v}$, and basis $B$,
\(
  \vv{t}\ansubst{\sigma}{}\ansubst{\vv{v}/x}{B}
  = \vv{t}\ansubst{\sigma'}{}
\),
where $\sigma'$ extends $\sigma$ by mapping $x$ to $(\vv v,B)$. Likewise, two
disjoint substitutions $\sigma_1$ and $\sigma_2$ can be merged:
\(
  \vv{t}\ansubst{\sigma_1}{}\ansubst{\sigma_2}{}
  = \vv{t}\ansubst{\sigma'}{}
\),
where $\dom{\sigma_1}\cap\dom{\sigma_2}=\emptyset$ and $\sigma'$ coincides with
$\sigma_i$ on $\dom{\sigma_i}$ for $i=1,2$.

\section{Operational semantics}\label{sec:reduction}

The reduction system interprets every vector relative to the basis attached to
its abstraction, allowing a step only when the argument can be decomposed on
that basis.  
\Cref{tab:Reduction} defines the elementary relation~$\lraneq$, while terms are
considered modulo the congruence of \Cref{tab:Congruence}.  
Hence the effective reduction, written~$\lra$, is defined modulo~$\equiv$:
a step $\vv t\lra\vv r$ abbreviates $\vv t\equiv\vv t'\lraneq\vv r'\equiv\vv r$.

\begin{table}[t]
  \begin{align*}
    \text{If }\vv{t}\ansubst{\vv v/x}{X}\text{ is defined,}\quad
    (\Lam{x}{X}{\vv{t}})\vv{v}
    &\lraneq \vv{t}\ansubst{\vv v/x}{X}\\
    \text{If }\vv{t}\ansubst{\vv v/x}{X\otimes Y}\text{ is defined,}\quad
    \LetP{x}{X}{y}{Y}{\vv v}{\vv{t}}
    &\lraneq \vv{t}\ansubst{\vv{v}/x\otimes y}{X\otimes Y}\\
    \gencase{\sum_{i=1}^{n}\alpha_i \vv{v_i}}{\vv{v_1}}{\vv{v_n}}{\vv{t_1}}{\vv{t_n}}
    &\lraneq \sum_{i=1}^{n}\alpha_i \vv{t_i}
  \end{align*}
  \[
    \begin{array}{c}
      \infer{st\lraneq s\vv r}{t\lraneq \vv r}
      \qquad\qquad
      \infer{tv\lraneq \vv rv}{t\lraneq\vv r}
      \qquad\qquad
      \infer{\alpha t\lraneq \alpha \vv r}{t\lraneq\vv r & \alpha\neq 0}
      \qquad\qquad
      \infer{t+\vv s\lraneq\vv r+\vv s}{t\lraneq\vv r}
      \\[5pt]
      \infer{\LetP{x}{A}{y}{B}{t}{\vv{s}}\lraneq
      \LetP{x}{A}{y}{B}{\vv r}{\vv{s}}}{t\lraneq \vv r} 
      \\[5pt]
      \infer{\gencase{\vv t}{\vv v}{\vv w}{\vv{s_1}}{\vv{s_n}}\lraneq
      \gencase{\vv r}{\vv v}{\vv w}{\vv{s_1}}{\vv{s_n}}}{t\lraneq \vv r}
    \end{array}
  \]
  \caption{Reduction system}
  \label{tab:Reduction}
\end{table}

The side condition~$\alpha\neq0$ avoids vacuous steps such as
$0t\lraneq0r$, preventing spurious nondeterminism and preserving the
determinism of~$\lraneq$.

The main reduction rules are $\beta$-reduction, $\mathsf{let}$ binding, and
$\mathsf{case}$ pattern matching.  
Both $\lambda$ and $\mathsf{let}$ bind variables decorated with an orthonormal
basis, indicating which vectors are treated as classical data.  Linear
combinations of these vectors are handled as quantum data, reducing linearly by
the congruence rules of \Cref{tab:Congruence}, so that
$t(\alpha\vv s+\beta\vv r)$ is equivalent to
$\alpha\,t\vv s+\beta\,t\vv r$.

The only exception occurs for higher-order terms: since no orthogonal bases are
defined for abstractions, we introduce a special basis~$\AbsBasis$ acting as
the canonical one for higher-order values—intuitively, the set of all pure
values.  For instance,
\[
  (\Lam{x}{\AbsBasis}{\vv t})
  \sum_{i=1}^{n}\alpha_i(\Lam{y}{X}{\vv{s_i}})
  \lra
  \sum_{i=1}^n\alpha_i\vv t[\Lam{y}{X}{\vv{s_i}}/x].
\]

The $\mathsf{case}$ pattern matching controls program flow.  
Each operator keeps track of a set of orthogonal values and tests whether the
argument equals each vector, selecting the matching branch.  
If the argument is a linear combination of several vectors, the result is the
corresponding linear combination of the branches.  For example:
\[
  \case{\ket{-}}{\ket{0}}{\ket{1}}{\vv{t_1}}{\vv{t_2}} \lra
  \tfrac{1}{\sqrt{2}}\,\vv{t_1} - \tfrac{1}{\sqrt{2}}\,\vv{t_2}.
\]

The advantage over a conditional branching is the ability to match against
several vectors simultaneously.  For boolean tuples this makes no difference,
as each component can be treated independently.  However, some orthogonal bases
cannot be expressed as products of smaller ones.  This general
$\mathsf{case}$ allows us to match directly against those vectors.  For
example, using the \emph{Bell basis}\footnote{ The four Bell states are
$\Phi^{\pm}=\sqrthalf(\ket{00}\pm\ket{11})$ and
$\Psi^{\pm}=\sqrthalf(\ket{01}\pm\ket{10})$.
}:
\[
  \mathsf{case}\;\vv{v}\;\mathsf{of}\;\{
  \Phi^+ \mapsto \vv{t_1}\ \mid
  \Phi^- \mapsto \vv{t_2}\ \mid
  \Psi^+ \mapsto \vv{t_3}\ \mid
  \Psi^- \mapsto \vv{t_4}\ \}.
\]

This Bell basis is central in quantum communication. In
\Cref{sec:teleportation}, we revisit the quantum teleportation protocol,
which relies heavily on these states.

Defining the system in this way yields a strategy within the
\emph{call-by-value} family, namely a generalisation of the
\emph{call-by-basis} strategy introduced in~\cite{ArrighiDowekLMCS17} and
further analysed in~\cite{AssafDiazcaroPerdrixTassonValironLMCS14}. Whereas
call-by-basis fixes a single computational basis for evaluation, our variant,
which we call \emph{call-by-arbitrary-basis}, allows each abstraction to attach
its own orthonormal basis to its argument. Evaluation remains weak: no
reduction takes place under $\lambda$, pairs, $\mathsf{let}$, or
$\mathsf{case}$ constructors.

The congruence relation on terms gives rise to different redexes. We write
$\eval$ for the reflexive–transitive closure of $\lra$. We can show that the
equivalence relation $\equiv$ commutes with $\eval$; in other words,
equivalence is preserved by reduction modulo~$\equiv$.

\begin{theorem}[Reduction preserves equivalence]\label{thm:confluence}
  Let $\vv{t}$ and $\vv{s}$ be closed term distributions with
  $\vv{t}\equiv\vv{s}$. If $\vv{t}\lraneq\vv{t'}$ and $\vv{s}\lraneq\vv{s'}$,
  then there exist term distributions $\vv{r_1}$ and $\vv{r_2}$ such that
  $\vv{t'}\eval\vv{r_1}$, $\vv{s'}\eval\vv{r_2}$, and
  $\vv{r_1}\equiv\vv{r_2}$.
  Diagrammatically:
  \[
    \begin{tikzcd}[row sep=1pt,baseline=(current bounding box.south)]
      & \vv{t}
        \arrow[ld,decorate,decoration={snake, amplitude=0.8, segment length=6pt}, ->]
        &[-3em] \equiv
        &[-3em] \vv{s}
        \arrow[rd,decorate,decoration={snake, amplitude=0.8, segment length=6pt}, ->]
        &\\
      \vv{t'}\arrow[dr,"*",pos=0.9] & & & &
      \vv{s'}\arrow[ld,"*"',pos=0.9] \\
      & \vv{r_1} & \equiv & \vv{r_2} & 
    \end{tikzcd}
    \tag*{\smash{\raisebox{.6\baselineskip}{\qed}}}
  \]
\end{theorem}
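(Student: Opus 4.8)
The statement is precisely local confluence of the reduction relation $\lra$ on $\equiv$-equivalence classes: from $\vv t\lraneq\vv t'$ we get $\vv t\lra\vv t'$, and since $\vv s\equiv\vv t$ with $\vv s\lraneq\vv s'$ we also get $\vv t\lra\vv s'$, so $\vv t'$ and $\vv s'$ are one-step $\lra$-reducts of the same class (recall that $\lra$ is defined modulo $\equiv$, hence $\eval$ absorbs $\equiv$ on both ends). The plan is to reduce $\equiv$ to its generators: write $\equiv_1$ for a rewrite by a single axiom of \Cref{tab:Congruence}, in either orientation, inside an arbitrary one-hole context, so that $\equiv=\equiv_1^{*}$, and argue by induction on the length of a chain $\vv t=\vv{a_0}\equiv_1\dots\equiv_1\vv{a_n}=\vv s$. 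For $n=0$ there is nothing to do, since $\lraneq$ is deterministic and thus $\vv t'=\vv s'$. The crux is $n=1$ — a single elementary congruence step between the two terms — which becomes a finite critical-pair analysis; the inductive step then carries a joining term one elementary step at a time along the chain, absorbing any mismatch in the number of reduction steps into $\eval$. Two auxiliary facts are used repeatedly: ordinary capture-avoiding substitution is compatible with $\equiv$, and, for the value argument of a redex, \Cref{lem:distributiveSubstitution} and \Cref{lem:EquivSubstitutions} govern the basis-dependent substitution under $\equiv$.

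For $n=1$ I split on the position of the rewritten subterm relative to the (unique) contracted $\lraneq$-redex of $\vv t$. When the two are disjoint, or the rewrite sits in a \emph{passive} part of the redex — the body of a $\lambda$ or $\mathsf{let}$, a $\mathsf{case}$ branch, the argument value of a $\beta$/$\mathsf{let}$ redex, or the scrutinee of a $\mathsf{case}$ — then $\vv s$ still carries the same redex: the required decomposition survives by transitivity of $\equiv$, and for $\mathsf{case}$ the scrutinee keeps the same coefficients over the pattern set by uniqueness of orthogonal decompositions (\Cref{thm:UniqueDecomposition}, \Cref{cor:EquivalentDecomposition}). Contracting both sides then gives results that are either literally equal — when the rewrite is inside an argument value, by \Cref{lem:EquivSubstitutions} — or $\equiv$-related, because $\equiv$ is a congruence that commutes with substitution; so these cases close at once with $\vv{r_1}=\vv t'$ and $\vv{r_2}=\vv s'$. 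The same holds when the congruence axiom applied to $\vv t$ lies strictly above the redex but only permutes or re-associates the surrounding context, the redex itself being untouched.

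The genuine critical pairs are those in which the congruence \emph{restructures the redex}, which by inspection can occur only through the axioms that push a scalar or a sum into, or pull it out of, the argument of an application, the bound expression of a $\mathsf{let}$, or the scrutinee of a $\mathsf{case}$ — for instance $(\Lam{x}{X}{\vv u})(\alpha\vv w)\equiv_1\alpha((\Lam{x}{X}{\vv u})\vv w)$ and $(\Lam{x}{X}{\vv u})(\vv{w_1}+\vv{w_2})\equiv_1(\Lam{x}{X}{\vv u})\vv{w_1}+(\Lam{x}{X}{\vv u})\vv{w_2}$, together with their $\mathsf{let}$ and $\mathsf{case}$ analogues. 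In each such shape I contract on both sides — on the side where the scalar or sum was distributed this generally takes two $\lraneq$ steps, which is exactly why $\eval$ rather than a single $\lra$ appears in the conclusion — and verify convergence: $\vv u\ansubst{\alpha\vv w/x}{X}$ and $\alpha(\vv u\ansubst{\vv w/x}{X})$ are $\equiv$ because scaling commutes with the decomposition of $\vv w$ over $X$, and $\vv u\ansubst{\vv{w_1}+\vv{w_2}/x}{X}$ and $\vv u\ansubst{\vv{w_1}/x}{X}+\vv u\ansubst{\vv{w_2}/x}{X}$ coincide up to $\equiv$ because that decomposition is additive by uniqueness (\Cref{thm:UniqueDecomposition}), all via the scalar and additive axioms of \Cref{tab:Congruence}; the $\mathsf{let}$ case is identical with $\ansubst{\,\cdot\,/x\otimes y}{X\otimes Y}$, and the $\mathsf{case}$ case additionally uses that a congruent scrutinee selects the same branches with the same coefficients. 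Degenerate instances — a vanishing scalar, or a term that after restructuring no longer matches any left-hand side of a context rule of \Cref{tab:Reduction} — leave one side without a \emph{direct} $\lraneq$ step, so the hypothesis of the theorem is not met there and nothing is required; the side condition $\alpha\neq0$ and the syntactic shapes of the context rules make these configurations easy to recognise and discard. I expect this last block to be the main obstacle — not because any individual case is deep, but because for every restructuring one must simultaneously track the \emph{legality} of each contraction (that the required decomposition exists) and the fact that the two contracta land in the same $\equiv$-class; the $\mathsf{case}$/scrutinee interaction, where a congruence rewrite of the scrutinee must not alter which branches fire nor with which weights, is the most delicate point.
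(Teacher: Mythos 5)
Your core analysis --- the $n=1$ case --- is essentially the paper's proof: the paper proceeds by exactly this case distinction over the axioms of \Cref{tab:Congruence}, dispatching reductions that do not touch the restructured part via a weak diamond property of $\lraneq$ (\Cref{lem:SquigDiamond}), and resolving the genuine overlaps (a scalar or a sum pushed into the argument of an application, a $\mathsf{let}$, or a $\mathsf{case}$ scrutinee) by computing both contracta through the basis-dependent substitution and checking congruence, including the degenerate $\alpha=0$ configurations where one side is simply stuck. On those critical pairs you and the paper agree, down to the same auxiliary facts (\Cref{thm:UniqueDecomposition}, \Cref{lem:distributiveSubstitution}, \Cref{lem:EquivSubstitutions}).

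The place where your plan goes beyond the paper is also where it has a genuine gap: the induction on the length of a chain $\vv{t}=\vv{a_0}\equiv_1\dotsb\equiv_1\vv{a_n}=\vv{s}$. First, to apply the one-step statement pairwise you need each intermediate $\vv{a_i}$ to perform a $\lraneq$-step, and nothing guarantees this: the congruence can route the chain through terms such as $0\vv{u}$, $\alpha(\beta t)$, or sums whose left summand is not a pure term, which the context rules of \Cref{tab:Reduction} leave irreducible even when both endpoints reduce. Second, and more fundamentally, even when $\vv{a_i}\lraneq\vv{a_i'}$ exists, the two instances of the inductive hypothesis produce two different $\eval$-reducts of $\vv{a_i'}$ together with $\equiv$-relations to reducts of $\vv{t'}$ and $\vv{s'}$; to merge them you must extend reductions on both sides of an $\equiv$, i.e.\ you need a \emph{multi-step} commutation of $\eval$ with $\equiv$ --- a statement strictly stronger than the one-step theorem you are inducting on. ``Absorbing the mismatch into $\eval$'' does not supply it; you would have to strengthen the induction (for instance, prove directly the many-step commutation, allowing the reduction on one side to be empty, by induction on a suitable combined measure). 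The paper avoids this entirely by proving the statement only at the level of the generating axioms --- exactly your $n=1$ case --- and leaving closure under contexts and transitivity implicit, so your proposal is not weaker than the published argument on the critical pairs, but its announced route to the full statement does not go through as described.
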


\begin{remark}\label{rmk:determinism}
  Since the reduction relation $\eval$ is defined modulo~$\equiv$, the result
  above is equivalent to stating that there exists a single distribution
  $\vv{r}$ such that $\vv{t'}\eval\vv{r}$ and $\vv{s'}\eval\vv{r}$.
  Moreover, as the elementary reduction~$\lraneq$ is deterministic, the
  reduction relation~$\lra$ is also deterministic; that is, if
  $\vv{t}\lra\vv{r_1}$ and $\vv{t}\lra\vv{r_2}$, then
  $\vv{r_1}\equiv\vv{r_2}$.
\end{remark}

\section{Realizability model}\label{sec:model}

\subsection{Unitary type semantics}
Given the deterministic machine presented in the previous section
(see~\Cref{rmk:determinism}), the next
step towards extracting a typing system is to define the sets of values that
characterise its types. To achieve this, we first need to identify the notion
of a type.

Our aim is to define types exclusively inhabited by values of norm~1. The
vectors we wish to study all belong to the \emph{unit sphere}. We write $\Sph$
for the set $\Sph := \{\vv v \in \vv{\Val} \mid \|\vv v\| = 1\}$, which
corresponds to the mathematical representation of quantum data as unit vectors
in a Hilbert space.

\begin{definition}[Unitary value distribution]
  A value distribution $\vv{v}$ is said to be \emph{unitary} if its norm equals~1,
  that is, if $\vv{v}\in\Sph$.
\end{definition}

\begin{definition}[Unitary type]
  A \emph{unitary type} (or simply a \emph{type}) is a notation~$A$ together
  with a set of unitary value distributions, denoted~$\sem{A}$, called the
  \emph{unitary semantics} of~$A$.
\end{definition}

We now turn to type realizers.  
Since the global phase of a quantum state has no physical significance, terms
that differ only by a phase should share the same type.  
Thus, we assign identical types to $\vv t$ and $e^{i\theta}\vv t$, a principle
that guides the definition of realizers.

\begin{definition}[Type realizer]\label{def:Realizer}
  Given a type~$A$ and a term distribution~$\vv t$, we say that~$\vv t$
  \emph{realizes}~$A$ (written~$\vv t \real A$) if there exists a value
  distribution~$\vv v\in\sem A$ such that
  $\vv{t}\eval e^{i\theta}\vv{v}$ for some $\theta\in\R$,
\end{definition}

\begin{table}[t]
  \[
    A := \basis{X} \mid A\Arr A \mid A\times A \mid \sharp A,
    \qquad\text{where $X$ is any orthonormal basis.}
  \]
  \begin{align*}
    \sem{\basis{X}}&:= X\\
    \sem{A\times B}&:= \bigl\{\,(\vv v, \vv w)\;\bigm|\; \vv v\in\sem{A},~\vv w\in\sem{B}\,\bigr\}\\
    \sem{A\Arr B}&:=
    \bigl\{\,\sum_{i=1}^{n}\alpha_i(\Lam{x}{X}{\vv{t_i}})\in\Sph
      \;\bigm|\;
      \forall\vv{w}\in\sem{A},\,
      (\sum_{i=1}^{n}\alpha_i \vv{t_i})\ansubst{\vv{w}/x}{X}\real B
    \bigr\}\\
    \sem{\sharp A}&:= {(\sem{A}^\bot)}^\bot\quad\text{Where: }A^\bot = \{\vec{v}\in\Sph\,\mid\,\scal{\vec{v}}{a} = 0, \forall a\in A\}
  \end{align*}
  \caption{Type notations and semantics}
  \label{tab:UnitaryTypes}
\end{table}

With the notions of unitary types and their realizers in place, we can now
define a concrete approach for our language. We begin with the type grammar
given in~\Cref{tab:UnitaryTypes} and build a simple algebra from the sets of
values we aim to represent.
Before that, we introduce the notion of orthogonal complement, which will be
used in the semantics of the~$\sharp$~type:
\[
  \comp{A} = \{\, \vv{v}\in \Sph \mid \scal{\vv{v}}{a} = 0 \text{ for all } a\in A\,\}.
\]

The types $\basis{X}$ serve as atomic types. Each of them represents a finite
set~$X$ of orthogonal vectors forming an orthonormal basis. For instance, a
boolean type can be represented by a basis of size~2, yet we are not restricted
to a single one, since there are infinitely many bases to choose from.

Pairs are standard, and written using the notation introduced earlier.
The arrow type~$A\Arr B$ consists of distributions of $\lambda$-abstractions
that map elements of~$\sem{A}$ to realizers of~$B$.  
Finally, the type~$\sharp A$ denotes the double orthogonal complement of~$A$
intersected with the unit sphere.
It represents quantum data---linear resources that cannot be erased or
duplicated.  
Intuitively, applying the~$\sharp$ operator to a type~$A$ yields the span of the
original interpretation (intersected with the unit sphere).  This captures the
possible linear combinations of values in the unitary semantics of~$A$, as
stated in the following theorem.

\begin{theorem}\label{thm:SharpCharacterization}
  The interpretation of a type~$\sharp A$ contains precisely the
  norm-$1$ linear combinations of values in~$\sem{A}$:
  \(
    \sem{\sharp A}
    = (\sem{A}^\bot)^\bot
    = \Span(\sem{A}) \cap \Sph
  \).
  \qed
\end{theorem}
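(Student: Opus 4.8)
The plan is to prove the chain of equalities $\sem{\sharp A} = (\sem{A}^\bot)^\bot = \Span(\sem{A}) \cap \Sph$. The first equality is just the definition of $\sem{\sharp A}$ from \Cref{tab:UnitaryTypes}, so the real content is the second: $(\sem{A}^\bot)^\bot = \Span(\sem{A}) \cap \Sph$. Write $S = \sem{A}$; note $S \subseteq \Sph$ by the definition of unitary type. I would prove the two inclusions separately.

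\textbf{The easy inclusion} $\Span(S) \cap \Sph \subseteq (S^\bot)^\bot$. Take $\vv v \in \Span(S) \cap \Sph$, so $\vv v = \sum_j \beta_j \vv{s_j}$ with each $\vv{s_j} \in S$ and $\|\vv v\| = 1$. For any $\vv w \in S^\bot$ we have $\scal{\vv w}{\vv{s_j}} = 0$ for each $j$, hence by conjugate-linearity/linearity of the inner product $\scal{\vv w}{\vv v} = \sum_j \beta_j \scal{\vv w}{\vv{s_j}} = 0$. Since also $\|\vv v\| = 1$, we get $\vv v \in (S^\bot)^\bot$. This direction uses only bilinearity of the inner product and the definition of $\comp{(\cdot)}$.

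\textbf{The harder inclusion} $(S^\bot)^\bot \subseteq \Span(S) \cap \Sph$. Take $\vv v \in (S^\bot)^\bot$; by definition $\|\vv v\| = 1$, so it remains to show $\vv v \in \Span(S)$. The key point is that $\ValD$ modulo $\equiv$ is a genuine complex vector space (guaranteed by the congruence rules of \Cref{tab:Congruence}), and moreover every value distribution has \emph{finite support}: it is a finite $\C$-linear combination of pure values $v_1, \dots, v_k$, which are mutually orthogonal since the inner product is the Kronecker-delta form $\scal{\vv v}{\vv w} = \sum_{i,j}\overline{\alpha_i}\beta_j\delta_{v_i,w_j}$. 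Thus the pure values behave as an orthonormal Hilbert-space basis of a finite-dimensional subspace. The argument is then the standard finite-dimensional fact that $W^{\bot\bot} = W$ for a subspace $W$: let $W = \Span(S)$, which is a (finite-dimensional, since we may restrict to the span of the finitely many pure values occurring across $S$ and $\vv v$, or argue locally) subspace; decompose $\vv v = \vv v_W + \vv v_{W^\bot}$ using orthogonal projection onto $W$; observe $\vv v_{W^\bot} \in S^\bot$ (it is orthogonal to every element of $S$, hence to all of $W$); since $\vv v \in (S^\bot)^\bot$ and $\vv v_W \in W \subseteq (S^\bot)^\bot$ (by the easy inclusion applied without the norm constraint — or directly: $\vv v_W \perp S^\bot$), we get $\vv v_{W^\bot} = \vv v - \vv v_W \in (S^\bot)^\bot$; but $\vv v_{W^\bot} \in S^\bot$ as well, so $\scal{\vv v_{W^\bot}}{\vv v_{W^\bot}} = 0$, whence $\vv v_{W^\bot} = \vv 0$ and $\vv v = \vv v_W \in \Span(S)$.

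\textbf{The main obstacle} is a subtlety of infinite-dimensionality: $\ValD$ has a basis consisting of \emph{all} pure values, which is infinite, so $\Span(S)$ need not be closed in the $\ell_2$-completion and the naive $W^{\bot\bot} = W$ argument could fail. The fix is to observe that we never leave a finite-dimensional world: any particular $\vv v \in (S^\bot)^\bot$ has finite support $F$ (a finite set of pure values), and the relevant orthogonality conditions only involve the finitely many pure values appearing in $F$ together with those appearing in elements of $S$ needed to ``reach'' $\vv v$; equivalently, one restricts attention to the finite-dimensional subspace $V_F = \Span(F)$ and notes $\vv v \perp (S \cap V_F)^{\bot_{V_F}}$ forces $\vv v \in \Span(S \cap V_F) \subseteq \Span(S)$. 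Making this localization precise — i.e., checking that the orthogonal complement computed inside $\ValD$ restricts correctly to $V_F$ — is the one place requiring care; everything else is the textbook finite-dimensional spectral/projection argument. Alternatively, one can invoke \Cref{thm:UniqueDecomposition} to get the unique decomposition and projection directly on the relevant finite subspace without passing through completions.
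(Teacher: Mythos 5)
Your proof is correct and follows essentially the same route as the paper: both directions are established by double inclusion, with the hard direction handled by orthogonally decomposing $\vv v$ into a component in $\Span(\sem A)$ plus a residue lying in $\sem A^\bot$, whose inner product with $\vv v$ forces it to vanish. Your extra discussion of finite support and localisation to a finite-dimensional subspace is a reasonable precaution that the paper's proof simply glosses over, but it does not change the substance of the argument.
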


The following theorem shows that, as expected for a span, multiple
applications of the~$\sharp$ operator have no further effect beyond the first
application.

\begin{theorem}\label{thm:IdempotentSharp}
  The~$\sharp$ operator is idempotent; that is,
  $\sem{\sharp A} = \sem{\sharp(\sharp A)}$.
  \qed
\end{theorem}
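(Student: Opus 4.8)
The plan is to reduce the claim to \Cref{thm:SharpCharacterization}, which already identifies $\sem{\sharp A}$ with $\Span(\sem A)\cap\Sph$. Applying that theorem to the type $\sharp A$ in place of $A$ gives $\sem{\sharp(\sharp A)} = \Span(\sem{\sharp A})\cap\Sph = \Span(\Span(\sem A)\cap\Sph)\cap\Sph$. So the whole statement collapses to the set-theoretic identity
\[
  \Span\bigl(\Span(\sem A)\cap\Sph\bigr)\cap\Sph \;=\; \Span(\sem A)\cap\Sph,
\]
which is a purely linear-algebraic fact about any subset $S=\sem A$ of the vector space $\ValD$.

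First I would prove the two inclusions of this identity. For ``$\supseteq$'': every $\vv v\in\Span(\sem A)\cap\Sph$ is itself an element of $\Span(\sem A)\cap\Sph$, hence trivially lies in the span of that set, and it has norm $1$, so it belongs to the left-hand side. For ``$\subseteq$'': take $\vv v$ of norm $1$ with $\vv v = \sum_j \beta_j \vv{u_j}$ where each $\vv{u_j}\in\Span(\sem A)\cap\Sph$. Each $\vv{u_j}$ is in particular a linear combination of elements of $\sem A$, so $\vv v$, being a linear combination of the $\vv{u_j}$, is again a linear combination of elements of $\sem A$ by distributivity and associativity of the vector-space operations (the congruence of \Cref{tab:Congruence} guarantees $\ValD$ is a genuine $\C$-vector space, so $\Span$ is idempotent as an operator on subsets: $\Span(\Span(S)) = \Span(S)$). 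Thus $\vv v\in\Span(\sem A)$, and since $\|\vv v\| = 1$ we get $\vv v\in\Span(\sem A)\cap\Sph$.

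Putting the pieces together: by \Cref{thm:SharpCharacterization} applied twice,
\[
  \sem{\sharp(\sharp A)} = \Span(\sem{\sharp A})\cap\Sph
  = \Span\bigl(\Span(\sem A)\cap\Sph\bigr)\cap\Sph
  = \Span(\sem A)\cap\Sph = \sem{\sharp A},
\]
which is the desired equality.

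I do not anticipate a serious obstacle here; the only point requiring a little care is the ``$\subseteq$'' inclusion, where one must check that intersecting with $\Sph$ before taking the span does not shrink the span — i.e. that $\Span(\Span(S)\cap\Sph) = \Span(S)$ whenever $\Span(S)$ contains a nonzero vector (which it does, since $\sem A$ consists of unit vectors by the definition of unitary type). This follows because any nonzero $\vv w\in\Span(S)$ can be normalised to $\vv w/\|\vv w\| \in \Span(S)\cap\Sph$, so $\vv w$ is recovered as a scalar multiple of an element of $\Span(S)\cap\Sph$; hence the two spans coincide. Everything else is a direct invocation of \Cref{thm:SharpCharacterization} and the vector-space axioms secured by the congruence.
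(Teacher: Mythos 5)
Your proposal is correct, but it takes a genuinely different route from the paper. The paper never goes through \Cref{thm:SharpCharacterization}: it argues directly on the orthogonal-complement operation, proving the two purely formal inclusions $S\subseteq\comp[2]{S}$ and $\comp[3]{S}\subseteq\comp{S}$ for an arbitrary set $S$ of unit vectors, concluding $\comp{S}=\comp[3]{S}$ and hence $\sem{\sharp(\sharp A)}=\comp[4]{\sem{A}}=\comp[2]{\sem{A}}=\sem{\sharp A}$; this is the standard Galois-connection argument and uses nothing about spans. You instead apply \Cref{thm:SharpCharacterization} twice and reduce the claim to the linear-algebra identity $\Span\bigl(\Span(\sem{A})\cap\Sph\bigr)\cap\Sph=\Span(\sem{A})\cap\Sph$, which you verify correctly (indeed, for the forward inclusion only $\Span(\Span(S)\cap\Sph)\subseteq\Span(S)$ is needed, so your extra normalisation remark is harmless but dispensable, and the degenerate case $\sem{A}=\emptyset$ also goes through). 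There is no circularity, since \Cref{thm:SharpCharacterization} is proved independently of idempotence. The trade-off: your route is shorter but imports the nontrivial direction $(\sem{A}^\bot)^\bot\subseteq\Span(\sem{A})\cap\Sph$ of the characterisation theorem, whose proof relies on an orthogonal-decomposition argument; the paper's argument is more elementary and self-contained, holding for any subset of $\Sph$ by the formal properties of $\bot$ alone.
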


\begin{remark}
  A basis type~$\basis{X}$ may consist of value distributions of pairs and can
  therefore be written as the product type of smaller bases. For example, if
  $X=\{\ket{00},\ket{01},\ket{10},\ket{11}\}$, then $\sem{\basis{X}}=\sem{\basis{\B}\times\basis{\B}}$.
  However, this is not possible for entangled bases. A clear example is the
  Bell basis.
\end{remark}

It remains to verify that our type algebra indeed captures the intended sets of
value distributions. The following theorem shows that every member of a type
interpretation has norm~$1$.

\begin{theorem}\label{prop:UnitaryTypes}
  For every type~$A$, $\sem{A}\subseteq\Sph$.
  \qed
\end{theorem}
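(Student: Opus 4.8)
The plan is to prove $\sem{A}\subseteq\Sph$ by structural induction on the type notation $A$, following the grammar in \Cref{tab:UnitaryTypes}. The base case is the atomic type $\basis{X}$: here $\sem{\basis{X}}=X$, and $X$ is by hypothesis an $n$-dimensional orthonormal basis, so by \Cref{def:NthDimensionalBasis} each element is an $n$-dimensional qubit, which by \Cref{def:Qubit} satisfies the normalisation condition $|\alpha|^2+|\beta|^2=1$ recursively; a routine check against the definition of the $\ell_2$-norm then gives $\|\vv b\|=1$, i.e.\ $X\subseteq\Sph$.

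For the inductive cases, the product type $A\times B$ is immediate: given $(\vv v,\vv w)$ with $\vv v\in\sem A$ and $\vv w\in\sem B$, the induction hypothesis gives $\|\vv v\|=\|\vv w\|=1$, and since $(\vv v,\vv w)$ is the pair value $\Pair{\vv v}{\vv w}$, its norm is computed from the distinct pairs $\Pair{v_i}{w_j}$ weighted by $\overline{\alpha_i}\beta_j$—using the pair notation conventions—so $\|\Pair{\vv v}{\vv w}\|^2 = \bigl(\sum_i|\alpha_i|^2\bigr)\bigl(\sum_j|\beta_j|^2\bigr)=1$. The arrow case $A\Arr B$ is even easier, and in fact built into the definition: $\sem{A\Arr B}$ is defined as a subset of $\Sph$ explicitly (the ``$\in\Sph$'' side condition), so there is nothing to prove. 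For $\sharp A$, we invoke \Cref{thm:SharpCharacterization}, which already states $\sem{\sharp A}=\Span(\sem A)\cap\Sph\subseteq\Sph$; alternatively one observes directly that $\comp{(\cdot)}$ is defined relative to $\Sph$, so $(\sem{A}^\bot)^\bot\subseteq\Sph$ by construction.

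I expect the main obstacle to be almost entirely bookkeeping rather than mathematical depth: the only genuinely non-trivial point is verifying, in the base case and the product case, that the recursive normalisation condition of \Cref{def:Qubit} matches the $\ell_2$-norm induced by the inner product $\scal{\cdot}{\cdot}$ on $\ValD$, since $n$-dimensional qubits are defined as nested linear combinations $\alpha\Pair{\ket0}{\vv{w_1}}+\beta\Pair{\ket1}{\vv{w_2}}$ and one must unfold this to a flat sum over distinct basis tuples $\ket{x_1\cdots x_n}$ before applying $\|\vv v\|=\sqrt{\sum_i|\alpha_i|^2}$. This requires a small auxiliary induction on the dimension $n$ showing that an $n$-dimensional qubit, once expanded, has squared coefficients summing to $1$. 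Once that lemma is in hand, the four cases of the main induction close immediately, with the $\Arr$ and $\sharp$ cases being vacuous by the way their semantics is defined.
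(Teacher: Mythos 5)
Your proposal is correct and follows essentially the same route as the paper: structural induction on $A$, with the arrow and $\sharp$ cases immediate from their definitions (the explicit ``$\in\Sph$'' condition and the fact that $\comp{(\cdot)}$ is taken inside $\Sph$), and the product case closed by the norm computation $\|\Pair{\vv v}{\vv w}\|^2=\bigl(\sum_i|\alpha_i|^2\bigr)\bigl(\sum_j|\beta_j|^2\bigr)=1$. Your only addition is spelling out the small auxiliary induction showing that an $n$-dimensional qubit has $\ell_2$-norm $1$, a point the paper's proof subsumes under ``built from values in $\Sph$ by definition''; this is a welcome but not essentially different refinement.
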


Defining types as sets of values naturally induces a semantic notion of
subtyping. We say that a type~$A$ is a subtype of a type~$B$
(written~$A\leq B$) when the set of realizers of~$A$ is included in that of~$B$.
If the two sets coincide, we say that $A$ and $B$ are \emph{isomorphic}
(written~$A\cong B$).

\begin{example}
  For every type~$A$, we have $A\leq\sharp A$.
  For the base types $\basis{\B}$ and $\basis{\XB}$, however,
  neither inclusion holds:
    $\basis{\B}\not\leq\basis{\XB}$ and
    $\basis{\XB}\not\leq\basis{\B}$.
  Nevertheless, their linear extensions coincide,
  since $\sharp\basis{\B}\cong\sharp\basis{\XB}$.
\end{example}

Although every type is defined by norm-1 value distributions, not every
norm-1 distribution is contained in the interpretation of a type.
For example, consider the distribution
$\tfrac{1}{\sqrt{2}}(\ket{0} + \ket{00})$.
Another example is a linear combination of abstractions defined over different
bases. For instance, the term
\[
  \tfrac{1}{\sqrt{2}}(\Lam{x}{{\B}}{\pauliX{x}})
  + \tfrac{1}{\sqrt{2}}(\Lam{x}{{\XB}}{x})
\]
is not a member of an arrow type, since the bases decorating each abstraction
do not match. However, it is computationally equivalent to the abstraction
$(\Lam{x}{{\B}}{\ket{+}})$, which in turn belongs to the set
$\sem{\basis{\B}\Arr\basis{\XB}}$.

We denote by~$\Type$ the set of all types and by~$\BasisType$ the set of all
basis types~$\basis{X}$.

\subsection{Characterisation of unitary operators}

One of the main results of~\cite{DiazcaroGuillermoMiquelValironLICS19}
is the characterisation of $\C^2\to\C^2$ unitary operators using values in
$\sem{\sharp\basis{\B}\Arr\sharp\basis{\B}}$~\cite[Theorem IV.12]{DiazcaroGuillermoMiquelValironLICS19}.
In this subsection we extend this result. Our goal is to prove that abstractions
of type $\sharp\basis{X}\Arr\sharp\basis{Y}$, where both bases have size~$n$,
represent unitary operators
$\C^n\to\C^n$.

Unitary operators are isomorphisms between Hilbert spaces, as they preserve the
structure of the space. With this in mind, the first step is to show that the
members of $\sem{\sharp\basis{X}\Arr\sharp\basis{Y}}$ map basis vectors from
$\sem{\basis{X}}$ onto orthogonal vectors in $\sem{\sharp\basis{Y}}$. In other
words, these abstractions preserve both norm and orthogonality.

\begin{lemma}\label{lem:BasesIso}
  Let $X$ and $Y$ be orthonormal bases of the same finite
  dimension, and let $\Lam{x}{{X}}{\vv t}$ be a closed $\lambda$-abstraction.
  Then $\Lam{x}{{X}}{\vv t}\in\sem{\sharp\basis{X}\Arr\sharp\basis{Y}}$
  if and only if 
  for all $\vv{v_i},\vv{v_j}\in\sem{\basis{X}}$,
  there exist value distributions
  $\vv{w_i},\vv{w_j}\in\sem{\sharp\basis{Y}}$ such that,
    $\vv{t}[\vv{v_i}/x]\eval\vv{w_i}$
    and
    $\vv{t}[\vv{v_j}/x]\eval\vv{w_j}$,
    with
    $\vv{w_i}\perp\vv{w_j}$ whenever $i\neq j$.
    \qed
\end{lemma}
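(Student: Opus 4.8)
The plan is to unwind both directions from the definition of $\sem{\sharp\basis{X}\Arr\sharp\basis{Y}}$ in \Cref{tab:UnitaryTypes}, using the $\sharp$-characterisation of \Cref{thm:SharpCharacterization} to translate membership statements about $\sharp$-types into statements about norm-$1$ linear combinations of basis vectors. First I would record the unfolded meaning of the hypothesis: a closed abstraction $\Lam{x}{X}{\vv t}$ lies in $\sem{\sharp\basis{X}\Arr\sharp\basis{Y}}$ exactly when $\Lam{x}{X}{\vv t}\in\Sph$ (automatic, norm $1$) and, for every $\vv w\in\sem{\sharp\basis{X}}$, the term $\vv t\ansubst{\vv w/x}{X}$ realises $\sharp\basis{Y}$, i.e.\ reduces up to global phase to some unitary value distribution in $\sem{\sharp\basis{Y}}=\Span(\sem{\basis{Y}})\cap\Sph$. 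Since $\sem{\basis{X}}=X$ and the $\vv{v_i}$ are precisely the basis elements, this is the generic statement from which the pointwise one must be extracted.

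For the forward direction I would instantiate the hypothesis at $\vv w=\vv{v_i}\in\sem{\basis{X}}\subseteq\sem{\sharp\basis{X}}$; since $\vv{v_i}$ decomposes over $X$ trivially as $1\cdot\vv{v_i}$, the basis-dependent substitution $\vv t\ansubst{\vv{v_i}/x}{X}$ collapses to the ordinary substitution $\vv t[\vv{v_i}/x]$, which therefore realises $\sharp\basis{Y}$ and reduces (modulo phase, absorbed into $\vv{w_i}$ because phases of norm-$1$ vectors stay in $\sem{\sharp\basis{Y}}$) to some $\vv{w_i}\in\sem{\sharp\basis{Y}}$. Orthogonality of the $\vv{w_i}$ for $i\neq j$ comes from feeding the \emph{superposition} $\tfrac{1}{\sqrt2}(\vv{v_i}+\vv{v_j})\in\sem{\sharp\basis{X}}$ into the hypothesis: by \Cref{lem:distributiveSubstitution}, $\vv t\ansubst{\tfrac{1}{\sqrt2}(\vv{v_i}+\vv{v_j})/x}{X}\equiv\tfrac{1}{\sqrt2}(\vv t[\vv{v_i}/x]+\vv t[\vv{v_j}/x])$, which by \Cref{thm:confluence}/\Cref{rmk:determinism} reduces to $\tfrac{1}{\sqrt2}(\vv{w_i}+\vv{w_j})$; this must be unitary, i.e.\ have norm $1$, forcing $\tfrac12(\|\vv{w_i}\|^2+\|\vv{w_j}\|^2+2\Rpart{\scal{\vv{w_i}}{\vv{w_j}}})=1$. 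With $\|\vv{w_i}\|=\|\vv{w_j}\|=1$ this gives $\Rpart{\scal{\vv{w_i}}{\vv{w_j}}}=0$; repeating with $\tfrac{1}{\sqrt2}(\vv{v_i}+\imath\vv{v_j})$ kills the imaginary part, yielding $\scal{\vv{w_i}}{\vv{w_j}}=0$.

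For the converse, assume the pointwise reductions $\vv t[\vv{v_i}/x]\eval\vv{w_i}$ with the $\vv{w_i}$ orthonormal in $\sem{\sharp\basis{Y}}$. Take any $\vv w\in\sem{\sharp\basis{X}}=\Span(X)\cap\Sph$, write $\vv w=\sum_i\alpha_i\vv{v_i}$ with $\sum_i|\alpha_i|^2=1$ (unique by \Cref{thm:UniqueDecomposition}). Then $\vv t\ansubst{\vv w/x}{X}=\sum_i\alpha_i\vv t[\vv{v_i}/x]$ by definition of the basis-dependent substitution, and by \Cref{lem:distributiveSubstitution} together with deterministic reduction this evaluates to $\sum_i\alpha_i\vv{w_i}$; orthonormality of the $\vv{w_i}$ makes $\|\sum_i\alpha_i\vv{w_i}\|^2=\sum_i|\alpha_i|^2=1$, so $\sum_i\alpha_i\vv{w_i}\in\Span(\sem{\basis{Y}})\cap\Sph=\sem{\sharp\basis{Y}}$ by \Cref{thm:SharpCharacterization}, hence $\vv t\ansubst{\vv w/x}{X}\real\sharp\basis{Y}$. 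Since this holds for all $\vv w\in\sem{\sharp\basis{X}}$ and $\Lam{x}{X}{\vv t}$ trivially has norm $1$, membership in $\sem{\sharp\basis{X}\Arr\sharp\basis{Y}}$ follows.

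The main obstacle I anticipate is bookkeeping around global phases and congruence: the definition of realizer allows reduction only up to a phase $e^{\imath\theta}$, so in the forward direction each $\vv t[\vv{v_i}/x]$ a priori only reduces to $e^{\imath\theta_i}\vv{w_i}'$, and one must check that the phase-adjusted $\vv{w_i}$ still live in $\sem{\sharp\basis{Y}}$ (true, since that set is phase-closed) and that when we form the superposition the separate phases $e^{\imath\theta_i}$ and $e^{\imath\theta_j}$ don't wreck the orthogonality argument — they don't, since $\scal{e^{\imath\theta_i}\vv{w_i}}{e^{\imath\theta_j}\vv{w_j}}=e^{\imath(\theta_j-\theta_i)}\scal{\vv{w_i}}{\vv{w_j}}$ vanishes iff $\scal{\vv{w_i}}{\vv{w_j}}$ does. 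One also has to be slightly careful that reduction of $\tfrac{1}{\sqrt2}(\vv t[\vv{v_i}/x]+\vv t[\vv{v_j}/x])$ genuinely reaches $\tfrac{1}{\sqrt2}(\vv{w_i}+\vv{w_j})$ and not some other representative — this is exactly what \Cref{thm:confluence} and the determinism in \Cref{rmk:determinism} provide.
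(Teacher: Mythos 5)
Your proposal is correct and takes essentially the same route as the paper's proof: the forward direction obtains $\Rpart{\scal{\vv{w_i}}{\vv{w_j}}}=0$ and $\Ipart{\scal{\vv{w_i}}{\vv{w_j}}}=0$ by feeding the superpositions $\tfrac{1}{\sqrt2}(\vv{v_i}+\vv{v_j})$ and $\tfrac{1}{\sqrt2}(\vv{v_i}+i\,\vv{v_j})$ (the paper derives the constraint for arbitrary unit coefficients and then instantiates to exactly these choices), and the converse uses the unique decomposition over $X$, linearity of the basis-dependent substitution, deterministic evaluation, and orthonormality of the $\vv{w_i}$ to land in $\Span(\sem{\basis{Y}})\cap\Sph=\sem{\sharp\basis{Y}}$. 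Your explicit handling of global phases only makes precise what the paper leaves implicit.
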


Terms such as $\ket{0}$ and $\ket{1}$ are syntactic objects of the
calculus, not vectors of~$\C^2$. Nevertheless, when discussing the behaviour of
terms on Hilbert spaces, we shall occasionally abuse notation and identify
value distributions representing qubits with their corresponding canonical
basis vectors in~$\C^n$. This identification applies only to those
distributions that denote quantum data, not to general syntactic values such as
$\lambda$-abstractions---in particular, to all elements of
$\sem{\sharp\basis{X}}$ for any orthonormal basis~$X$. This identification can
be made precise as follows.

\begin{definition}
  Let $X$ be an orthonormal basis of size~$n$. For every $\vv{v}\in X$, we can
  write
  \(
    \vv{v}\equiv \sum_{i=1}^{n}\alpha_i\ket{i}
  \),
  where $\ket{i}$ denotes the $n$-qubit tuple of $\ket{0}$ and $\ket{1}$
  corresponding to the binary representation of~$i$, and
  $\sum_{i=1}^{n}|\alpha_i|^2=1$. For example, for $n=4$, $\ket{3}$ is
  $\ket{0011}$. We then define $\pi_n:X\to\C^n$ by
  \(
    \pi_n(\vv{v}) = (\alpha_1,\dotsc,\alpha_n)
  \).
\end{definition}

To lighten notation, we shall henceforth omit~$\pi_n$ and use~$\vv v$
directly.

We now establish a correspondence between $\lambda$-abstractions and operators
on~$\C^n$. Intuitively, an abstraction represents a linear operator when its
operational behaviour coincides with the action of that operator on vectors.
Formally:

\begin{definition}
  A $\lambda$-abstraction $\Lam{x}{{X}}{\vv{t}}$ is said to \emph{represent}
  an operator $F:\C^n\to\C^n$ if
    $(\Lam{x}{{X}}{\vv{t}})\vv{v} \eval \vv{w}$
    if and only if
    $F(\vv{v}) = \vv{w}$.
\end{definition}

This definition, together with \Cref{lem:BasesIso}, allows us to build a
characterisation of unitary operators as values in
$\sem{\sharp\basis{X}\Arr\sharp\basis{Y}}$.

\begin{theorem}[Characterisation of Unitary Operators]
  Let $X$ and $Y$ be orthonormal bases of size~$n$.
  A closed $\lambda$-abstraction
  $\Lam{x}{{X}}{\vv{t}}\in\sem{\sharp\basis{X}\Arr\sharp\basis{Y}}$
  if and only if it represents a unitary operator
  $F:\C^n\to\C^n$.
\end{theorem}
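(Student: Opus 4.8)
The plan is to prove both directions, using \Cref{lem:BasesIso} as the backbone.

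\medskip

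\noindent\textbf{($\Leftarrow$) Representation implies membership.}
Suppose $\Lam{x}{X}{\vv t}$ represents a unitary $F:\C^n\to\C^n$. First I would note that since $F$ is unitary, it maps the orthonormal basis $\sem{\basis X}=\{\vv{v_1},\dots,\vv{v_n}\}$ to an orthonormal family $\{F(\vv{v_i})\}$ inside $\Sph$; by \Cref{thm:SharpCharacterization} each $F(\vv{v_i})$, being a norm-$1$ vector, lies in $\sem{\sharp\basis Y}$ provided it is a linear combination of elements of $\sem{\basis Y}=Y$ --- which it is, since $Y$ spans the whole $n$-dimensional space. By the representation hypothesis, $(\Lam{x}{X}{\vv t})\vv{v_i}\eval F(\vv{v_i})$, hence by the $\beta$-rule $\vv t[\vv{v_i}/x]\eval F(\vv{v_i})=:\vv{w_i}$, and the $\vv{w_i}$ are pairwise orthogonal. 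This is exactly the right-hand condition of \Cref{lem:BasesIso}, so $\Lam{x}{X}{\vv t}\in\sem{\sharp\basis X\Arr\sharp\basis Y}$. One must also check that $\Lam{x}{X}{\vv t}\in\Sph$ as required by the definition of the arrow semantics; this follows because a single abstraction has norm $1$ by the inner product of \Cref{tab:Syntax}.

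\medskip

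\noindent\textbf{($\Rightarrow$) Membership implies representation.}
Conversely, assume $\Lam{x}{X}{\vv t}\in\sem{\sharp\basis X\Arr\sharp\basis Y}$. By \Cref{lem:BasesIso} we obtain orthonormal $\vv{w_1},\dots,\vv{w_n}\in\sem{\sharp\basis Y}$ with $\vv t[\vv{v_i}/x]\eval\vv{w_i}$. Define $F:\C^n\to\C^n$ to be the unique linear map with $F(\vv{v_i})=\vv{w_i}$; since it sends an orthonormal basis to an orthonormal basis, $F$ is unitary. It remains to show $\Lam{x}{X}{\vv t}$ \emph{represents} $F$, i.e.\ $(\Lam{x}{X}{\vv t})\vv u\eval\vv r \iff F(\vv u)=\vv r$ for every $\vv u\in\sem{\sharp\basis X}$. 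Take $\vv u\in\sem{\sharp\basis X}$; by \Cref{thm:SharpCharacterization} write $\vv u\equiv\sum_i\alpha_i\vv{v_i}$, the decomposition over $X$ (unique by \Cref{thm:UniqueDecomposition}). Then, using the $\beta$-rule together with the definition of basis-dependent substitution and \Cref{lem:distributiveSubstitution},
\[
  (\Lam{x}{X}{\vv t})\vv u
  \;\lra\; \vv t\ansubst{\vv u/x}{X}
  \;=\; \sum_i\alpha_i\,\vv t[\vv{v_i}/x]
  \;\eval\; \sum_i\alpha_i\,\vv{w_i}
  \;=\; F(\vv u),
\]
where the reduction of the sum to $\sum_i\alpha_i\vv{w_i}$ uses the congruence rules of \Cref{tab:Congruence} (linearity of $\lra$) and the determinism from \Cref{rmk:determinism} to conclude the normal form is unique up to $\equiv$. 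Thus $(\Lam{x}{X}{\vv t})\vv u\eval F(\vv u)$, and by determinism this is the only value it reduces to, giving the required biconditional.

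\medskip

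\noindent\textbf{Main obstacle.}
The routine part is linearity of reduction; the delicate point is making the reductions $\vv t[\vv{v_i}/x]\eval\vv{w_i}$ commute with taking the $\C$-linear combination $\sum_i\alpha_i(\cdot)$ so that the whole term converges to exactly $\sum_i\alpha_i\vv{w_i}$ rather than getting stuck on an intermediate distribution, and in handling the phase freedom in the definition of $\real$: a priori each branch could produce $e^{i\theta_i}\vv{w_i}$ with different phases. I expect to resolve this by absorbing the phases into the $\vv{w_i}$ (which is legitimate since the defining orthogonality in \Cref{lem:BasesIso} is phase-insensitive) and by invoking \Cref{thm:confluence}/\Cref{rmk:determinism} to guarantee that the reductions of the summands can be carried out within the single reduction sequence of the combined term. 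A secondary subtlety is that the identification via $\pi_n$ must be compatible with $\equiv$, which is exactly \Cref{cor:EquivalentDecomposition}, so that "$F(\vv u)=\vv w$" is well defined on equivalence classes.
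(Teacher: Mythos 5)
Your proposal is correct and follows essentially the same route as the paper: both directions are reduced to \Cref{lem:BasesIso}, defining $F$ from the orthogonal reducts $\vv{w_i}$ for membership~$\Rightarrow$~representation, and verifying the lemma's orthogonality condition (via unitarity of $F$) for the converse. Your extra care about spelling out the linearity step on $\vv u\in\sem{\sharp\basis{X}}$, the phase freedom, and the norm-$1$ requirement on the abstraction only makes explicit what the paper compresses into ``by linearity over $X$''.
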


\begin{proof}
  \textit{Necessity.}
  Suppose that $\Lam{x}{{X}}{\vv{t}}\in\sem{\sharp\basis{X}\Arr\sharp\basis{Y}}$.
  Then, by \Cref{lem:BasesIso}, for every
  $\vv{v_i}\in\sem{\basis{X}}$ there exists
  $\vv{w_i}\in\sem{\sharp\basis{Y}}$ such that
  $\vv{t}[\vv{v_i}/x]\eval\vv{w_i}$ and
  $\vv{w_i}\perp\vv{w_j}$ whenever $i\neq j$.
  Let $F:\C^n\to\C^n$ be the operator defined by
  $F(\vv{v_i})=\vv{w_i}$.
  By linearity over~$X$, it follows that
  $\Lam{x}{{X}}{\vv{t}}$ represents~$F$.
  Moreover, $F$ is unitary since
  $\|\vv{w_i}\|_{\C^n}=\|\vv{w_j}\|_{\C^n}=1$ and
  $\scal{\vv{w_i}}{\vv{w_j}}_{\C^n}=0$.

  \smallskip
  \textit{Sufficiency.}
  Conversely, suppose that $\Lam{x}{{X}}{\vv{t}}$ represents a
  unitary operator $F:\C^n\to\C^n$.
  Then, for each $\vv{v_i}\in\sem{\basis{X}}$, there exists
  $\vv{w_i}\in\sem{\basis{Y}}$ such that
  $F(\vv{v_i}) = \vv{w_i}$ and
  $(\Lam{x}{{X}}{\vv{t}})\vv{v_i}\eval\vv{w_i}$.
  Hence,
  \(
    (\Lam{x}{{X}}{\vv{t}})\vv{v_i}
    \lraneq
    \vv{t}\ansubst{\vv{v_i}/x}{X}
    = \vv{t}[\vv{v_i}/x]
    \eval \vv{w_i}
    \in \sem{\sharp\basis{Y}}
  \),
  since $\|\vv{w_i}\|=\|F(\vv{v_i})\|_{\C^n}=1$.
  From \Cref{lem:BasesIso}, it follows that
  $\Lam{x}{{X}}{\vv{t}}\in\sem{\sharp\basis{X}\Arr\sharp\basis{Y}}$,
  and moreover
  \(
    \scal{\vv{w_i}}{\vv{w_j}}
    = \scal{F(\vv{v_i})}{F(\vv{v_j})}_{\C^n}
    = 0.
  \)
  \qedhere
\end{proof}

  This result naturally extends to unitary distributions of
  $\lambda$-ab\-strac\-tions, since a term of the form
  $\Lam{x}{{X}}{\sum_{i=1}^{n}\alpha_i \vv{t_i}}$ is syntactically different
  but computationally equivalent to $\sum_{i=1}^{n}\alpha_i
  \Lam{x}{{X}}{\vv{t_i}}$.  Hence, the characterisation of unitary operators
  also applies to superpositions of abstractions sharing the same basis~$X$.

\subsection{Typing rules}    
In this section, we focus on enumerating and proving the validity of various
typing rules. The objective is to extract a reasonable set of rules that can
constitute a type system. We first need to lay the groundwork to properly define
what it means for a typing rule to be valid.

\begin{definition}
  A \emph{context} (denoted by capital Greek letters $\Gamma$, $\Delta$) is a
  finite mapping $\Gamma:\Var\to\Type\times\BasisType$ assigning a type and a
  basis to each variable in its domain. We write
  \(
    \Gamma = {x_1}^{{X_1}}:A_1,\dotsb, {x_n}^{{X_n}}:A_n
  \)
  to indicate that $\Gamma(x_i)=(A_i,\basis{X_i})$ for each~$i$.
\end{definition}

As in standard typing judgements, the context records the types of a term's
free variables. However, since substitution in our calculus depends on a basis,
we also wish to record that information. This is not strictly necessary, as the
basis with respect to which a variable is interpreted should not affect its
type. For instance, consider the following substitutions:
\begin{align*}
(\Lam{x}{\B}{\Pair{x}{y}})\ansubst{\ket{0}/y}{\B}
  &= \Lam{x}{\B}{\Pair{x}{\ket{0}}},\\
\text{and}\quad
(\Lam{x}{\B}{(x, y)})\ansubst{\ket{0}/y}{\XB}
  &= \tfrac{1}{\sqrt{2}}
    \bigl((\Lam{x}{\B}{\Pair{x}{\ket{+}}})
         + (\Lam{x}{\B}{\Pair{x}{\ket{-}}})\bigr).
\end{align*}
These terms are not syntactically identical, yet they are computationally
equivalent. 
Since typing via realisability captures computational behaviour,
their types coincide. Nevertheless, we retain basis information in contexts, as
it will simplify the forthcoming proofs.

\begin{definition}
  Given a context~$\Gamma$, its \emph{unitary semantics},
  denoted~$\sem{\Gamma}$, is the set of substitutions defined by
  \begin{align*}
    \sem{\Gamma}
    := 
    \{&\sigma~\text{substitution} \mid 
      \dom{\sigma} = \dom{\Gamma}
      \text{ and } \forall {x_i} \in \dom{\Gamma},\\
      &\Gamma(x_i) = (A_i, \basis{X_i})
      \text{ implies }
      \sigma(x_i) = \ansubst{\vv{v_i}/x_i}{{X_i}}
      \text{ for some }\vv{v_i} \in \sem{A_i}\}.
  \end{align*}
\end{definition}

To ensure a coherent treatment of quantum data, we must guarantee that qubits
are handled linearly. The first step is to identify which variables in the
context represent quantum data---those associated with a type of the
form~$\sharp A$. We call the subset of~$\Gamma$ composed of such variables its
\emph{strict domain}.

\begin{definition}
  The \emph{strict domain} of a context~$\Gamma$, denoted~$\sdom{\Gamma}$, is
  defined as
  \(
    \sdom{\Gamma} :=
    \{x \in \dom{\Gamma} \mid
    \sem{\Gamma(x)} = \sem{\sharp(\Gamma(x))}\}
  \).
\end{definition}

This definition relies on the idempotence of the~$\sharp$~operator
(\Cref{thm:IdempotentSharp}).

A typing judgement $\Gamma\vdash \vv{t}:A$ is valid if it satisfies two
conditions.  First, every free variable of~$\vv{t}$ must belong to the domain
of~$\Gamma$, and every variable in the strict domain~$\sdom{\Gamma}$ must occur
in~$\vv{t}$.  This ensures that no information is erased and that all variables
are properly accounted for.  The linear treatment of quantum data is thus
enforced by substitution.

Second, for every substitution in the unitary semantics of~$\Gamma$, applying
it to~$\vv{t}$ must yield a term that reduces to a realizer of type~$A$.  This
condition ensures that the operational behaviour of the term within the context
is faithfully captured by the type.  Formally:

\begin{definition}[Typing judgement]
  A \emph{typing judgement} $\TYP{\Gamma}{\vv{t}}{A}$ is valid when:
    $\sdom{\Gamma}\subseteq\FV{\vv{t}}\subseteq\dom{\Gamma}$
    and
    for all $\sigma\in\sem{\Gamma}$, 
          $\vv{t}\ansubst{\sigma}{}\real A$.
\end{definition}

We are also interested in \emph{orthogonal terms}, that is, terms that reduce
to orthogonal values.  
We therefore introduce the following notion.

\begin{definition}
  An \emph{orthogonality judgement}
  $\ORTH{\Gamma}{\Delta_1}{\vv{t}}{\Delta_2}{\vv{s}}{A}$
  is said to be \emph{valid} when
  \begin{itemize}
    \item 
  the judgements
      $\TYP{\Gamma,\Delta_1}{\vv{t}}{A}$ and
      $\TYP{\Gamma,\Delta_2}{\vv{s}}{A}$ are valid, and
    \item for
      for every
      $\sigma\in\sem{\Gamma,\Delta_1}$ and
      $\tau\in\sem{\Gamma,\Delta_2}$,
      there exist value distributions $\vv{v},\vv{w}$ such that
      $\vv{t}\ansubst{\sigma}{}\eval\vv{v}$,
      $\vv{s}\ansubst{\tau}{}\eval\vv{w}$,
      and $\vv{v}\perp\vv{w}$.
  \end{itemize}
When both $\Delta_1$ and $\Delta_2$ are empty,
we just write
$\SORTH{\Gamma}{\vv{t}}{\vv{s}}{A}$.
\end{definition}

With these definitions in mind, a typing rule is \emph{valid} when valid
premises entail a valid conclusion.  
Although there are infinitely many valid rules (each corresponding to a theorem),
\Cref{tab:TypingRules} presents a representative subset forming a reasonable
core typing system for the calculus, whose validity is stated below.

\begin{table}[t]
  \[
    \begin{array}{c}
      \infer[\snam{Axiom}]{\TYP{x^{X}:A}{x}{A}}{\basis{X}\leq A \text{ or }X=\AbsBasis}
      \qquad
	\infer[\snam{UnitLam}]{\TYP{\Gamma}{\sum_{i=1}^n \alpha_i (\Lam{x}{{X}}{\vv{t_i}})}{A\Arr B}}
	{\TYP{\Gamma,x^{X}:A}{\sum_{i=1}^{n}\alpha_i\vv{t_i}}{B}}
      \\
      \noalign{\medskip}
      \infer[\snam{App}]{\TYP{\Gamma,\Delta}{\vv{s}\,\vv{t}}{B}}
      {\TYP{\Gamma}{\vv{s}}{A\Arr B} & \TYP{\Delta}{\vv{t}}{A}}
      \qquad
      \infer[\snam{Pair}]{\TYP{\Gamma,\Delta}
      {\Pair{\vv{t}}{\vv{s}}}{A\times B}}{
	\TYP{\Gamma}{\vv{t}}{A}&\TYP{\Delta}{\vv{s}}{B}
      }
      \\
      \noalign{\medskip}
      \infer[\snam{LetPair}]{\TYP{\Gamma,\Delta} 
      {\LetP{x}{{X}}{y}{{Y}}{\vv{t}}{\vv{s}}}{C}}{
	\TYP{\Gamma}{\vv{t}}{A\times B}&
	\TYP{\Delta,x^{{X}}:A,y^{Y}:B}{\vv{s}}{C}
      }\\
      \noalign{\medskip}
      \infer[\snam{LetTens}]{\TYP{\Gamma,\Delta}
      {\LetP{x}{{X}}{y}{{Y}}{\vv{t}}{\vv{s}}}{\sharp C}}{
	\TYP{\Gamma}{\vv{t}}{\sharp(A\times B)}&
	\TYP{\Delta,x^{X}:\sharp A,y^{Y}:\sharp B}{\vv{s}}{C}
      }\\
      \noalign{\medskip}
      \infer[\snam{Case}]{\TYP{\Gamma,\Delta}
      {\gencase{\vv{t}}{\vv{v_1}}{\vv{v_n}}{\vv{s_1}}{\vv{s_n}}}{A}}{
	\TYP{\Gamma}{\vv{t}}{\genbasis{\vv{v_i}}{i=1}{n}}&
	\forall i,\ \TYP{\Delta}{\vv{s_i}}{A}
      }\\
      \noalign{\medskip}
      \infer[\snam{UnitCase}]{\TYP{\Gamma,\Delta}
      {\gencase{\vv{t}}{\vv{v_1}}{\vv{v_n}}{\vv{s_1}}{\vv{s_n}}}{\sharp A}}{
	\TYP{\Gamma}{\vv{t}}{\sharp \genbasis{\vv{v_i}}{i=1}{n}}&
	\forall i\neq j,\ \SORTH{\Delta}{\vv{s_i}}{\vv{s_j}}{A}
      }\\
      \noalign{\medskip}
      \infer[\snam{Sum}]
      {\TYP{\Gamma}{\sum_{i=1}^{n} \vv{t_i}}{\sharp A}}
      {\forall i\neq j,\, \SORTH{\Gamma}{\vv{t_i}}{\vv{t_j}}{A} &
      \sum_{i=1}^{n}|\alpha_i|^2 = 1}
      \\
      \noalign{\medskip}
      \infer[\snam{Contr}]{\TYP{\Gamma,x^{X}:{\basis{X}}}{\vv{t}\,[y:=x]}{B}}{
	\TYP{\Gamma,x^{X}:{\basis{X}},y^{X}:{\basis{X}}}{\vv{t}}{B}
      } 
      \qquad
	\infer[\snam{Weak}]{\TYP{\Gamma,x^{X}:\basis{X}}{\vv{t}}{C}}{
	  \TYP{\Gamma}{\vv{t}}{C}
	}
      \\
      \noalign{\medskip}
      \infer[\snam{Sub}]{\TYP{\Gamma}{\vv{t}}{B}}{\TYP{\Gamma}{\vv{t}}{A} & \SUB{A}{B}}
      \quad
      \infer[\snam{Equiv}]{\TYP{\Gamma}{\vv{s}}{A}}{
	\TYP{\Gamma}{\vv{t}}{A}& \vv t\equiv \vv s
      }
      \quad 
      \infer[\snam{Phase}]{\TYP{\Gamma}{e^{i\theta}\vv{t}}{A}}
      {\TYP{\Gamma}{\vv{t}}{A}}
    \end{array}
  \]
  \caption{Some valid typing rules}
  \label{tab:TypingRules}
\end{table}

\begin{theorem}\label{thm:TypingRulesValidity}
  All the typing rules in \Cref{tab:TypingRules} are valid.
  \qed
\end{theorem}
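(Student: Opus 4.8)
The strategy is to treat each rule in \Cref{tab:TypingRules} independently, unfolding the definition of validity: assume the premises are valid typing (or orthogonality) judgements, take an arbitrary $\sigma\in\sem{\Gamma}$ for the concluding context, and show that $\vv t\ansubst{\sigma}{}$ reduces to a phase times a value of the target type. Throughout, the two halves of the typing judgement --- the free-variable bookkeeping $\sdom{\Gamma}\subseteq\FV{\vv t}\subseteq\dom{\Gamma}$ and the semantic reduction condition --- are verified separately. The bookkeeping half is combinatorial and follows from how contexts are combined in each rule (note that for the multiplicative rules \snam{App}, \snam{Pair}, \snam{LetPair}, \snam{LetTens} the contexts $\Gamma,\Delta$ are disjoint, so a substitution $\sigma\in\sem{\Gamma,\Delta}$ splits as $\sigma=\sigma_1\sigma_2$ with $\sigma_i\in\sem{\cdot}$ of the respective subcontext, using the merge property of substitutions established at the end of \Cref{sec:calculus}).

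First I would dispatch the structural rules. \snam{Axiom} is immediate from $\sem{\basis X}=X\subseteq\sem A$ when $\basis X\le A$, and from the $\AbsBasis$-case of basis-dependent substitution otherwise, since $\ansubst{\vv v/x}{\AbsBasis}$ applied to $x$ yields $\vv v$ back. \snam{Phase} and \snam{Equiv} follow from the definition of realizer (closed under phase by construction) and from \Cref{thm:confluence}/\Cref{cor:EquivalentDecomposition} (congruent terms reduce to congruent values, and congruent values lie in the same $\sem A$ since type semantics are congruence-closed). \snam{Sub} is definitional. \snam{Weak} uses that adding a non-strict variable $x^X:\basis X$ to the context does not constrain reduction since $x$ may not occur; \snam{Contr} uses that substituting the same closed value for $x$ and $y$ (both over the same basis $X$) equals substituting it once after the renaming $[y:=x]$ --- here the agreement of the two bases is what makes the two substitution orders coincide.

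Next the logical rules. For \snam{UnitLam}, given $\sigma\in\sem\Gamma$ and any $\vv w\in\sem A$, the premise applied to the extended substitution $\sigma\cdot\ansubst{\vv w/x}{X}$ gives that $(\sum_i\alpha_i\vv t_i)\ansubst\sigma{}\ansubst{\vv w/x}{X}\real B$, which is exactly the membership condition for $\sem{A\Arr B}$ after pulling $\sigma$ inside the abstraction and checking the norm is $1$ (the norm claim follows since the $\alpha_i$ are the coefficients of a value distribution that must be unitary --- this is implicit in the premise's realizer being unitary, via \Cref{prop:UnitaryTypes}). For \snam{App}, reduce $\vv s\ansubst{\sigma_1}{}$ to $e^{i\theta}(\sum_j\beta_j\Lam{x}{X}{\vv r_j})\in\sem{A\Arr B}$ and $\vv t\ansubst{\sigma_2}{}$ to $e^{i\phi}\vv w$ with $\vv w\in\sem A$; the congruence rules for application distribute the scalars and the sum, the $\beta$-rule fires (it is defined precisely because $\vv w\in\Span(X)=\sem{\sharp\basis X}\supseteq\sem A$... careful: need $\vv w$ decomposable over $X$, which holds when $A$ is such that $\sem A\subseteq\Span(X)$ — this is guaranteed by the arrow-type semantics requiring the body-substitution to be defined), and the result realizes $B$ by the defining property of $\sem{A\Arr B}$. \snam{Pair} is analogous but easier. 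For \snam{LetPair} and \snam{LetTens}, reduce the scrutinee to a pair $(\vv v,\vv w)\in\sem{A\times B}$ (resp. a unitary linear combination of such pairs in $\sem{\sharp(A\times B)}$), then the $\mathsf{let}$-rule with the extended substitution $\ansubst{\cdot/x\otimes y}{X\otimes Y}$ produces $\vv s\ansubst{\tau}{}$ for a suitable $\tau\in\sem{\Delta,x{:}A,y{:}B}$ (resp. using that the coefficients of the superposition of pairs push through, which is the content of the $\mathsf{let}$-congruence rules in \Cref{tab:Congruence}); then apply the second premise. For \snam{Case}, reduce $\vv t\ansubst{\sigma}{}$ to $\sum_i\gamma_i\vv v_i\in\sem{\basis{\{\vv v_i\}}}$... actually a single $\vv v_k$ for \snam{Case} since the type is a plain basis type, the $\mathsf{case}$-rule selects branch $\vv s_k$, done by the branch premise. \snam{UnitCase} is the quantum analogue: the scrutinee reduces to $\sum_i\gamma_i\vv v_i$ with $\sum|\gamma_i|^2=1$, the $\mathsf{case}$-rule gives $\sum_i\gamma_i\vv s_i\ansubst\sigma{}$, and the orthogonality premises ensure each $\vv s_i\ansubst\sigma{}$ reduces to $\vv w_i$ with the $\vv w_i$ pairwise orthogonal, so $\sum_i\gamma_i\vv w_i$ has norm $\sqrt{\sum|\gamma_i|^2}=1$ and lies in $\Span(\sem A)\cap\Sph=\sem{\sharp A}$ by \Cref{thm:SharpCharacterization}. \snam{Sum} is the same orthogonality-plus-normalisation argument without a scrutinee.

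The main obstacle I anticipate is the \snam{App} (and to a lesser extent \snam{LetTens}) case: one must argue carefully that the $\beta$-redex is actually fireable, i.e. that the argument value, after reduction, can be decomposed over the abstraction's decorating basis $X$. This requires knowing that $\sem A\subseteq\Span(\sem{\basis X})$ for the relevant $A$ --- which is exactly the well-formedness condition baked into $\sem{A\Arr B}$ (the body-substitution $\ansubst{\vv w/x}{X}$ must be defined for all $\vv w\in\sem A$), but threading this through the reduction-modulo-$\equiv$ and the interleaving of the congruence-driven linear distribution with the actual $\lraneq$ steps needs care, and is where \Cref{thm:confluence} (reduction preserves equivalence, hence the choice of representative is immaterial) does the heavy lifting. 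A secondary subtlety is in \snam{UnitLam} and the closure under superpositions: justifying that the scalar coefficients $\alpha_i$ are precisely those of a unit vector and that pulling $\sigma$ through the abstraction commutes with the basis-dependent substitution inside --- handled by \Cref{lem:distributiveSubstitution} and the disjointness of $\sigma$'s domain from $\{x\}$.
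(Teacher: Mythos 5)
Your overall architecture matches the paper's: rule-by-rule verification, splitting each validity check into the free-variable bookkeeping and the semantic reduction condition, with $\sigma\in\sem{\Gamma,\Delta}$ split over disjoint contexts. Your treatments of \snam{Axiom}, \snam{UnitLam}, \snam{App}, \snam{Pair}, \snam{LetPair}, \snam{Case}, \snam{Sum}, \snam{Contr}, \snam{Weak}, and the structural rules are essentially the paper's arguments. For \snam{UnitCase} and \snam{Sum} your norm computation works precisely because pairwise orthogonality of the branches is \emph{given} as a premise (the $\SORTH{\Delta}{\vv{s_i}}{\vv{s_j}}{A}$ judgements).

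The genuine gap is \snam{LetTens}, which you dismiss with ``the coefficients push through; then apply the second premise.'' The second premise only tells you that for each component $\Pair{\vv{v_k}}{\vv{u_k}}$ of the scrutinee's decomposition, the body reduces to \emph{some} norm-$1$ value $\vv{w_k}\in\sem{C}$ (up to phase). The reduct is then $\sum_k\gamma_k e^{i\rho_k}\vv{w_k}$, and to place it in $\sem{\sharp C}$ you must show its norm is $1$ --- but unlike \snam{UnitCase} there is \emph{no} orthogonality premise here, and the $\vv{w_k}$ need not be orthogonal a priori; the cross terms $\scal{\vv{w_k}}{\vv{w_{k'}}}$ could spoil normalisation. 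The paper closes this by proving that evaluation of a typed body preserves inner products of the substituted arguments: a chain of auxiliary results (a rewriting lemma for sums of unit vectors, the polarization identity, an inner-product-preservation lemma for a single $\sharp$-typed variable, an orthogonality-preservation lemma for two variables, and finally the multiplicativity $\scal{\vv{w_k}}{\vv{w_{k'}}}=\scal{\vv{v_k}}{\vv{v_{k'}}}\scal{\vv{u_k}}{\vv{u_{k'}}}$, combined with $\scal{\Pair{\vv{v_k}}{\vv{u_k}}}{\Pair{\vv{v_{k'}}}{\vv{u_{k'}}}}=\scal{\vv{v_k}}{\vv{v_{k'}}}\scal{\vv{u_k}}{\vv{u_{k'}}}$). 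This is the technical heart of the theorem and is entirely absent from your sketch; by contrast, the obstacle you flag as the main one (\snam{App} fireability of the $\beta$-redex) is comparatively routine, since the arrow-type semantics already builds in definedness of the body substitution for all arguments in $\sem{A}$.
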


The usual safety properties follow straightforwardly in this framework.
\emph{Confluence} is an immediate consequence of the reduction being
deterministic (cf.~\Cref{rmk:determinism}).  
\emph{Strong normalisation} follows directly from the definition of a
realizer (cf.~\Cref{def:Realizer}).  
\emph{Subject reduction} is also immediate: indeed, if
$\TYP{\Gamma}{\vv{t}}{A}$ and $\vv{t}\to\vv{u}$, then
$\TYP{\Gamma}{\vv{u}}{A}$ by definition.
However, if we restrict ourselves to a subset of typing rules—such as those
presented in \Cref{tab:TypingRules}—we must
ensure that this restricted system still suffices to type all reducts of a
term, once the underlying realisability semantics is abstracted away.  In our
case, the rules proven valid in \Cref{thm:TypingRulesValidity}
suffice to guarantee subject reduction, as
stated below.

\begin{theorem}[Subject reduction]\label{thm:SubjectReduction}
  If $\TYP{\Gamma}{\vv{t}}{A}$ can be derived using the set of rules in
  \Cref{tab:TypingRules} and $\vv{t}\to\vv{u}$, then
  $\TYP{\Gamma}{\vv{u}}{A}$ can also be derived by the same set of rules.
  \qed
\end{theorem}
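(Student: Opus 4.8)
The plan is to prove subject reduction by induction on the derivation of $\vv t \to \vv u$, following the structure of the reduction system in \Cref{tab:Reduction}, and using the validity of the typing rules (\Cref{thm:TypingRulesValidity}) together with the fact that a valid typing judgement already implies $\vv t\ansubst{\sigma}{}\real A$ for every $\sigma\in\sem\Gamma$. The subtlety is not semantic---semantically the result is trivial, as noted just before the statement---but rather \emph{syntactic}: we must show that the reduct $\vv u$ is typable using \emph{only} the rules of \Cref{tab:TypingRules}, by exhibiting an explicit derivation.

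\smallskip
\textbf{Structure of the argument.} First I would reduce to the case of a single elementary step $\vv t\lraneq\vv u$, handling the congruence closure via the \snam{Equiv} rule (a step $\vv t\to\vv u$ unfolds as $\vv t\equiv\vv t'\lraneq\vv u'\equiv\vv u$, and \snam{Equiv} bridges the two ends). Then I would proceed by induction on the derivation of $\vv t\lraneq\vv u$. For the contextual (inference) rules of \Cref{tab:Reduction}---reduction under application, scalar multiplication, sum, \texttt{let}, and \texttt{case}---the inductive hypothesis applies to the immediate subterm that reduces, and one reassembles the derivation using the corresponding typing rule (\snam{App}, \snam{Phase} together with the linearity rules, \snam{Sum}, \snam{LetPair}/\snam{LetTens}, \snam{Case}/\snam{UnitCase}). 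The work is in inspecting which typing rule(s) could have concluded $\TYP{\Gamma}{\vv t}{A}$---this requires a short generation/inversion analysis, complicated by the presence of \snam{Sub}, \snam{Equiv}, and \snam{Phase}, which may appear anywhere in the derivation; one shows these can be permuted or absorbed so that the last ``real'' rule is the structural one matching the shape of $\vv t$.

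\smallskip
\textbf{The base cases.} These are the three principal redexes. For $\beta$-reduction $(\Lam{x}{X}{\vv t})\vv v\lraneq\vv t\ansubst{\vv v/x}{X}$, inversion on \snam{App} and \snam{UnitLam} gives $\TYP{\Gamma,x^{X}{:}A}{\vv t}{B}$ and $\vv v\in\sem A$ (after absorbing \snam{Phase}/\snam{Sub}); we need to turn the context entry $x^X{:}A$ into the substitution $\ansubst{\vv v/x}{X}$, i.e.\ produce a derivation of $\TYP{\Gamma}{\vv t\ansubst{\vv v/x}{X}}{B}$. Here the natural move is a \emph{substitution lemma}: if $\TYP{\Gamma,x^X{:}A}{\vv t}{B}$ is derivable and $\vv v\in\sem A$ is closed, then $\TYP{\Gamma}{\vv t\ansubst{\vv v/x}{X}}{B}$ is derivable---proved by induction on the typing derivation of $\vv t$, using \Cref{lem:distributiveSubstitution} and \Cref{lem:EquivSubstitutions} to commute the substitution past sums and the congruence, and using \snam{Axiom}/\snam{Contr}/\snam{Weak} to handle the variable $x$ itself. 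The \texttt{let} redex is analogous, invoking the pair-substitution clause and splitting on whether the eliminated type is $A\times B$ or $\sharp(A\times B)$. For the \texttt{case} redex $\gencase{\sum_i\alpha_i\vv{v_i}}{\vv{v_1}}{\vv{v_n}}{\vv{t_1}}{\vv{t_n}}\lraneq\sum_i\alpha_i\vv{t_i}$, inversion on \snam{Case}/\snam{UnitCase} gives $\TYP{\Gamma}{\vv{t_i}}{A}$ for each $i$ (and pairwise orthogonality $\SORTH{\Delta}{\vv{t_i}}{\vv{t_j}}{A}$ in the sharp case); in the plain \snam{Case} case one needs $n=1$ essentially, or rather one observes the argument typed at $\basis{X}$ reduces to a single $\vv{v_i}$, so $\sum_i\alpha_i\vv{t_i}$ collapses; in the \snam{UnitCase} case one rebuilds $\TYP{\Gamma,\Delta}{\sum_i\alpha_i\vv{t_i}}{\sharp A}$ via \snam{Sum}. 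One also has to verify the free-variable/strict-domain side condition $\sdom{\Gamma}\subseteq\FV{\vv u}\subseteq\dom\Gamma$ is preserved---this is where linearity matters, and it follows because substitution of a closed value removes exactly the bound variable, and the strict-domain variables, being of $\sharp$-type, must still occur by the premises.

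\smallskip
\textbf{Main obstacle.} The hard part is the substitution lemma for the base cases, specifically making sure it goes through with the \emph{restricted} rule set---in particular that \snam{Contr} and \snam{Weak}, which only allow duplicating/weakening \emph{basis-type} variables $x^X{:}\basis X$, are enough to handle every occurrence pattern of the substituted variable, and that when $\vv v$ is a genuine superposition the decomposition $\vv v\equiv\sum_i\alpha_i\vv{b_i}$ over $X$ (unique by \Cref{thm:UniqueDecomposition}) together with \snam{Sum}/\snam{Phase}/\snam{Equiv} reconstructs the type correctly. A secondary obstacle is the inversion/generation lemma: because \snam{Sub}, \snam{Equiv}, \snam{Phase} are non-structural, one must first prove they can be pushed to the leaves or the root without changing derivability, so that a clean inversion principle is available; this is routine but must be stated carefully to avoid circularity with the induction on reduction.
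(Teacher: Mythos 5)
Your plan follows the paper's proof essentially step for step: reduce to a single elementary $\lraneq$-step with the congruence closure absorbed by \snam{Equiv}, induct on the reduction derivation with the contextual cases handled by the induction hypothesis, discharge the $\beta$- and $\mathsf{let}$-redexes via a syntactic substitution lemma proved by induction (using \snam{Weak}, \snam{Contr}, \snam{Sub} for the variable cases), and close the $\mathsf{case}$ redexes directly from the \snam{Case}/\snam{UnitCase} premises, rebuilding with \snam{Sum} in the superposed case. The only notable difference is that the paper states the substitution lemma with an arbitrary typed argument ($\TYP{\Delta}{\vv v}{A}$, contexts merged to $\Gamma,\Delta$) rather than your closed $\vv v\in\sem{A}$, which is the form actually needed when the applied argument is open (e.g.\ a variable under an $\AbsBasis$-decorated abstraction).
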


\section{Examples}\label{sec:examples}
\subsection{Deutsch's algorithm}\label{subsec:deutsch}
We begin with \emph{Deutsch's algorithm}, a canonical example that highlights
how basis types in the $\lambdaB$ calculus yield more informative typings for
quantum programs.
The algorithm is as follows.
We are given black-box access to an \emph{oracle}~$U_f$ that implements an
unknown Boolean function $f:\{0,1\}\to\{0,1\}$.  
The oracle can only be either \emph{constant} (both inputs map to the same
output) or \emph{balanced} (the two outputs differ).  
Classically, determining which case holds requires two queries to~$f$.  
Deutsch's algorithm decides this with a single query by exploiting quantum
superposition and interference.

Operationally, the oracle~$U_f$ acts as
\(
  U_f:\ket{xy}\mapsto \ket{x}\otimes\ket{y\oplus f(x)}
\),
where $\oplus$ is addition modulo~2.  
The textbook circuit prepares the state~$\ket{+-}$, applies~$U_f$, and
then applies a Hadamard on the first qubit before measuring it.  
The outcome is~$\ket{0}$ if~$f$ is constant and~$\ket{1}$ if~$f$ is balanced.

We begin with a standard implementation of this algorithm in $\lambdaB$.
We first encode the usual gates we will use:
\begin{align*}
  \Hd &:= \Lam{x}{\B}{\case{x}{\ket{0}}{\ket{1}}{\ket{+}}{\ket{-}}},\\
  \mathsf{NOT} &:= \Lam{x}{\B}{\case{x}{\ket{0}}{\ket{1}}{\ket{1}}{\ket{0}}},\\
  \mathsf{CNOT} &:= \Lam{x}{\B}{\Lam{y}{\B}{
    \case{x}{\ket{0}}{\ket{1}}
      {\Pair{\ket{0}}{y}}
      {\Pair{\ket{1}}{\pauliX{y}}}}}.
\end{align*}

We model the four possible oracles $U_f$ (two constant and two balanced):
\begin{align*}
  O_{\mathrm{const}\,0} &:= \Lam{x}{\B}{\Lam{y}{\B}{\Pair{x}{y}}},
  & O_{\mathrm{id}}       &:= \Lam{x}{\B}{\Lam{y}{\B}{\cnot{x}{y}}},\\
  O_{\mathrm{const}\,1} &:= \Lam{x}{\B}{\Lam{y}{\B}{\Pair{x}{(\pauliX{y})}}},
  & O_{\mathrm{flip}}     &:= \Lam{x}{\B}{\Lam{y}{\B}{\cnot{x}{(\pauliX{y})}}}.
\end{align*}

The standard Deutsch term prepares~$\ket{+-}$, calls the oracle, then
applies~$H$ on the first qubit and returns the (classical) first component:
\[
  \mathsf{Deutsch}_{\mathrm{std}} :=
  \Lam{f}{\AbsBasis}{
    \LetP{x}{\B}{y}{\B}
      {(f\,(\Hd\,\ket{0})\,(\Hd\,\ket{1}))}
      {\Pair{\Hd\,x}{y}}
  }.
\]

Each oracle above can be typed as
\(
  \basis{\B}\Arr\basis{\B}\Arr(\basis{\B}\times\basis{\B})
\),
but since the arguments we pass are $\ket{+}$ and $\ket{-}$ (superpositions),
the overall judgement we can derive for the application uses~$\sharp$:
\[
  \TYP{}{\mathsf{Deutsch}_{\mathrm{std}}}
  {(\sharp\basis{\B}\Arr\sharp\basis{\B}\Arr(\sharp\basis{\B}\times\sharp\basis{\B}))
   \Arr \sharp(\basis{\B}\times\basis{\B})}.
\]
This type is correct but coarse: it only guarantees that the result is a
unitary distribution of pairs of booleans.  Operationally we know more: the
first output is actually a basis bit ($\ket{0}$ or~$\ket{1}$) encoding whether
$f$ is constant or balanced, and the second output can be ignored.

Then, we can consider a basis-aware implementation with tighter typing.
The key observation is that the oracle is always called on the fixed state
$\ket{+-}$.  
Thus it is natural to write the program in the
$\XB=\{\ket{+},\ket{-}\}$ basis, letting the types track that we remain in a
basis state at the oracle boundary.

We define the same gates in the $\XB$ basis:
\begin{align*}
  \pauliZXB &:= \Lam{x}{\XB}{\case{x}{\ket{+}}{\ket{-}}{\ket{-}}{\ket{+}}},\\
  \pauliXXB &:= \Lam{x}{\XB}{\case{x}{\ket{+}}{\ket{-}}{\ket{+}}{-1\,\ket{-}}},\\
  \cnotXB &:= \Lam{x}{\XB}{\Lam{y}{\XB}{
    \case{y}{\ket{+}}{\ket{-}}
      {\Pair{x}{\ket{+}}}
      {\Pair{\pauliZXB{x}}{\ket{-}}}
  }}.
\end{align*}

We then rewrite the program and the four oracles in~$\XB$:
\begin{align*}
  \mathsf{Deutsch} &:= \Lam{f}{\AbsBasis}{ \LetP{x}{\XB}{y}{\XB}{(f\,\ket{+}\,\ket{-})} {\case{x}{\ket{+}}{\ket{-}}{\ket{0}}{\ket{1}}}},\\
\end{align*}
\vspace{-2\baselineskip}
\begin{align*}
  O_{\mathrm{const}\,0}^{\XB} &:= \Lam{x}{\XB}{\Lam{y}{\XB}{\Pair{x}{y}}},
  &O_{\mathrm{id}}^{\XB}\ \ &:= \Lam{x}{\XB}{\Lam{y}{\XB}{\cnotXB{x}{y}}},\\
  O_{\mathrm{const}\,1}^{\XB} &:= \Lam{x}{\XB}{\Lam{y}{\XB}{\Pair{x}{(\pauliXXB{y})}}},
  &O_{\mathrm{flip}}^{\XB} &:= \Lam{x}{\XB}{\Lam{y}{\XB}{\cnotXB{x}{(\pauliXXB{y})}}}.
\end{align*}

Now every oracle has the tight type
\(
  \basis{\XB}\Arr\basis{\XB}\Arr(\basis{\XB}\times\basis{\XB})
\),
and the program itself can be typed as
\[
  \TYP{}{\mathsf{Deutsch}}
  {(\basis{\XB}\Arr\basis{\XB}\Arr(\basis{\XB}\times\basis{\XB}))\Arr\basis{\B}}.
\]
Intuitively, the oracle---when fed with $\ket{+-}$---produces a pair of
basis states in~$\XB$ (up to a global phase).  Hence we can treat its output
classically: a single $\mathsf{case}$ on the first component suffices to return
a classical bit in the computational basis, with no need for a $\sharp$-type on
the result.

Both implementations are operationally equivalent: they compute a bit that
decides whether $f$ is constant or balanced.  The difference lies in the
\emph{precision} of their typings.  The standard version, expressed in~$\B$
with explicit Hadamards, forces $\sharp$ on the oracle's interface and thus
yields a result in~$\sharp(\basis{\B}\times\basis{\B})$.  The basis-aware
version states, via types, that the oracle is used on a fixed $\XB$-input and
therefore returns an $\XB$-basis pair; this lets us deterministically extract a
$\B$-basis bit.  In the typing derivation, no variable carries a $\sharp$-type:
(i) we may safely ignore the second qubit, and (ii) the first qubit is
guaranteed to be classical in~$\B$, reflecting the determinism of Deutsch's
algorithm.

\subsection{Quantum teleportation}\label{sec:teleportation}
We now turn to the \emph{quantum teleportation protocol}, a cornerstone of
quantum information theory that exemplifies the manipulation of entangled
states and the transmission of quantum data through classical communication.
Within the $\lambdaB$ calculus, this protocol provides a natural setting to
combine pattern matching, linear handling of qubits, and the
\emph{deferred-measurement principle}.  
Using the $\mathsf{case}$ constructor together with the expressive typing
discipline introduced earlier, we can encode the teleportation process in a
way that remains both syntactically compact and semantically faithful to its
quantum-mechanical counterpart.

The deferred-measurement principle states that any quantum circuit can delay
its measurements without altering the final outcome.  
More precisely, any gate classically controlled by the result of a measurement
is equivalent to another circuit where the control qubit remains unmeasured,
acting coherently on all branches of the superposition.  
Although the $\lambdaB$ calculus has no primitive operation for measurement,
the $\mathsf{case}$ constructor allows us to simulate such classically
controlled gates by branching on basis states.

A canonical example that exploits this principle is the
\emph{quantum teleportation protocol}.  
Two agents (Alice and Bob) share an entangled pair of qubits forming a Bell
state.  
Using this shared entanglement and two bits of classical information,
Alice can transmit an unknown quantum state~$\ket{\psi}$ to Bob without
physically sending the qubit itself.  
The standard circuit implementing the protocol is shown below:
\[
  \Qcircuit @C=1em @R=.5em {
    \lstick{\ket{\psi}} & \qw & \qw & \ctrl{1} & \gate{H} & \meter & \control \cw \\
    \lstick{\ket{0}} & \qw & \targ & \targ & \meter & \control \cw \cwx[1] \\
    \lstick{\ket{0}} & \gate{H} & \ctrl{-1} & \qw & \qw & \targ & \gate{Z} \cwx[-2] \qw & \rstick{\ket{\psi}} \qw
  }
\]
The algorithm first creates the Bell state $\Phi^{+}$ between the second and
third qubits, then performs a Bell-basis measurement on the first and second
qubits.  
Operationally, this measurement is implemented by applying a CNOT gate followed
by a Hadamard on the first qubit (the adjoint of Bell-state preparation), and
then measuring both qubits.  
Depending on the pair of classical outcomes, a Pauli correction ($I$, $X$, $Z$,
or $ZX$) is applied to Bob's qubit to recover the original state~$\ket{\psi}$.

We can simulate the behaviour of this circuit in the $\lambdaB$ calculus by
defining a term that, instead of performing measurements, explicitly
describes the computation corresponding to each branch.  
A possible encoding is:
\begin{align*}
  &\mathsf{Teleport} :=
  \Lam{x}{\B}{
    \LetP{y_1}{\B}{y_2}{\B}{\Phi^{+}}{
      \mathsf{case}\;
        \Pair{x}{y_1}\;
        \mathsf{of}\\
         &\{ \Phi^{+} \mapsto \Pair{\Phi^{+}}{y_2},\;
          \Phi^{-} \mapsto \Pair{\Phi^{-}}{Z\,y_2},\;
	  \Psi^{+} \mapsto \Pair{\Psi^{+}}{X\,y_2},\;
	\Psi^{-} \mapsto \Pair{\Psi^{-}}{ZX\,y_2} \}
    }
  }.
\end{align*}

This term takes the input qubit~$\ket{\psi}$ and pairs it with one half of an
entangled Bell pair.  The $\mathsf{case}$ construct then matches the first two
qubits (the input and Alice's entangled qubit) against the Bell basis, and in
each branch applies the appropriate correction to Bob's qubit to
recover~$\ket{\psi}$.  The resulting term represents the same quantum
transformation as the circuit above but expressed without any explicit
measurement---instead, each branch encodes the coherent superposition of
possible measurement outcomes.

The $\lambdaB$ calculus allows us to abstract both the encoding and decoding
steps in the Bell basis, exploiting the deferred-measurement principle in a
type-safe way.  
Each branch of the $\mathsf{case}$ corresponds to a unitary transformation
preserving linearity and orthogonality, and the overall term has type
\[
  \TYP{}{\mathsf{Teleport}}
  {\sharp\basis{\B}\Arr(\sharp\basis{\Bell}\times\sharp\basis{\B})}.
\]
This type reflects that the protocol operates on superpositions of basis
states, producing a Bell-basis measurement outcome together with the recovered
qubit.  
In this way, the $\lambdaB$ calculus captures both the logical structure of
teleportation and its deferred-measurement semantics within a single, uniform
term language.

\section{Conclusion}\label{sec:conclusion}

In this paper we have explored a quantum-control $\lambda$-calculus equipped
with the additional feature of allowing abstractions to be expressed relative
to arbitrary bases, beyond the canonical one.  

The central mechanism enabling this extension is the decoration of
$\lambda$-ab\-strac\-tions and $\mathsf{let}$-constructors with basis annotations,
together with a modified substitution operation that governs how value
distributions decompose across different bases.  
These additions do not increase the expressive power of the original calculus
on which $\lambdaB$ builds, yet they offer a novel perspective for reasoning
about quantum programs and their behaviour under basis changes.

The reduction system coordinates computation through these extended syntactic
constructs and substitutions.  
A key property is that evaluation commutes with the congruence relation,
ensuring that interpreting a value distribution in a different basis does not
affect the computational result.  
Consequently, it is meaningful to reason about terms modulo basis
congruence.

The benefit of this design becomes clear in the realisability model.
The inclusion of atomic types~$\basis{X}$ enables a direct characterisation of
abstractions representing unitary operators---our main semantic result,
generalising the characterisation from~\cite{DiazcaroGuillermoMiquelValironLICS19}.  
Here, the use of basis types yields a simpler and more transparent proof.  

The second major result is the validity of the typing rules presented in
\Cref{tab:TypingRules}.  
By deriving these rules from the realisability interpretation, we ensure their
soundness and obtain a principled foundation for a typed programming language
based on the calculus.

Finally, we have illustrated the expressive advantages of the system through two
canonical examples.  
In the case of \emph{Deutsch's algorithm}, the use of basis-aware typing allows
the result to be treated classically, reflecting the algorithm's determinism.  
In the case of \emph{quantum teleportation}, we demonstrated how the
$\mathsf{case}$ construct can simulate gates controlled by Bell-basis
measurements, effectively capturing the deferred-measurement principle within
the calculus.

\begin{credits}
\subsubsection{\ackname} 
Supported by the European Union through the MSCA SE project QCOMICAL (Grant Agreement ID: 101182521) 
the Plan France 2030 through the PEPR integrated project EPiQ (ANR-21-PETQ-0007),
and by the Uruguayan CSIC grant 22520220100073UD.

\end{credits}
%
%
%
\bibliographystyle{splncs04}
\bibliography{basisSensitive}

\clearpage

\makeatletter
\renewcommand*\theHsection{\thesection}
\makeatother
\appendix

\section{Omitted proofs from Section~\ref{sec:calculus}}\label{sec:appendixA}

\begin{restatetheorem}[Restatement of \Cref{thm:UniqueDecomposition}]
  \itshape
  If $B$ is an $n$-dimensional basis, then every $n$-dimensional qubit has
  a unique decomposition over $B$.
\end{restatetheorem}
\begin{proof}
  Let $\vv{b_i}$ be the basis vectors of $B$. Suppose
  $\sum_{i=1}^n \alpha_i \vv{b_i}$ and $\sum_{i=1}^n \beta_i \vv{b_i}$
  are two decompositions of $\vv{v}$ over $B$. Then
  \[
    0=\vv{v}-\vv{v}=\sum_{i=1}^{n}(\alpha_i-\beta_i)\vv{b_i}.
  \]
  By linear independence, $\alpha_i=\beta_i$ for all $i$.
\end{proof}

\begin{restatecorollary}[Restatement of \Cref{cor:EquivalentDecomposition}]
  \itshape
  If $\vv{v}\equiv\vv{w}$, then they share the same decomposition over any
  basis $B$.
\end{restatecorollary}
\begin{proof}
  Since $\vv{v}-\vv{w}\equiv\vv{v}-\vv{v}\equiv\vv{w}-\vv{w}$, the same
  argument as in \Cref{thm:UniqueDecomposition} shows that $\vv{v}$ and $\vv{w}$ have the
  same decomposition over $B$.
\end{proof}

\begin{restatelemma}[Restatement of \Cref{lem:distributiveSubstitution}]
  \itshape
  For term distributions $\vv{t_i}$, a value distribution $\vv{v}$, a
  variable $x$, coefficients $\alpha_i\in\C$, and a basis $B$ such that
  $\ansubst{\vv v/x}{B}$ is defined:
  \[
    \Bigl(\sum_i \alpha_i\vv{t_i}\Bigr)\ansubst{\vv v/x}{B}
    \equiv
    \sum_i \alpha_i\vv{t_i} \ansubst{\vv v/x}{B}.
  \]
\end{restatelemma}
\begin{proof}
  Let $B\neq\AbsBasis$ and
  $\vv{v}\equiv\sum_{j=1}^n\beta_j\vv{b_j}$ with each $\vv{b_j}\in B$.
  Then
  \begin{align*}
    \Bigl(\sum_i \alpha_i\vv{t_i}\Bigr)\ansubst{\vv v/x}{B}
    &= \sum_{j=1}^n \beta_j\Bigl(\sum_{i=1}^{n}\alpha_i t_i\Bigr)[\vv{b_j}/x]\\
    &\equiv \sum_{i=1}^{n}\alpha_i\Bigl(\sum_{j=1}^n\beta_j t_i[\vv{b_j}/x]\Bigr)
    = \sum_{i=1}^{n}\alpha_i\vv{t_i}\ansubst{\vv v/x}{B}.
  \end{align*}
  The case $B=\AbsBasis$ is analogous.
\end{proof}

\begin{restatelemma}[Restatement of \Cref{lem:EquivSubstitutions}]
  \itshape
  For value distributions $\vv{v},\vv{w}$, a term distribution $\vv{t}$, and
  an orthonormal basis $B$ such that both
  $\ansubst{\vv{v}/x}{B}$ and $\ansubst{\vv{w}/x}{B}$ are defined:
  if $\vv{v}\equiv\vv{w}$, then
  $\vv{t}\ansubst{\vv{v}/x}{B}
  =\vv{t}\ansubst{\vv{w}/x}{B}$.
\end{restatelemma}
\begin{proof}
  Since $\vv{v}\equiv\vv{w}$, by
  \Cref{cor:EquivalentDecomposition},
  both can be written as
  $\vv{v}\equiv\vv{w}\equiv\sum_{i=1}^{n}\alpha_i\vv{b_i}$ with
  $\vv{b_i}\in B$. Hence
  \[
    \vv{t}\ansubst{\vv{v}/x}{B}
    = \sum_{i=1}^{n}\alpha_i\vv{t}[\vv{b_i}/x]
    = \vv{t}\ansubst{\vv{w}/x}{B}.
  \]
\end{proof}

\section{Omitted proofs from Section~\ref{sec:reduction}}\label{sec:appendixB}

\begin{lemma}[Weak diamond property for $\lraneq$]\label{lem:SquigDiamond}
  Let $\vv{t}, \vv{s_1}, \vv{s_2}$ term distributions such that $\vv{t}\lraneq{s_1}$ and $\vv{t}\lraneq\vv{s_2}$. Then, either there exists a term distribution $\vv{r}$ such that $\vv{s_1}\lraneq\vv{r}$ and $\vv{s_2}\lraneq \vv{r}$. Or, $\vv{s_1}=\vv{s_2}$. Diagrammatically:
  \[
    \begin{tikzcd}
      & \vv{t}
        \arrow[ld,decorate,decoration={snake, amplitude=0.8, segment length=6pt}, ->]
        \arrow[rd,decorate,decoration={snake, amplitude=0.8, segment length=6pt}, ->]
      &\\
      \vv{s_1}\arrow[dr,dashed,decorate,decoration={snake, amplitude=0.8, segment length=6pt}, ->] & &
      \vv{s_2}\arrow[ld,dashed,decorate,decoration={snake, amplitude=0.8, segment length=6pt}, ->] \\
      & \vv{r} &
    \end{tikzcd}
    \quad\text{ Or }\quad
    \begin{tikzcd}
      &[-2em] \vv{t}
        \arrow[ld,decorate,decoration={snake, amplitude=0.8, segment length=6pt}, ->]
        \arrow[rd,decorate,decoration={snake, amplitude=0.8, segment length=6pt}, ->]
      &[-2em]\\
      \vv{s_1}& = &\vv{s_2}\\
    \end{tikzcd}
  \]
\end{lemma}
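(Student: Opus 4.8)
The plan is to proceed by a case analysis on the positions of the two redexes contracted in $\vv t\lraneq\vv{s_1}$ and $\vv t\lraneq\vv{s_2}$. Since $\lraneq$ is defined by the elementary rules of \Cref{tab:Reduction} together with the congruence-free contextual closure rules listed there, every step fires at a definite occurrence inside $\vv t$. I would organise the argument around the relative position of these two occurrences: (i) they are the \emph{same} occurrence; (ii) they are \emph{disjoint} (neither is a subterm of the other); (iii) one is \emph{nested} inside the other. For case (i), I first observe that the three elementary rules ($\beta$, $\mathsf{let}$, $\mathsf{case}$) are each deterministic in their output — the $\beta$ rule produces $\vv t\ansubst{\vv v/x}{X}$, which is well-defined (hence unique) by the definition of basis-dependent substitution; the $\mathsf{let}$ rule similarly; and the $\mathsf{case}$ rule produces $\sum_i\alpha_i\vv{t_i}$, uniquely determined once the decomposition of the scrutinee over the pattern vectors is fixed, which is unique by \Cref{thm:UniqueDecomposition} (here I need that the scrutinee is already syntactically in the form $\sum_i\alpha_i\vv{v_i}$, so the congruence-free rule applies without ambiguity). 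The contextual rules propagate a step inside a unique immediate subterm, so same-occurrence forces $\vv{s_1}=\vv{s_2}$, landing in the second alternative of the statement.

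For case (ii), disjoint occurrences: the two redexes live in non-overlapping subterms, so contracting one does not destroy or duplicate the other. I would make this precise by noting that each contextual closure rule in \Cref{tab:Reduction} isolates a single active position (the left of an application, a scalar multiplicand, one summand, the scrutinee of a $\mathsf{let}$ or $\mathsf{case}$), and that contracting a redex strictly inside one of the \emph{inactive} positions leaves the active redex syntactically untouched. Hence both steps commute: $\vv{s_1}\lraneq\vv r$ by redoing the second step, $\vv{s_2}\lraneq\vv r$ by redoing the first, with $\vv r$ the common result of contracting both. One subtlety: the $\mathsf{case}$ elementary rule requires the scrutinee to be a literal linear combination $\sum_i\alpha_i\vv{v_i}$ of the pattern values, so a reduction inside a \emph{branch} $\vv{t_i}$ is always disjoint from the head redex, but a reduction inside the scrutinee could in principle be what is needed to expose the $\mathsf{case}$ redex — that situation is a nesting, handled in case (iii), not here.

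Case (iii), nesting, is where the real work lies and is the step I expect to be the main obstacle. If the outer redex is a $\mathsf{case}$ (or $\mathsf{let}$, or $\beta$) and the inner step occurs \emph{strictly inside} the subterm that the outer elementary rule requires to be in a specific form — a linear combination of pattern vectors, a pair value, a value — then the inner step cannot actually be a $\lraneq$-step: values and linear combinations of values contain no redexes, since $\lraneq$ never reduces under $\lambda$, under pairs, or inside a $\mathsf{case}$/$\mathsf{let}$ body, and the congruence (which \emph{would} rewrite such combinations) is explicitly \emph{not} part of $\lraneq$. So the only genuine nestings are: the inner redex sits in an \emph{argument}-type position that the outer rule's contextual closure also governs, e.g. both steps want to act on the left subterm of an application $st$, one reducing $s$ one step and the other — impossible, since only one contextual rule applies per syntactic form and it picks a single immediate subterm, reducing the problem back to case (i) one level down. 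I would therefore argue that, after peeling off matching contextual closure layers, either the two occurrences coincide (case (i)) or they become disjoint (case (ii)), so case (iii) proper is vacuous — but establishing this cleanly requires a careful induction on the structure of $\vv t$, tracking that the elementary rules only fire when their principal subterm is already fully a value distribution (no redex inside), and that is the delicate bookkeeping the proof must carry out. Finally I would remark that the side condition $\alpha\neq 0$ on the scalar rule is exactly what keeps $\lraneq$ from branching spuriously, so it is used implicitly throughout the case (i) uniqueness arguments.
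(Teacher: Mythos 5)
Your proposal is correct and follows essentially the same route as the paper: the paper's proof simply asserts that $\lraneq$ is deterministic on pure terms and then matches the two steps against the summands of the distribution, concluding $\vv{s_1}=\vv{s_2}$ when they contract the same summand and joining in one step on each side when they contract different summands. Your occurrence-based case analysis (same position, disjoint summands, nesting shown vacuous because the principal arguments of the elementary rules are value distributions and hence $\lraneq$-normal) is just a more explicit justification of the determinism claim the paper takes for granted, so nothing essential differs.
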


\begin{proof}
  The proof follows from the fact that the $\lraneq$ reduction is deterministic over pure values. And, in case of term distributions, we only need to match the reduction on the corresponding sub-terms. Let $\vv{t}=\sum\limits_{i=1}^n \alpha_i \vv{t_i}$, $\vv{s_1}=\sum\limits_{i=1; i\neq j}^n \alpha_i \vv{t_i} + \alpha_j\vv{s_j}$, and $\vv{s_2}=\sum\limits_{i=1; i\neq k}^n \alpha_i \vv{t_i}+\alpha_k\vv{s_k}$. Where $\vv{t_j}\lraneq\vv{s_j}$ and $\vv{t_k}\lraneq\vv{s_k}$. If $j=k$ we are done, so we consider the case where $j\neq k$. Diagrammatically:
  \[
    \begin{tikzcd}
      &[-1em] \sum\limits_{i=1}^n \alpha_i \vv{t_i}
        \arrow[ld,decorate,decoration={snake, amplitude=0.8, segment length=6pt}, ->]
        \arrow[rd,decorate,decoration={snake, amplitude=0.8, segment length=6pt}, ->]
      &[-1em]\\
      \sum\limits_{i=1; i\neq j}^n \alpha_i \vv{t_i} + \alpha_j\vv{s_j}
      \arrow[dr,dashed,decorate,decoration={snake, amplitude=0.8, segment length=6pt}, ->] & &
      \sum\limits_{i=1; i\neq k}^n \alpha_i \vv{t_i} + \alpha_k\vv{s_k}
      \arrow[ld,dashed,decorate,decoration={snake, amplitude=0.8, segment length=6pt}, ->] \\
      & \sum\limits_{i=1; i\neq j,k}^n \alpha_i \vv{t_i} + \alpha_j\vv{s_j}+\alpha_k\vv{s_k} &
    \end{tikzcd}
  \]
\end{proof}

\begin{restatetheorem}[Restatement of \Cref{thm:confluence}]
  \itshape
  Let $\vv{t}$ and $\vv{s}$ be closed term distributions with
  $\vv{t}\equiv\vv{s}$. If $\vv{t}\lraneq\vv{t'}$ and $\vv{s}\lraneq\vv{s'}$,
  then there exist term distributions $\vv{r_1}$ and $\vv{r_2}$ such that
  $\vv{t'}\eval\vv{r_1}$, $\vv{s'}\eval\vv{r_2}$, and
  $\vv{r_1}\equiv\vv{r_2}$.
  Diagrammatically:
  \[
    \begin{tikzcd}
      & \vv{t}
        \arrow[ld,decorate,decoration={snake, amplitude=0.8, segment length=6pt}, ->]
        &[-2.5em] \equiv
        &[-2.5em] \vv{s}
        \arrow[rd,decorate,decoration={snake, amplitude=0.8, segment length=6pt}, ->]
        &\\
      \vv{t'}\arrow[dr,"*",pos=0.9] & & & &
      \vv{s'}\arrow[ld,"*"',pos=0.9] \\
      & \vv{r_1} & \equiv & \vv{r_2} &
    \end{tikzcd}
  \]
\end{restatetheorem}
\begin{proof}
  We do a case-by-case analysis over the relation $\vv{t}\equiv\vv{s}$.
  \begin{description}
    \item[$\vv{t_1} + 0\vv{t_2}\equiv\vv{t_1}$:] This case follows from Lemma \ref{lem:SquigDiamond} since the reductions can only be performed in $\vv{t_1}$.
    
    \item[$0\vv{t}\equiv\vv{0}$:] The term distributions cannot reduce on either side of the equivalence.
    
    \item[$1\vv{t}\equiv\vv{t}$:] This case follows from Lemma \ref{lem:SquigDiamond}.
    
    \item[$\alpha(\beta \vv{t})\equiv\delta\vv{t}$:] This case follows from Lemma \ref{lem:SquigDiamond}.
    
    \item[$\vv{t_1}+\vv{t_2}\equiv\vv{t_2}+\vv{t_1}$:] This case follows from Lemma \ref{lem:SquigDiamond}. We just have to match the reductions on both sides of the equivalence.
    
    \item[$\vv{t_1}+(\vv{t_2}+\vv{t_3})\equiv(\vv{t_1}+\vv{t_2})+\vv{t_3}$:] This case follows from Lemma \ref{lem:SquigDiamond}. We just have to match the reductions on both sides of the equivalence.
    
    \item[$(\alpha+\beta)\vv{t}\equiv\vv{t}$:] We start analyzing the coefficients. If $\alpha+\beta = 0$, then there cannot be a reduction on the left hand-side. If $(\alpha + \beta)\neq 0$ and either $\alpha=0$ or $\beta=0$, then we are on a particular case of $\vv{t_1} + 0\vv{t_2}\equiv\vv{t_1}$ with $\vv{t_1}=\vv{t_2}$. Otherwise, we match the reductions on both sides of the equivalence with Lemma \ref{lem:SquigDiamond}.
    
    \item[$\alpha(\vv{t_1}+\vv{t_2})\equiv\alpha\vv{t_1}+\alpha\vv{t_2}$:] If $\alpha=0$, then the term distributions cannot reduce on either side of the equivalence. Otherwise, we match the reductions on both sides of the equivalence with Lemma \ref{lem:SquigDiamond}.
    
    \item[$\vv{t} (\alpha\vv{s})\equiv\alpha(\vv{t}\vv{s})$:] If $\alpha=0$, then there is no reduction possible on the right-hand side. If there is an internal reduction on either $\vv{s}$ or $\vv{t}$, then we match the reductions on both sides of the equivalence with Lemma \ref{lem:SquigDiamond}.
    
    If $\vv{t} = (\Lam{x}{B}{\vv{t_1}})$ and $\vv{s}=\vv{v}$ and $\vv{t_1}\ansubst{\vv{v}/x}{B}$, is defined then (we consider the case $B\neq\AbsBasis$):
    \begin{align*}
      (\Lam{x}{B}{\vv{t_i}}) (\alpha\vv{v}) &\lraneq \vv{t_1}\ansubst{\alpha\vv{v}/x}{B}\\
      &= \sum_{i=1}^{n} \alpha\beta_i \vv{t_1}[\vv{b_i}/x]\quad \text{with }\vv{v}\equiv\sum_{i=1}^{n} \beta_i \vv{b_i} \text{ with } \vv{b_i}\in B\\
    \end{align*}
    On the other side:
    \begin{align*}
      \alpha ((\Lam{x}{B}{\vv{t_1}}) \vv{v}) &\lraneq \alpha(\vv{t_1}\ansubst{\vv{v}}{B})\\
      &=\alpha(\sum_{i=1}^{n} \beta_i \vv{t_1}[\vv{b_i}/x])\quad \text{with }\vv{v}\equiv\sum_{i=1}^{n} \beta_i \vv{b_i} \text{ with } \vv{b_i}\in B\\
    \end{align*}

    And we have that both terms are equivalent. The case for $B=\AbsBasis$ is similar.

    \item[$(\alpha\vv{t})\vv{s}\equiv\alpha(\vv{t}\vv{s})$:] If $\alpha=0$, then there is no reduction possible on the right-hand side. If there is an internal reduction on either $\vv{s}$ or $\vv{t}$, then we match the reductions on both sides of the equivalence with Lemma \ref{lem:SquigDiamond}. There are no other possible redexes since the abstraction must be a pure value to reduce on the left hand-side.
    
    \item[$(\vv{t}+\vv{s})\vv{r}\equiv \vv{t}\vv{s} + \vv{t}\vv{r}$:] If there is an internal reduction on either $\vv{t}$,$\vv{s}$ or $\vv{r}$, then we match the reductions on both sides of the equivalence with Lemma \ref{lem:SquigDiamond}. There are no other possible redexes since the abstraction must be a pure value to reduce on the left hand-side.
    
    \item[$\vv{t}(\vv{s}+\vv{r})\equiv\vv{t}\vv{s} + \vv{t}\vv{r}$:] If there is an internal reduction on either $\vv{t}$, $\vv{s}$ or $\vv{r}$, then we match the reductions on both sides of the equivalence with Lemma \ref{lem:SquigDiamond}.
    
    If $\vv{t} = (\Lam{x}{B}{\vv{t_1}})$, $\vv{s}=\vv{v}$, and $\vv{r}=\vv{w}$ with
    $\vv{t_1}\ansubst{\vv{v}/x}{B}$ and $\vv{t_1}\ansubst{\vv{w}/x}{B}$ defined then (we consider the case where $B\neq\AbsBasis$):
    \begin{align*}
      (\Lam{x}{B}{\vv{t_1}}) (\vv{v}+\vv{w}) &\lraneq \vv{t_1}\ansubst{\vv{v}+\vv{w}/x}{B}\\
      &=\sum_{i=1}^{n}(\alpha_i+\beta_i)\vv{t_1}[\vv{b_i}/x]\\
      &\quad\text{where }\vv{v}\equiv\sum_{i=1}^{n}\alpha_i\vv{b_i},\vv{w}\equiv\sum_{i=1}^{n}\beta_i\vv{b_i}\text{ with }\vv{b_i}\in B 
    \end{align*}

    On the other side:
    \begin{align*}
      (\Lam{x}{B}{\vv{t_1}}) \vv{v} + (\Lam{x}{B}{\vv{t_1}}) \vv{w} &\lraneq \vv{t_1}\ansubst{\vv{v}/x}{B} + (\Lam{x}{B}{\vv{t_1}}) \vv{w}\\
      &\lraneq \vv{t_1}\ansubst{\vv{v}/x}{B} + \vv{t_1}\ansubst{\vv{w}/x}{B}\\
      &=\sum_{i=1}^{n}(\alpha_i)\vv{t_1}[\vv{b_i}/x] + =\sum_{i=1}^{n}(\alpha_i)\vv{t_1}[\vv{b_i}/x]\\
      &\quad\text{where }\vv{v}\equiv\sum_{i=1}^{n}\alpha_i\vv{b_i},\vv{w}\equiv\sum_{i=1}^{n}\beta_i\vv{b_i}\text{ with }\vv{b_i}\in B 
    \end{align*}

    And we have that both terms are equivalent. The case for $B=\AbsBasis$ is similar.

    \item[$\LetP{x_1}{B_1}{x_2}{B_2}{(\alpha \vv{t})}{\vv{s}}\equiv\alpha(\LetP{x_1}{A}{x_2}{B}{\vv{t}}{\vv{s}})$:] If $\alpha=0$, then there is no reduction possible on the right-hand side. If there is an internal reduction on either $\vv{s}$ or $\vv{t}$, then we match the reductions on both sides of the equivalence with Lemma \ref{lem:SquigDiamond}.
    
    If $\vv{t}=\vv{v}$ and $\vv{s}\ansubst{\vv{v}/x_1\otimes x_2}{B_1\otimes B_2}$ is defined, then (we consider $B_1,B_2\neq\AbsBasis$):
    \begin{align*}
    \LetP{x_1}{B_1}{x_2}{B_2}{(\alpha \vv{v})}{&\vv{s}}\lraneq \vv{s}\ansubst{\alpha\vv{v}/x_1\otimes x_2}{B_1\otimes B_2}\\
    &=\sum_{i=1}^{n}\alpha\beta_i \vv{s}[\vv{v_i}/x_1][\vv{w_i}/x_2]\\ 
    &\text{where: }\vv{v}\equiv\sum_{i=1}^n\beta_i\Pair{\vv{v_i}}{\vv{w_i}} \text{ with } \vv{v_i}\in B_1, \vv{w_i}\in B_2
    \end{align*}
    On the other side;
    \begin{align*}
    \alpha(\LetP{x_1}{B_1}{x_2}{B_2}{\vv{v}}{&\vv{s}})\lraneq \alpha(\vv{s}\ansubst{\vv{v}/x_1\otimes x_2}{B_1\otimes B_2})\\
    &=\alpha\sum_{i=1}^{n}\beta_i \vv{s}[\vv{v_i}/x_1][\vv{w_i}/x_2]\\ 
    &\text{where: } \vv{v}\equiv\sum_{i=1}^n\beta_i\Pair{\vv{v_i}}{\vv{w_i}} \text{ with } \vv{v_i}\in B_1, \vv{w_i}\in B_2
    \end{align*}

    And we have that both terms are equivalent. The case for $B_1,B_2=\AbsBasis$ are similar.

    \item[\parbox{.75\linewidth}{\begin{align*}
      &\LetP{x_1}{B_1}{x_2}{B_2}{\vv{t}+\vv{s}}{\vv{r}}\equiv\\
      &(\LetP{x_1}{B_1}{x_2}{B_2}{\vv{t}}{\vv{r}}) +
      (\LetP{x_1}{B_2}{x_2}{B_2}{\vv{s}}{\vv{r}})
      \end{align*}}:]\hfill\\
      If there is an internal reduction on either $\vv{t}, \vv{s}$ or $\vv{r}$, then we match the reductions on both sides of the equivalence with Lemma \ref{lem:SquigDiamond}.

      If $\vv{t}=\vv{v}$ and $\vv{s}=\vv{w}$ with $\vv{r}\ansubst{\vv{v}/x_1\otimes x_2}{B_1\otimes B_2}$ and $\vv{r}\ansubst{\vv{v}/x_1\otimes x_2}{B_1\otimes B_2}$ defined, then (we consider $B_1,B_2\neq\AbsBasis$):
      \begin{align*}
      \LetP{x_1}{B_1}{x_2}{B_2}{\vv{v}+\vv{w}}{\vv{r}}&\lraneq\vv{r}\ansubst{\vv{v}+\vv{w}/x_1\otimes x_2}{B_1\otimes B_2}\\
      &=\sum_{i=1}^n (\alpha_i+\beta_i) \vv{r}[\vv{v_i}/x_1][\vv{w_i}/x_2]\\
      &\text{where: }\vv{v}\equiv\sum_{i=1}^n\alpha_i\Pair{\vv{v_i}}{\vv{w_i}} \text{ with } \vv{v_i}\in B_1, \vv{w_i}\in B_2\\
      &\text{and: }\vv{w}\equiv\sum_{i=1}^n\beta_i\Pair{\vv{v_i}}{\vv{w_i}} \text{ with } \vv{v_i}\in B_1, \vv{w_i}\in B_2\\
      \end{align*}

      On the other side:
      \begin{align*}
      &(\LetP{x_1}{B_1}{x_2}{B_2}{\vv{t}}{\vv{r}}) + (\LetP{x_1}{B_2}{x_2}{B_2}{\vv{s}}{\vv{r}})\\
      &\lraneq\vv{r}\ansubst{\vv{v}/x_1\otimes x_2}{B_1\otimes B_2} + (\LetP{x_1}{B_2}{x_2}{B_2}{\vv{s}}{\vv{r}})\\
      &\lraneq\vv{r}\ansubst{\vv{v}/x_1\otimes x_2}{B_1\otimes B_2} + \vv{r}\ansubst{\vv{w}/x_1\otimes x_2}{B_1\otimes B_2}\\
      &=\sum_{i=1}^n\alpha_i\vv{r}[\vv{v_i}/x_1][\vv{w_i}/x_2] + \sum_{i=1}^n \beta_i \vv{r}[\vv{v_i}/x_1][\vv{w_i}/x_2]\\
      &\text{where: }\vv{v}\equiv\sum_{i=1}^n\alpha_i\Pair{\vv{v_i}}{\vv{w_i}} \text{ with } \vv{v_i}\in B_1, \vv{w_i}\in B_2\\
      &\text{and: }\vv{w}\equiv\sum_{i=1}^n\beta_i\Pair{\vv{v_i}}{\vv{w_i}} \text{ with } \vv{v_i}\in B_1, \vv{w_i}\in B_2\\
      \end{align*}
        
      And we have that both terms are equivalent. The case for $B_1,B_2=\AbsBasis$ are similar.

    \item[\parbox{.55\linewidth}{\begin{align*}
      &\gencase{\alpha \vv{t}}{\vv{v_1}}{\vv{v_n}}{\vv{s_1}}{\vv{s_n}}\equiv\\
      &\alpha(\gencase{\vv{t}}{\vv{v_1}}{\vv{v_n}}{\vv{s_1}}{\vv{s_n}})
    \end{align*}}:] \hfill\\
    
    If $\alpha=0$, then there is no reduction possible on the right hand-side. If there are internal reductions on $\vv{t}$, then we match on both sides of the equivalence with Lemma \ref{lem:SquigDiamond}.
    
    If $\vv{t}=\vv{v}\equiv\sum_{i=1}^n\beta_i \vv{v_i}$. Then:
    \begin{align*}
      \gencase{\alpha \vv{t}}{\vv{v_1}}{\vv{v_n}}{\vv{s_1}}{\vv{s_n}}\lraneq \sum_{i=1}^n\alpha\beta_i \vv{s_i}
    \end{align*}
    On the other side:
    \begin{align*}
      \alpha(\gencase{\vv{t}}{\vv{v_1}}{\vv{v_n}}{\vv{s_1}}{\vv{s_n}})\lraneq \alpha\sum_{i=1}^n\beta_i \vv{s_i}
    \end{align*}
    And we have that both terms are equivalent.

    \item[\parbox{.55\linewidth}{\begin{align*}
      &\gencase{(\vv{t}+\vv{s})}{\vv{v_1}}{\vv{v_n}}{\vv{r_1}}{\vv{r_n}}\equiv\\ 
      &\gencase{\vv{t}}{\vv{v_1}}{\vv{v_n}}{\vv{r_1}}{\vv{r_n}}+\\
      &\gencase{\vv{s}}{\vv{v_1}}{\vv{v_n}}{\vv{r_1}}{\vv{r_n}}  
    \end{align*}}:]\hfill\\

    If there is an internal reduction on either $\vv{t}$ or $\vv{s}$, then we match the reductions on both sides of the equivalence with Lemma \ref{lem:SquigDiamond}.

    If $\vv{t}=\vv{v}\equiv\sum_{i=1}^n\alpha_i \vv{v_i}$, and $\vv{s}=\vv{w}\equiv\sum_{i=1}^n\beta_i \vv{v_i}$. Then:
    \begin{align*}
      \gencase{(\vv{t}+\vv{s})}{\vv{v_1}}{\vv{v_n}}{\vv{r_1}}{\vv{r_n}}\lraneq \sum_{i=1}^n (\alpha_i+\beta_i) \vv{r_i}
    \end{align*}
    On the other side:
    \begin{align*}
      &\gencase{\vv{t}}{\vv{v_1}}{\vv{v_n}}{\vv{r_1}}{\vv{r_n}}+\gencase{\vv{s}}{\vv{v_1}}{\vv{v_n}}{\vv{r_1}}{\vv{r_n}}\\
      &\lraneq \sum_{i=1}^n \alpha_i+ \vv{r_i} + \gencase{\vv{s}}{\vv{v_1}}{\vv{v_n}}{\vv{r_1}}{\vv{r_n}}\\
      &\lraneq \sum_{i=1}^n \alpha_i \vv{r_i} + \sum_{i=1}^n \beta_i \vv{r_i}\\
    \end{align*}

    And we have that both terms are equivalent.
  \end{description}
  
\end{proof}

\section{Omitted proofs from Section~\ref{sec:model}}\label{sec:appendixC}

\begin{restatetheorem}[Restatement of \Cref{thm:SharpCharacterization}]
  \itshape
  The interpretation of a type~$\sharp A$ contains precisely the
  norm-$1$ linear combinations of values in~$\sem{A}$:
  \[
    \sem{\sharp A}
    = (\sem{A}^\bot)^\bot
    = \Span(\sem{A}) \cap \Sph.
  \]
\end{restatetheorem}
\begin{proof}
  Proof by double inclusion.
  \begin{description}
    \item[$\Span(\sem{A})\cap\Sph\subseteq (\sem{A}^\bot)^\bot$:] Let $\vv{v}\in\Span(\sem{A})\cap\Sph$. Then $\vv{v}$ is of the form $\sum_{i=1}^{n}\alpha_i \vv{v_i}$ with $\vv{v_i}\in\sem{A}$. Taking $\vv{w}\in\sem{A}^\bot$, we examine the inner product:
    
    \begin{align*}
    \scal{\vv{v}}{\vv{w}} &= \scal{\sum_{i=1}^{n}\alpha_i \vv{v_i}}{\vv{w}}\\
    &= \sum_{i=1}^{n}\overline{\alpha_i}\scal{\vv{v_i}}{\vv{w}}=0
    \end{align*}

    Then $\vv{v}\in(\sem{A}^\bot)^\bot$.

    \item[$(\sem{A}^\bot)^\bot\subseteq \Span(\sem{A})\cap\Sph$:] Reasoning by contradiction, we assume that there is a $\vv{v}\in(\sem{A}^\bot)^\bot$ such that $v\not\in\Span(\sem{A})\cap\Sph$. Since $\vv{v}\not\in\Span(\sem{A})$, $\vv{v}=\vv{w_1} + \vv{w_2}$ such that $\vv{w_1}\in\Span{\sem{A}}$ and $\vv{w_2}$ is a non-null vector which cannot be written as a linear combination of elements of $\sem{A}$. In other words, $\vv{w_2}\in\sem{A}^\bot$. Taking the inner product:
    \[
    \scal{\vv{v}}{\vv{w_2}} = \scal{\vv{w_1}+\vv{w_2}}{\vv{w_2}} = \|\vv{w_2}\|\neq 0
    \]
    Then $\vv{v}\not\in(\sem{A}^\bot)^\bot$. The contradiction stems from assuming $\vv{v}\not\in\Span{\sem{A}}\cap\Sph$.\qedhere
  \end{description}
\end{proof}

\begin{restatetheorem}[Restatement of \Cref{thm:IdempotentSharp}]
  \itshape
  The~$\sharp$ operator is idempotent; that is,
  $\sem{\sharp A} = \sem{\sharp(\sharp A)}$.
\end{restatetheorem}
\begin{proof}
  We want to prove that $(((\comp{\sem{A}})^\bot)^\bot)^\bot = (\comp{\sem{A}})^\bot$. For ease of reading, we will write $\comp[n]{A}$ for $n$ successive applications of the operation $\bot$.

  \begin{description}
    \item[$A\subseteq A^{\bot^2}$:] Let $\vv{v}\in A$. Then, for all $\vv{w}\in\comp{A}$, $\scal{\vv{v}}{\vv{w}} = 0$. Then $\vv{v}\in\comp[2]{A}$. With this we have $A\subseteq\comp[2]{A}$.
    
    \item[$A^{\bot^3}\subseteq \comp{A}$:] Let $\vv{u}\in \comp[3]{A}$. Then, for all $\vv{v}\in\comp[2]{A}$, $\scal{\vv u}{\vv v} = 0$. Since we have shown that $A\subseteq \comp[2]{A}$, we have that for all $\vv{w}\in A$, $\scal{\vv u}{\vv w} = 0$. Then $\vv u\in\comp{A}$. With this we have $\comp[3]{A}\subseteq \comp{A}$.
  \end{description}

  With these two inclusions we have that $\comp{A}=\comp[3]{A}$. So we conclude that: $\sem{\sharp(\sharp A)} = \comp[4]{A} = \comp[2]{A} = \sem{\sharp A}$ \qedhere
\end{proof}

\begin{restatetheorem}[Restatement of \Cref{prop:UnitaryTypes}]
  \itshape
  For every type~$A$, $\sem{A}\subseteq\Sph$.
\end{restatetheorem}

\begin{proof}
  Proof by induction on the shape of $A$. Since by definition, $\sem{\basis{X}}$, $\sem{A\Arr B}$ and $\sem{\sharp{A}}$ are built from values in $\Sph$ the only case we need to examine is $\sem{A\times B}$.
  
  Let $\vv v = \sum_{i=0}^{n} \alpha_i v_i \in\sem{A}$ and $\vv w = \sum_{j=0}^{m} \beta_j w_j$ where every $v_i$ are pairwise orthogonal, same for $w_j$. Then:
     
  \[(\vv v, \vv w) = \sum_{i=0}^{n} \sum_{j=0}^{m} \alpha_i\beta_j (v_i,w_j)\]
  
  So we have: 
  \[\|\Pair{\vv v}{\vv w}\| = \sqrt{\sum_{i=1}^n\sum_{j=1}^{m} |\alpha_i\beta_j|^2} = \sqrt{\sum_{i=1}^n |\alpha_i|^2 \sum_{j=1}^{m} |\beta_j|^2}\]

  Since both $\vv v\in\sem{A}$ and $\vv w\in\sem{B}$, by inductive hypothesis, we have that $\|\vv v\| = \| \vv w \| = 1$. Which is to say $\sum_{i=1}^{n} |\alpha_i|^2 = \sum_{j=1}^{m} |\beta_j| = 1$. So we conclude $\|\Pair{\vv{v}}{\vv{w}}\| = 1$.
  \qed
\end{proof}

\begin{restatelemma}[Restatement of \Cref{lem:BasesIso}]
  Let $X$ and $Y$ be orthonormal bases of the same finite
  dimension, and let $\Lam{x}{{X}}{\vv t}$ be a closed $\lambda$-abstraction.
  Then $\Lam{x}{{X}}{\vv t}\in\sem{\sharp\basis{X}\Arr\sharp\basis{Y}}$
  if and only if 
  for all $\vv{v_i},\vv{v_j}\in\sem{\basis{X}}$,
  there exist value distributions
  $\vv{w_i},\vv{w_j}\in\sem{\sharp\basis{Y}}$ such that,
  \[
    \vv{t}[\vv{v_i}/x]\eval\vv{w_i}
    \quad\text{and}\quad
    \vv{t}[\vv{v_j}/x]\eval\vv{w_j},
    \quad\text{with } 
    \vv{w_i}\perp\vv{w_j}\text{ whenever }i\neq j.
  \]
\end{restatelemma}
\begin{proof}
  \textit{The condition is necessary:} Suppose that $\Lam{x}{{X}}{\vv{t_k}}\in\sem{\sharp\basis{X}\Arr\sharp\basis{Y}}$, thus $\forall \vv{v_i}\in\sem{\sharp\basis{X}},\ \vv{t}\ansubst{\vv{v_i}/x}{X}\eval\vv{w_i}\in\sem{\sharp\basis{Y}}$. It remains to be seen that $\vv{w_i} \perp \vv{w_j}$ if $i\neq j$. For that, we consider $\alpha_i\in\C$ such that $\sum_{i=1}^n |\alpha_i|^2 = 1$. By linear application on the basis $X$ we observe that:
  \begin{align*}
    (\Lam{x}{{X}}{\vv{t}})(\sum_{i=1}^n \alpha_i \vv{v_i}) &\lra \vv t\ansubst{\sum_{i=1}^n \alpha_i \vv{v_i}/x}{X}
    = \sum_{i=1}^{n} \alpha_i \vv{t}[\vv{v_i}/x] 
    \eval \sum_{i=1}^n \alpha_i \vv{w_i}
  \end{align*}

  But since $\sum_{i=1}^n \alpha_i \vv{v_i}\in\sem{\sharp A}$, then $\sum_{i=1}^n \alpha_i \vv{w_i}\in\sem{\sharp B}$ too. Which implies $\|\sum_{i=1}^n \alpha_i \vv{w_i}\|=1$. Therefore:
  \begin{align*}
    1 = \|\sum_{i=1}^n \alpha_i \vv{w_i}\| &= \scal{\sum_{i=1}^n \alpha_i \vv{w_i}}{\sum_{j=1}^n \alpha_j \vv{w_j}}\\
    &=\sum_{i=1}^n |\alpha_i|^2 \scal{\vv{w_i}}{\vv{w_i} } + \sum_{i,j=1; i\neq j}^n \bar{\alpha_i}\alpha_j \scal{\vv{w_i}}{\vv{w_j}}\\
    &=\sum_{i=1}^n |\alpha_i|^2 \scal{\vv{w_i}}{\vv{w_i} } + \sum_{i,j=1; i<j}^n 2~\Rpart{\bar{\alpha_i}\alpha_j \scal{\vv{w_i}}{\vv{w_j}}}\\
    &=\sum_{i=1}^n |\alpha_i|^2 \|\vv{w_i}\|^2 + 2\sum_{i,j=1; i<j}^n \Rpart{\bar{\alpha_i}\alpha_j \scal{\vv{w_i}}{\vv{w_j}}}\\
    &=\sum_{i=1}^n |\alpha_i|^2 + 2\sum_{i,j=1; i<j}^n\Rpart{\bar{\alpha_i}\alpha_j \scal{\vv{w_i}}{\vv{w_j}}}\\
    &= 1 + 2\sum_{i,j=1; i<j}^n \Rpart{\bar{\alpha_i}\alpha_j \scal{\vv{w_i}}{\vv{w_j}}}
  \end{align*}

  And thus we are left with $\sum_{i,j=1; i<j}^n \Rpart{\bar{\alpha_i}\alpha_j \scal{\vv{w_i}}{\vv{w_j}}} = 0$. Taking $\alpha_{i'} = \alpha_{j'} = \frac{1}{\sqrt{2}}$ with $0$ for the rest of coefficients, we have $\Rpart{\scal{\vv{w_{i'}}}{\vv{w_{j'}}}} = 0$ for any two arbitrary $i'$ and $j'$. In the same way, taking $\alpha_{i'} = \frac{1}{\sqrt{2}}$ and $\alpha_{j'}=\frac{i}{\sqrt{2}}$ with $0$ for the rest of the coefficients, we have $\Ipart{\scal{\vv{w_{i'}}}{\vv{w_{j'}}}} = 0$ for any two arbitrary $i'$ and $j'$. Finally, we can conclude that $\scal{\vv{w_i}}{\vv{w_j}}=0$ if $i\neq j$.

  \textit{The condition is sufficient:} Suppose that there are $\vv{w_i}\in\sem{\sharp\basis{Y}}$ such that for every $\vv{v_i}\in\sem{\basis{X}}$:
  \[
    \vv t[\vv{v_i}/x] \eval \vv{w_i} \perp \vv{w_j} \lave \vv t[\vv{v_j}/x]\qquad \text{If } i\neq j
  \]
  Given any $\vv u\in\sem{\sharp\basis{X}}$ we have that $\vv u = \sum_{i=1}^n \alpha_i \vv{v_i}$ with $\sum_{i=1}^n |\alpha_i|^2 = 1$ and $\vv{v_i}\in\sem{\basis{X}}$. Then 
  \[
    (\Lam{x}{{X}}{\vv t}) \vv u \lra \vv{t_k}\ansubst{\vv u/x}{X}=\sum_{i=1}^{n}\alpha_i \vv{t}[\vv{v_i}/x]\eval\sum_{i=1}^n \alpha_i\vv{w_i}
  \]

  We have that for each $i$, $\vv{w_i}\in\sem{\sharp\basis{Y}}$. In order to show that $(\Lam{x}{A}{\vv t})\vv u\real\sharp\basis{Y}$ we still have to prove that $\|\sum_{i=1}^n \alpha_i \vv{w_i}\| = 1$

  \begin{align*}
    \|\sum_{i=1}^n \alpha_i \vv{w_i}\|^2 &= \scal{\sum_{i=1}^n \alpha_i \vv{w_i}}{\sum_{j=1}^n \alpha_j \vv{w_j}}\\
    &=\sum_{i=1}^n |\alpha_i|^2 \scal{\vv{w_i}}{\vv{w_i} } + \sum_{i,j=1; i\neq j}^n \bar{\alpha_i}\alpha_j \scal{\vv{w_i}}{\vv{w_j}}\\
    &=\sum_{i=1}^n |\alpha_i|^2 + 0\\
    &= 1
  \end{align*}

  Then $\sum_{i=1}^n \alpha_i \vv{w_i}\in\sem{\sharp(\sharp\basis{Y})}=\sem{\sharp\basis{Y}}$ by \Cref{thm:IdempotentSharp}. Since for every $\vv u\in\sem{\sharp A}$, $(\Lam{x}{A}{\vv t}) \vv u\real\sharp B$, we can conclude that $\Lam{x}{A}{\vv t}\in\sem{\sharp A\Arr\sharp B}$.\qedhere
\end{proof}

Before proving the soundness of the typing rules (\Cref{thm:TypingRulesValidity}), we need the following results.

\begin{theorem}\label{prop:InnerProdPairs} For all value distributions $\vv{v_1}, \vv{v_2}, \vv{w_1}, \vv{w_2}$ we have:
\[
\scal{\Pair{\vv{v_1}}{\vv{w_1}}}{\Pair{\vv{v_2}}{\vv{w_2}}} = \scal{\vv{v_1}}{\vv{v_2}}\scal{\vv{w_1}}{\vv{w_2}}
\]
\begin{proof}
    Let us write $\vv{v_1}=\sum_{i_1=1}^{n_1}\alpha_{i_1} v_{i_1}$, $\vv{v_2}=\sum_{i_2=1}^{n_2}\alpha'_{i_2} v_{i_2}$, $\vv{w_1}=\sum_{j_1=1}^{m_1}\beta_{j_1} w_{j_1}$ and $\vv{w_2}=\sum_{j_2=1}^{m_2}\beta'_{j_2} w_{j_2}$. Then we have:
    \begin{align*}
        &\scal{\Pair{\vv{v_1}}{\vv{w_1}}}{\Pair{\vv{v_2}}{\vv{w_2}}}\\
        &=\scal{\sum_{i_1=1}^{n_1}\sum_{j_1=1}^{m_1} \alpha_{i_1}\beta'_{j_1}\Pair{v_{i_1}}{w_{j_1}}}{\sum_{i_2=1}^{n_2}\sum_{j_2=1}^{m_2} \alpha_{i_2}\beta'_{j_2}\Pair{v_{i_2}}{w_{j_2}}}\\
        &=\sum_{i_1}^{n_1}\sum_{j_1}^{m_1}\sum_{i_2}^{n_2}\sum_{j_2}^{m_2} \overline{\alpha_{i_1}\beta_{j_1}} \alpha'_{i_2}\beta'_{j_2} \scal{\Pair{v_{i_1}}{w_{j_1}}}{\Pair{v_{i_2}}{w_{j_2}}}\\
        &=\sum_{i_1}^{n_1}\sum_{j_1}^{m_1}\sum_{i_2}^{n_2}\sum_{j_2}^{m_2} \overline{\alpha_{i_1}\beta_{j_1}} \alpha'_{i_2}\beta'_{j_2} \Kron{\Pair{v_{i_1}}{w_{j_1}}}{\Pair{v_{i_2}}{w_{j_2}}}\\
        &=\sum_{i_1}^{n_1}\sum_{j_1}^{m_1}\sum_{i_2}^{n_2}\sum_{j_2}^{m_2} \overline{\alpha_{i_1}\beta_{j_1}} \alpha'_{i_2}\beta'_{j_2} \Kron{v_{i_1}}{v_{i_2}}\Kron{w_{j_1}}{w_{j_2}}\\
        &=(\sum_{i_1}^{n_1}\sum_{j_1}^{m_1}\overline{\alpha_{i_1}}\alpha'_{i_2}\Kron{v_{i_1}}{v_{i_2}})(\sum_{i_2}^{n_2}\sum_{j_2}^{m_2} \overline{\beta_{j_1}} \beta'_{j_2} \Kron{w_{j_1}}{w_{j_2}})\\
        &=(\sum_{i_1}^{n_1}\sum_{j_1}^{m_1}\overline{\alpha_{i_1}}\alpha'_{i_2}\Pair{v_{i_1}}{v_{i_2}})(\sum_{i_2}^{n_2}\sum_{j_2}^{m_2} \overline{\beta_{j_1}} \beta'_{j_2} \Pair{w_{j_1}}{w_{j_2}})\\
        &=\scal{\vv{v_1}}{\vv{v_2}}\scal{\vv{w_1}}{\vv{w_2}}\\
    \end{align*}
\end{proof}  

\end{theorem}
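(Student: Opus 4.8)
The plan is to reduce the claimed identity to a routine index computation resting on two elementary facts. First, the inner product on $\ValD$ is sesquilinear --- conjugate-linear in its first argument and linear in its second --- which is immediate from its definition as $\scal{\vv v}{\vv w}=\sum_{i,j}\overline{\alpha_i}\beta_j\,\delta_{v_i,w_j}$. Second, a pair of pure values $(v,w)$ equals $(v',w')$ precisely when $v=v'$ and $w=w'$, so that $\delta_{(v,w),(v',w')}=\delta_{v,v'}\,\delta_{w,w'}$.

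First I would fix decompositions into pure values, $\vv{v_1}=\sum_{i_1}\alpha_{i_1}v_{i_1}$, $\vv{v_2}=\sum_{i_2}\alpha'_{i_2}v_{i_2}$, $\vv{w_1}=\sum_{j_1}\beta_{j_1}w_{j_1}$, $\vv{w_2}=\sum_{j_2}\beta'_{j_2}w_{j_2}$, and then use the bilinear notation for linear combinations of pairs introduced earlier to rewrite $\Pair{\vv{v_1}}{\vv{w_1}}=\sum_{i_1,j_1}\alpha_{i_1}\beta_{j_1}\Pair{v_{i_1}}{w_{j_1}}$ and likewise for $\Pair{\vv{v_2}}{\vv{w_2}}$; each side is now a genuine linear combination of pure-value pairs. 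Applying the definition of the inner product to these two combinations yields a fourfold sum over $i_1,j_1,i_2,j_2$ with general term $\overline{\alpha_{i_1}\beta_{j_1}}\,\alpha'_{i_2}\beta'_{j_2}\,\delta_{(v_{i_1},w_{j_1}),(v_{i_2},w_{j_2})}$. The decisive step is to factor this Kronecker delta as $\delta_{v_{i_1},v_{i_2}}\,\delta_{w_{j_1},w_{j_2}}$ using the second elementary fact above; after that, the fourfold sum splits as the product of the double sum over $(i_1,i_2)$, which is exactly $\scal{\vv{v_1}}{\vv{v_2}}$, and the double sum over $(j_1,j_2)$, which is $\scal{\vv{w_1}}{\vv{w_2}}$, giving the claim.

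The only real obstacle is the careful handling of the four summation indices and verifying that the delta factorisation legitimately separates the quadruple sum into a product of double sums; everything else is arithmetic. One minor point deserves mention: the chosen decompositions need not collect like terms, but because the inner product is defined purely through Kronecker deltas the computation is insensitive to this --- consistent with the compatibility of the inner product with the congruence $\equiv$. At bottom the statement just records that this inner product behaves like the tensor-product inner product on the free $\C$-vector space on pure values.
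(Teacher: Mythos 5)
Your proposal is correct and follows essentially the same route as the paper's proof: expand both pairs as fourfold sums of pure-value pairs, apply the definition of the inner product, factor the Kronecker delta $\delta_{(v,w),(v',w')}=\delta_{v,v'}\delta_{w,w'}$, and split the quadruple sum into the product of the two component inner products. Your explicit remark that the pair constructor is injective (justifying the delta factorisation) is exactly the step the paper uses implicitly, so there is nothing to add.
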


\begin{lemma}\label{lem:VecRewrite}
Given a type $A$, two vectors $\vv{u_1},\vv{u_2}\in\sem{\sharp A}$ and a scalar $\alpha\in\C$, there exists a vector $\vv{u_0}\in\sem{\sharp A}$ and a scalar $\lambda\in\C$ such that:
\[
\vv{u_1} + \alpha\vv{u_2} = \lambda \vv{u_0} 
\]
\end{lemma}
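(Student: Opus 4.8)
The plan is to reduce the statement to an elementary normalisation argument inside the vector space $\ValD$, using the span characterisation of $\sharp$. First I would invoke \Cref{thm:SharpCharacterization} to rewrite $\sem{\sharp A} = \Span(\sem{A}) \cap \Sph$. Hence $\vv{u_1}$ and $\vv{u_2}$ are unit vectors that both lie in the linear subspace $\Span(\sem{A})$ of $\ValD$. Since a span is closed under linear combinations, the vector $\vv{u_1} + \alpha\vv{u_2}$ again lies in $\Span(\sem{A})$; only its norm may fail to be $1$.

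Next I would split on whether this combination is the null vector. If $\vv{u_1} + \alpha\vv{u_2} \equiv \vv 0$, I take $\lambda = 0$ and let $\vv{u_0}$ be any element of $\sem{\sharp A}$---for instance $\vv{u_1}$ itself, which exists by hypothesis---so that $\lambda\vv{u_0} \equiv \vv 0 \equiv \vv{u_1} + \alpha\vv{u_2}$ by the scalar rules of \Cref{tab:Congruence}. Otherwise I set $\lambda := \|\vv{u_1} + \alpha\vv{u_2}\| \neq 0$ and $\vv{u_0} := \lambda^{-1}(\vv{u_1} + \alpha\vv{u_2})$. Then $\|\vv{u_0}\| = 1$, so $\vv{u_0}\in\Sph$, and $\vv{u_0}$ is a scalar multiple of an element of $\Span(\sem{A})$, hence still in that subspace; therefore $\vv{u_0} \in \Span(\sem{A}) \cap \Sph = \sem{\sharp A}$. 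Finally $\lambda\vv{u_0} = \lambda\bigl(\lambda^{-1}(\vv{u_1} + \alpha\vv{u_2})\bigr) \equiv \vv{u_1} + \alpha\vv{u_2}$ by the congruence axioms $\alpha(\beta\vv t) \equiv (\alpha\beta)\vv t$ and $1\vv t \equiv \vv t$, which is the required identity (equalities of vectors being taken modulo $\equiv$).

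The only point requiring care is the degenerate case in which the combination vanishes: no nonzero rescaling of a unit vector can equal $\vv 0$, so one genuinely needs to permit $\lambda = 0$, and this is legitimate precisely because $\sem{\sharp A}$ is nonempty (the hypothesis supplies $\vv{u_1}$). Apart from that, the argument is routine once \Cref{thm:SharpCharacterization} has turned the double orthogonal complement into a concrete span, and it uses nothing beyond the vector-space structure of $\ValD$ and the congruence of \Cref{tab:Congruence}.
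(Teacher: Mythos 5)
Your proof is correct, and on the generic case it coincides with the paper's: set $\lambda:=\|\vv{u_1}+\alpha\vv{u_2}\|$ and, when $\lambda\neq 0$, take $\vv{u_0}:=\frac{1}{\lambda}(\vv{u_1}+\alpha\vv{u_2})$, which lies in $\sem{\sharp A}$ by \Cref{thm:SharpCharacterization}. The only divergence is the degenerate case $\vv{u_1}+\alpha\vv{u_2}\equiv\vv 0$. You take $\lambda=0$ and let $\vv{u_0}$ be any inhabitant of $\sem{\sharp A}$ (e.g.\ $\vv{u_1}$), which is legitimate since the stated equality can only be read modulo the congruence of \Cref{tab:Congruence}, where $0\vv{u_0}\equiv\vv 0$. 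The paper instead first notes $\alpha\neq 0$ and, via the triangle inequality, that $\lambda':=\|\vv{u_1}+(-\alpha)\vv{u_2}\|>0$, and then takes $\vv{u_0}:=\frac{1}{\lambda'}(\vv{u_1}+(-\alpha)\vv{u_2})$; this keeps the witness $\vv{u_0}$ inside the span of $\{\vv{u_1},\vv{u_2}\}$, but that extra property is not demanded by the statement, and in the downstream uses (\Cref{lem:InnerProdSingleVar}, \Cref{lem:OrthogonalSubstitution}, \Cref{lem:UnitPreserTens}) the factor $\lambda=0$ annihilates the witness anyway, so your simpler choice suffices and even avoids the $\alpha\neq0$/triangle-inequality detour.
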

\begin{proof}
    Let $\lambda:=\|\vv{u_1}+\alpha\vv{u_2}\|$. When $\lambda\neq 0$, we take $\vv{u_0}=\frac{1}{\lambda}(\vv{u_1}+\alpha\vv{u_2})\in\sem{\sharp A}$, and we are done.

    When $\lambda=0$, we first observe that $\alpha\neq 0$ since it would mean that $\|\vv{u_1}\|=0$ which is absurd since $\|\vv{u_1}\|=1$. Moreover, since $\lambda=\|\vv{u_1}+\alpha\vv{u_2}\|=0$, we observe that all the coefficients of the distribution $\vv{u_1}+\alpha\vv{u_2}$ are zeroes when written in canonical form which implies that:
    \[
    \vv{u_1}+\alpha\vv{u_2} = 0(\vv{u_1}+\alpha\vv{u_2}) = 0\vv{u_1}+0\vv{u_2}
    \]
    Using the triangular inequality we observe that:
    \begin{align*}
    0 &< 2|\alpha|\\
    &= \|2\alpha\vv{u_2}\|\\
    &\leq\|\vv{u_1}+\alpha\vv{u_2}\| + \|\vv{u_1 }+ (-\alpha)\vv{u_2}\|\\
    &= \|\vv{u_1}+(-\alpha)\vv{u_2}\|
    \end{align*}
    Hence $\lambda' := \|\vv{u_1}+(-\alpha)\vv{u_2}\|>0$. Taking $\vv{u_0}:= \frac{1}{\lambda'}(\vv{u_1}+ (-\alpha)\vv{u_2})\in\sem{\sharp A}$, we easily see that:
    \[
    \vv{u_1}+\alpha\vv{u_2} = 0\vv{u_1} + 0\vv{u_2} = 0(\frac{1}{\lambda'} (\vv{u_1} + (-\alpha) \vv{u_2})) = \lambda \vv{u_0}
    \]
\end{proof}

\begin{theorem}[Polarization identity]\label{prop:Polarization} 
For all values $\vv{v}$ and $\vv{w}$ we have:
\[
  \scal{\vv{v}}{\vv{w}}=
  \frac{1}{4} (\|\vv{v}+\vv{w}\|^2 - \|\vv{v} + (-1) \vv{w}\|^2 - i\|\vv{v} + i\vv{w}\|^2 + i\|\vv{v}+ (-i)\vv{w}\|^2)
\]
\end{theorem}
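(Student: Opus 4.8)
The plan is a direct expansion. For an arbitrary scalar $\lambda\in\C$ I would first record the basic identity
\[
  \|\vv{v}+\lambda\vv{w}\|^2
  = \scal{\vv{v}+\lambda\vv{w}}{\vv{v}+\lambda\vv{w}}
  = \|\vv{v}\|^2 + \overline{\lambda}\,\scal{\vv{w}}{\vv{v}} + \lambda\,\scal{\vv{v}}{\vv{w}} + |\lambda|^2\,\|\vv{w}\|^2 .
\]
This step is legitimate because the congruence of \Cref{tab:Congruence} makes $\ValD$ a genuine complex vector space, so the inner product introduced after \Cref{def:NthDimensionalBasis} is sesquilinear---conjugate-linear in the first argument and linear in the second---and in particular $\scal{\vv{w}}{\vv{v}} = \overline{\scal{\vv{v}}{\vv{w}}}$. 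Specialising to $\lambda\in\{1,-1,i,-i\}$, and noting that $|\lambda|^2 = 1$ in each of these four cases, yields the four squared norms appearing on the right-hand side of the claim.

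Next I would substitute these four expansions into the right-hand side and collect terms. The $\|\vv{v}\|^2$ and $\|\vv{w}\|^2$ contributions cancel because the corresponding coefficients $1 - 1 - i + i$ sum to $0$. For the cross terms, $\|\vv{v}+\vv{w}\|^2 - \|\vv{v}+(-1)\vv{w}\|^2$ contributes $2\bigl(\scal{\vv{v}}{\vv{w}} + \scal{\vv{w}}{\vv{v}}\bigr) = 4\,\Rpart{\scal{\vv{v}}{\vv{w}}}$, while $-i\|\vv{v}+i\vv{w}\|^2 + i\|\vv{v}+(-i)\vv{w}\|^2$ contributes $2\bigl(\scal{\vv{v}}{\vv{w}} - \scal{\vv{w}}{\vv{v}}\bigr) = 4i\,\Ipart{\scal{\vv{v}}{\vv{w}}}$. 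Adding these and dividing by $4$ gives $\Rpart{\scal{\vv{v}}{\vv{w}}} + i\,\Ipart{\scal{\vv{v}}{\vv{w}}} = \scal{\vv{v}}{\vv{w}}$, which is exactly the stated identity.

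There is no genuine obstacle here: this is the classical polarisation identity for a complex inner-product space, and the argument above is a self-contained four-line computation. The only care needed is bookkeeping---tracking which slot of $\scal{\cdot}{\cdot}$ is the conjugate-linear one, so that the factor $i$ emerging from $\scal{i\vv{w}}{\cdot}$ carries the correct sign, and observing that $\|(-1)\vv{w}\| = \|i\vv{w}\| = \|(-i)\vv{w}\| = \|\vv{w}\|$ so that the pure-norm terms really do cancel. One could alternatively dispatch the whole statement by remarking that $\bigl(\ValD,\scal{\cdot}{\cdot}\bigr)$ is an inner-product space and appealing to the standard polarisation identity, but spelling out the expansion is equally short and keeps the development elementary.
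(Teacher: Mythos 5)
Your proof is correct: the expansion of $\|\vv{v}+\lambda\vv{w}\|^2$ for $\lambda\in\{1,-1,i,-i\}$, with the inner product taken conjugate-linear in the first argument as defined in the paper, yields exactly the stated identity, and your sign bookkeeping for the $\pm i$ terms checks out. The paper states this polarization identity without proof (treating it as the standard fact for a complex inner-product space), so your direct four-term computation is precisely the argument being implicitly invoked.
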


\begin{lemma}\label{lem:InnerProdSingleVar} 
Given a valid typing judgement of the term $\TYP{\Delta,x_B:\sharp A}{\vv{s}}{C}$, a substitution $\sigma\in\sem{\Delta}$ and value distributions $\vv{u_1},\vv{u_2}\in\sem{\sharp A}$, there are value distributions $\vv{w_1}, \vv{w_2}\in\sem{C}$ such that:
\[
\begin{array}{c}
    \vv{s}\ansubst{\sigma}{}\ansubst{\vv{u_1}/x}{B_1}{\ansubst{\vv{v_1}/y}{B_2}}\eval\vv{w_1}\\
    \vv{s}\ansubst{\sigma}{}\ansubst{\vv{u_2}/x}{B_1}{\ansubst{\vv{v_2}/y}{B_2}}\eval\vv{w_2}\\
\end{array}
\]

And, $\scal{\vv{w_1}}{\vv{w_2}} = \scal{\vv{u_1}}{\vv{u_2}}$.
\end{lemma}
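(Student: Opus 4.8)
The plan is to recover the inner product $\scal{\vv{w_1}}{\vv{w_2}}$ from norms by means of the \emph{polarization identity} (\Cref{prop:Polarization}): that identity writes $\scal{\vv{w_1}}{\vv{w_2}}$ in terms of the four numbers $\|\vv{w_1}+\alpha\vv{w_2}\|$ with $\alpha\in\{1,-1,i,-i\}$, and equally writes $\scal{\vv{u_1}}{\vv{u_2}}$ in terms of $\|\vv{u_1}+\alpha\vv{u_2}\|$, so it suffices to establish $\|\vv{w_1}+\alpha\vv{w_2}\|=\|\vv{u_1}+\alpha\vv{u_2}\|$ for each such $\alpha$. As a preliminary reduction, the $y$-substitution must agree on both sides (otherwise the stated identity fails, by \Cref{prop:InnerProdPairs}), so I would fold it into the ambient substitution $\sigma$; this brings the statement down to the genuine single-variable case $\TYP{\Delta,x^{B_1}:\sharp A}{\vv{s}}{C}$ with $\sigma\in\sem\Delta$ and $\vv{u_1},\vv{u_2}\in\sem{\sharp A}$.

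Existence of $\vv{w_1},\vv{w_2}$ is immediate: since $\vv{u_i}\in\sem{\sharp A}$ and the judgement is valid, $\vv{s}\ansubst{\sigma}{}\ansubst{\vv{u_i}/x}{B_1}\real C$, and by determinism of reduction modulo $\equiv$ (\Cref{rmk:determinism}) this term has a unique reduct $\vv{w_i}$, which realizes $C$; in particular $\|\vv{w_i}\|=1$ by \Cref{prop:UnitaryTypes}. For the norm identities, fix $\alpha$ and evaluate $\vv{s}\ansubst{\sigma}{}\ansubst{(\vv{u_1}+\alpha\vv{u_2})/x}{B_1}$ in two ways. On the one hand, basis-dependent substitution distributes over linear combinations (the computation in the proof of \Cref{lem:distributiveSubstitution}) and $\lra$ commutes with the congruence of \Cref{tab:Congruence}, so this term is congruent to $\vv{s}\ansubst{\sigma}{}\ansubst{\vv{u_1}/x}{B_1}+\alpha\,\vv{s}\ansubst{\sigma}{}\ansubst{\vv{u_2}/x}{B_1}$ and hence, by \Cref{thm:confluence} and determinism, reduces to $\vv{w_1}+\alpha\vv{w_2}$. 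On the other hand, \Cref{lem:VecRewrite} gives $\vv{u_0}\in\sem{\sharp A}$ and $\lambda_\alpha=\|\vv{u_1}+\alpha\vv{u_2}\|$ with $\vv{u_1}+\alpha\vv{u_2}\equiv\lambda_\alpha\vv{u_0}$; using \Cref{lem:EquivSubstitutions} and distributivity once more, the same term is congruent to $\lambda_\alpha\bigl(\vv{s}\ansubst{\sigma}{}\ansubst{\vv{u_0}/x}{B_1}\bigr)$, which by validity reduces to $\lambda_\alpha e^{i\theta}\vv{z}$ with $\vv{z}\in\sem C$ and $\|\vv{z}\|=1$. Determinism identifies the two reducts, so $\vv{w_1}+\alpha\vv{w_2}\equiv\lambda_\alpha e^{i\theta}\vv{z}$ and therefore $\|\vv{w_1}+\alpha\vv{w_2}\|=\lambda_\alpha=\|\vv{u_1}+\alpha\vv{u_2}\|$; the degenerate case $\lambda_\alpha=0$, where both sides vanish, is read off directly from \Cref{lem:VecRewrite}. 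Feeding the four resulting identities into \Cref{prop:Polarization} yields $\scal{\vv{w_1}}{\vv{w_2}}=\scal{\vv{u_1}}{\vv{u_2}}$.

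The step I expect to be the main obstacle is the double evaluation in the second paragraph: it requires juggling three distinct forms of linearity at once---that $\ansubst{\cdot/x}{B_1}$ commutes with $\C$-linear combinations, that reduction modulo $\equiv$ passes scalars and sums through the congruence of \Cref{tab:Congruence}, and that reduction modulo $\equiv$ is deterministic so the two evaluations of $\vv{s}\ansubst{\sigma}{}\ansubst{(\vv{u_1}+\alpha\vv{u_2})/x}{B_1}$ can be equated---while keeping track of the fact that $\equiv$ does \emph{not} commute with $\lambda$ or $\mathsf{case}$, so these rewrites are licensed only because they take place at the outermost level of a term that is actually being reduced, which is precisely the case here. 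The bookkeeping around the $\lambda_\alpha=0$ branch and the auxiliary variable $y$ is comparatively routine once the single-variable case is settled.
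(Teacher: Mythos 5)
Your proof is correct and takes essentially the same route as the paper's: four applications of \Cref{lem:VecRewrite} (one for each $\alpha\in\{1,-1,i,-i\}$), linearity of basis-dependent substitution and of evaluation combined with uniqueness of normal forms modulo $\equiv$ to identify $\vv{w_1}+\alpha\vv{w_2}$ with $\lambda_\alpha$ times a unit-norm realizer of $C$, and the polarization identity (\Cref{prop:Polarization}) to transfer the resulting norm equalities to the inner products. The only minor slip is the citation: the linearity you need is linearity of $\ansubst{\cdot/x}{B}$ in the \emph{substituted value} (immediate from its definition, and what the paper calls \emph{linearity of evaluation on the basis}), whereas \Cref{lem:distributiveSubstitution} states linearity in the term being substituted into.
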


\begin{proof}
    From the validity of the judgement of the form $\TYP{\Delta, x_A:\sharp A}{\vv{s}}{C}$, a substitution $\sigma\in\sem{\Delta}$, and value distributions $\vv{w_1},\vv{w_2}\in\sem{C}$ such that $\vv{s}\ansubst{\sigma}{}\ansubst{\vv{u_1}/x}{A}\eval\vv{w_1}$ and $\vv{s}\ansubst{\sigma}{}\ansubst{\vv{u_2}/x}{A}\eval\vv{w_2}$. In particular, we have that $\|\vv{w_1}\| = \|\vv{w_2}\|=1$. Applying \Cref{lem:VecRewrite}~four times, we know there are vectors $\vv{u_{01}},\vv{u_{02}},\vv{u_{03}},\vv{u_{04}}\in\sem{\sharp A}$ and scalars $\lambda_1,\lambda_2,\lambda_3,\lambda_4$ such that:
    
    \begin{align*}
        \vv{u_1} + \vv{u_2} = \lambda_1 \vv{u_{01}} & \vv{u_1} + i \vv{u_2} = \lambda_3 \vv{u_{03}} \\
        \vv{u_1} + (-1) \vv{u_2} = \lambda_2 \vv{u_{02}} & \vv{u_1} + (-i) \vv{u_2} = \lambda_4 \vv{u_{04}} \\
    \end{align*}

    From the validity of the judgement  $\TYP{\Delta, x_A:\sharp A}{\vv{s}}{C}$, we also know that there are value distributions $\vv{w_{01}},\vv{w_{02}},\vv{w_{03}},\vv{w_{04}}\in\sem{C}$ such that $\vv{s}\ansubst{\sigma}{}\ansubst{\vv{u_{0j}}}{}\eval\vv{w_{oj}}$ for all $f\in\{1\dotsb 4\}$. Combining the linearity of evaluation on the basis $A$ with the uniqueness of normal forms we deduce from what precedes that:

    \begin{align*}
        \vv{w_1} + \vv{w_2} = \lambda_1 \vv{w_{01}} & \vv{w_1} + i \vv{w_2} = \lambda_3 \vv{w_{03}} \\
        \vv{w_1} + (-1) \vv{w_2} = \lambda_2 \vv{w_{02}} & \vv{w_1} + (-i) \vv{w_2} = \lambda_4 \vv{w_{04}} \\
    \end{align*}

    Using the polarization identity (\Cref{prop:Polarization}), we conclude that:

    \begin{align*}
        &\scal{\vv{w_1}}{\vv{w_2}}\\
        &= \frac{1}{4}(\|\vv{w_1}+\vv{w_2}\| - \|\vv{w_1} + (-1)\vv{w_2}\| - i \|\vv{v_1} + i \vv{v_2}\| + i \|\vv{v_1} + (-i) \vv{v_2}\|)\\
        &= \frac{1}{4}((\lambda_1)^2\|\vv{w_{01}}\| - (\lambda_2)^2\|\vv{w_{02}}\| - i (\lambda_)^2 \|\vv{w_{03}}\| + i (\lambda_)^2\|\vv{w_{04}}\|)\\
        &= \frac{1}{4}((\lambda_1)^2\|\vv{u_{01}}\| - (\lambda_2)^2\|\vv{u_{02}}\| - i (\lambda_)^2 \|\vv{u_{03}}\| + i (\lambda_)^2\|\vv{u_{04}}\|)\\
        &= \frac{1}{4}(\|\vv{u_1}+\vv{u_2}\| - \|\vv{u_1} + (-1)\vv{u_2}\| - i \|\vv{u_1} + i \vv{u_2}\| + i \|\vv{u_1} + (-i) \vv{u_2}\|)\\
        &=\scal{u_1}{u_2}
    \end{align*}

\end{proof}

\begin{lemma}\label{lem:OrthogonalSubstitution} 
Given a valid typing judgement of the form $\TYP{\Delta, x_{B_1}:\sharp A_1, y_{B_2}: \sharp A_2}{\vv{s}}{C}$, a substitution $\sigma\in\sem{\Delta}$ and value distributions $\vv{u_1},\vv{u_2}\in\sem{\sharp A}$, there are value distributions $\vv{w_1},\vv{w_2}\in\sem{C}$ such that:
\[
\begin{array}{c}
    \vv{s}\ansubst{\sigma}{}\ansubst{\vv{u_1}/x}{B_1}{\ansubst{\vv{v_1}/y}{B_2}}\eval\vv{w_1}\\
    \vv{s}\ansubst{\sigma}{}\ansubst{\vv{u_2}/x}{B_1}{\ansubst{\vv{v_2}/y}{B_2}}\eval\vv{w_2}\\
\end{array}
\]
And, $\scal{\vv{w_1}}{\vv{w_2}} = 0$.
\end{lemma}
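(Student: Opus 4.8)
The plan is to prove the sharper statement $\scal{\vv{w_1}}{\vv{w_2}}=\scal{\vv{u_1}}{\vv{u_2}}\,\scal{\vv{v_1}}{\vv{v_2}}$, where $\vv{u_i}\in\sem{\sharp A_1}$ and $\vv{v_i}\in\sem{\sharp A_2}$: by \Cref{prop:InnerProdPairs} the right-hand side is $\scal{\Pair{\vv{u_1}}{\vv{v_1}}}{\Pair{\vv{u_2}}{\vv{v_2}}}$, so the orthogonality hypothesis on the input pairs (equivalently, $\scal{\vv{u_1}}{\vv{u_2}}=0$ or $\scal{\vv{v_1}}{\vv{v_2}}=0$) at once gives $\scal{\vv{w_1}}{\vv{w_2}}=0$. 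Everything is worked modulo $\equiv$, and I write $\mathrm{nf}(\cdot)$ for the normal form of a term distribution (unique by \Cref{thm:confluence}), whenever it exists.

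First I would record two preliminaries. (a) Since $\sigma$ and the $\vv{v_i}$ are closed, substitutions commute, and for any $\vv{v}\in\sem{\sharp A_2}$ the judgement $\TYP{\Delta,x_{B_1}:\sharp A_1}{\vv s\ansubst{\vv v/y}{B_2}}{C}$ is again valid: deleting $y$ (which, having a $\sharp$-type, lies in $\sdom{(\Delta,x_{B_1}:\sharp A_1,y_{B_2}:\sharp A_2)}\subseteq\FV{\vv s}$) leaves the free-variable side condition intact, and for $\rho\in\sem{\Delta,x_{B_1}:\sharp A_1}$ one has $\vv s\ansubst{\vv v/y}{B_2}\ansubst{\rho}{}=\vv s\ansubst{\rho'}{}$ with $\rho'=\rho\cup\{y\mapsto(\vv v,B_2)\}\in\sem{\Delta,x_{B_1}:\sharp A_1,y_{B_2}:\sharp A_2}$, which realizes $C$. (b) Basis-dependent substitution is linear in the value substituted — decompose over the basis and add coefficients — and reduction is linear with unique normal forms, so, writing $r_i:=\vv s\ansubst{\sigma}{}\ansubst{\vv{u_i}/x}{B_1}$, the map $N_i(\vv z):=\mathrm{nf}\bigl(r_i\ansubst{\vv z/y}{B_2}\bigr)$ is a well-defined linear map on the subspace $V_2:=\Span(\sem{A_2})$ of $\Span(B_2)$; well-definedness uses that every $\vv z\in V_2$ is a scalar multiple of an element of $V_2\cap\Sph$, on which the relevant reduction terminates by (a) and validity.

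The heart of the argument is a polarization in the $y$-slot. Fix $\vv{u_1},\vv{u_2}$ and set $\phi(\vv a,\vv b):=\scal{N_1(\vv a)}{N_2(\vv b)}$ for $\vv a,\vv b\in V_2$; by (b) this is a sesquilinear form on $V_2$. For $\vv v\in\sem{\sharp A_2}=V_2\cap\Sph$ (using \Cref{thm:SharpCharacterization}), applying \Cref{lem:InnerProdSingleVar} to the valid judgement of (a) with the pair $\vv{u_1},\vv{u_2}\in\sem{\sharp A_1}$ yields $N_i(\vv v)\in\sem C$ and $\phi(\vv v,\vv v)=\scal{\vv{u_1}}{\vv{u_2}}$; rescaling (or invoking \Cref{lem:VecRewrite}, which keeps combinations inside $V_2$) gives $\phi(\vv v,\vv v)=\scal{\vv{u_1}}{\vv{u_2}}\,\|\vv v\|^2$ for all $\vv v\in V_2$. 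Since a sesquilinear form over $\C$ is recovered from its diagonal by the polarization identity (\Cref{prop:Polarization}), it follows that $\phi(\vv a,\vv b)=\scal{\vv{u_1}}{\vv{u_2}}\,\scal{\vv a}{\vv b}$ on $V_2$. As $\vv{w_i}=N_i(\vv{v_i})$ and $\vv{v_1},\vv{v_2}\in\sem{\sharp A_2}\subseteq V_2$, we obtain $\scal{\vv{w_1}}{\vv{w_2}}=\phi(\vv{v_1},\vv{v_2})=\scal{\vv{u_1}}{\vv{u_2}}\scal{\vv{v_1}}{\vv{v_2}}$, with $\vv{w_1},\vv{w_2}\in\sem C$ coming from the instances $\vv v=\vv{v_1}$ and $\vv v=\vv{v_2}$ of \Cref{lem:InnerProdSingleVar}; one then finishes with \Cref{prop:InnerProdPairs}. (Alternatively, one may first split by \Cref{prop:InnerProdPairs} and assume $\scal{\vv{u_1}}{\vv{u_2}}=0$, the case $\scal{\vv{v_1}}{\vv{v_2}}=0$ being symmetric under exchanging $x$ and $y$; then the diagonal of $\phi$ vanishes, hence so does $\phi$.)

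The main obstacle is organisational rather than conceptual: confining the entire polarization to $V_2=\Span(\sem{A_2})$ — whose unit vectors are precisely $\sem{\sharp A_2}$ by \Cref{thm:SharpCharacterization}, and which is closed under the linear combinations produced by \Cref{lem:VecRewrite} — so that every vector ever fed to the substitution operator or to \Cref{lem:InnerProdSingleVar} genuinely realizes the required type and the intermediate reductions terminate, and carrying the ``freeze $y$'' move of (a) correctly through the free-variable and strict-domain bookkeeping. Global phases are handled exactly as in the proof of \Cref{lem:InnerProdSingleVar}.
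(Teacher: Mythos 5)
Your proposal is correct (at the same level of informality the paper itself allows for ``linearity of substitution, linearity of evaluation, uniqueness of normal forms''), but it follows a genuinely different and in fact stronger route than the paper's proof. The paper argues under the hypothesis that \emph{both} $\scal{\vv{u_1}}{\vv{u_2}}=0$ and $\scal{\vv{v_1}}{\vv{v_2}}=0$: it uses \Cref{lem:VecRewrite} to form $\vv{u_2}+(-1)\vv{u_1}=\lambda\vv{u_0}$ and $\vv{v_2}+(-1)\vv{v_1}=\mu\vv{v_0}$, evaluates the nine substituted terms $\vv{w_{jk}}$ with $j,k\in\{0,1,2\}$, extracts linear relations among them, applies \Cref{lem:InnerProdSingleVar} slot-wise to get $\scal{\vv{w_{1k}}}{\vv{w_{2k}}}=0$ and $\scal{\vv{w_{j1}}}{\vv{w_{j2}}}=0$, and then chases inner products to show that $\scal{\vv{w_1}}{\vv{w_2}}$ is real, repeating the whole argument with $i\vv{u_2}$ in place of $\vv{u_2}$ to conclude it vanishes. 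You instead freeze $y$ at a value of $\sem{\sharp A_2}$ (your step (a), which is sound: the strict-domain and free-variable bookkeeping and the commutation of closed substitutions go through), view the induced map as linear on $\Span(\sem{A_2})$, and determine the sesquilinear form $\phi(\vv a,\vv b)=\scal{N_1(\vv a)}{N_2(\vv b)}$ from its diagonal, which \Cref{lem:InnerProdSingleVar} supplies on unit vectors, via polarization. This yields the product formula $\scal{\vv{w_1}}{\vv{w_2}}=\scal{\vv{u_1}}{\vv{u_2}}\,\scal{\vv{v_1}}{\vv{v_2}}$, i.e.\ \Cref{lem:UnitPreserTens} itself, of which the present lemma is an immediate corollary under either reading of its (implicit) orthogonality hypothesis---the paper's proof needs both inner products to vanish, yours needs only one, and your argument would make the paper's separate, more intricate proof of \Cref{lem:UnitPreserTens} redundant; there is no circularity, since you only rely on \Cref{lem:InnerProdSingleVar}, \Cref{lem:VecRewrite}, \Cref{thm:SharpCharacterization}, \Cref{thm:confluence} and \Cref{prop:InnerProdPairs}. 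Two points to make explicit in a write-up: \Cref{prop:Polarization} is stated only for the inner product, so you should state (and prove, by the same algebra) the polarization identity for a general sesquilinear form, or equivalently apply it to $\phi$ minus $\scal{\vv{u_1}}{\vv{u_2}}\scal{\cdot}{\cdot}$; and the definedness of $\ansubst{\cdot/y}{B_2}$ on $\Span(\sem{A_2})$, the linearity of evaluation with unique normal forms modulo $\equiv$, and the neglect of global phases are exactly the implicit assumptions the paper's own proofs of \Cref{lem:InnerProdSingleVar} and \Cref{lem:OrthogonalSubstitution} make, so you are not assuming more than it does.
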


\begin{proof}
    From \Cref{lem:VecRewrite} we know that there are $\vv{u_0}\in\sem{\sharp A}, \vv{v_0}\in\sem{\sharp B}$ and $\lambda,\mu\in\C$ such that:
    \[
    \vv{u_2} + (-1) \vv{u_1} = \lambda\vv{u_0}\quad\text{and}\quad\vv{v_2} + (-1) \vv{v_1} = \mu \vv{v_0}
    \]
    For all $j,k\in\{0,1,2\}$, we have $\vv{s}\ansubst{\sigma}{}\ansubst{\vv{u_j}/x}{B_1}\ansubst{\vv{v_k}/y}{B_2}\eval\vv{w_{jk}}$. In particular, we can take $\vv{w_1}=\vv{w_{11}}$ and $\vv{w_2}=\vv{w_{22}}$. Now we observe that:
    \begin{enumerate}
        \item\label{A8:it1} $\vv{u_1}+\lambda\vv{u_0}= \vv{u_1} + \vv{u_2} + (-1) \vv{u_1}= \vv{u_2} + 0\vv{u_1}$, so that from linearity of substitution, linearity of evaluation and uniqueness of normal forms, we get:
        \[
        \begin{array}{c c}
            \begin{array}{c}
                \vv{w_{1k}} + \lambda\vv{w_{0k}} = \vv{w_{2k}} + 0 \vv{w_{1k}}\\
                \vv{w_{2k}} + (-\lambda)\vv{w_{0k}} = \vv{w_{1k}} + 0 \vv{w_{2k}}
            \end{array}&
            (\text{for all }k\in\{0,1,2\})
        \end{array}
        \]
        
        \item\label{A8:it2} $\vv{v_1}+\mu\vv{v_0}= \vv{v_1} + \vv{v_2} + (-1) \vv{v_1}= \vv{v_2} + 0\vv{v_1}$, so that from linearity of substitution, linearity of evaluation and uniqueness of normal forms, we get:
        \[
        \begin{array}{c c}
            \begin{array}{c}
                \vv{w_{j1}} + \mu\vv{w_{j0}} = \vv{w_{j2}} + 0 \vv{w_{j1}}\\
                \vv{w_{j2}} + (-\mu)\vv{w_{j0}} = \vv{w_{j1}} + 0 \vv{w_{j2}}
            \end{array}&
            (\text{for all }j\in\{0,1,2\})
        \end{array}
        \]
        
        \item\label{A8:it3} $\scal{\vv{u_1}}{\vv{u_2}}=0$, so that from \Cref{lem:InnerProdSingleVar}~we get $\scal{\vv{w_{1k}}}{\vv{w_{2k}}}=0$ (for all $k\in\{0,1,2\}$).
        
        \item\label{A8:it4} $\scal{\vv{v_1}}{\vv{v_2}}=0$, so that from \Cref{lem:InnerProdSingleVar}~we get $\scal{\vv{w_{j1}}}{\vv{w_{j2}}}=0$ (for all $j\in\{0,1,2\}$).
    \end{enumerate}

    From the above, we get:
    \begin{align*}
        \scal{\vv{w_1}}{\vv{w_2}} &= \scal{\vv{w_{11}}}{\vv{w_{22}}} = \scal{\vv{w_{11}}}{\vv{w_{22}}+0\vv{w_{12}}} & \\
        &=\scal{\vv{w_{11}}}{\vv{w_{12}}+ \lambda\vv{w_{02}}} & (\text{from \Cref{A8:it1}, } k=2)\\
        &=\scal{\vv{w_{11}}}{\vv{w_{12}}} + \lambda \scal{\vv{w_{11}}}{\vv{w_{02}}} &\\
        &= 0 + \lambda \scal{\vv{w_{11}}}{\vv{w_{02}}} & (\text{from \Cref{A8:it4}, } j=1)\\
        &= \lambda \scal{\vv{w_{11}} + 0\vv{w_{21}}}{\vv{w_{02}}} & \\
        &= \lambda \scal{\vv{w_{21}} + (-\lambda)\vv{w_{01}}}{\vv{w_{02}}} & (\text{from \Cref{A8:it1}, } k=1)\\
        &= \lambda \scal{\vv{w_{21}}}{\vv{w_{02}}} - |\lambda|^2 \scal{\vv{w_{01}}}{\vv{w_{02}}} & \\
        &= \lambda \scal{\vv{w_{21}}}{\vv{w_{02}}} - 0 & (\text{from \Cref{A8:it4}, } j=0)\\
        &=\scal{\vv{w_{21}}}{\vv{w_{22}}- \vv{w_{12}}} & \\
        &=\scal{\vv{w_{21}}}{\vv{22}} - \scal{\vv{w_{21}}}{\vv{w_12}} & \\
        &= 0 - \scal{\vv{w_{21}}}{\vv{w_{12}}} & (\text{from \Cref{A8:it4}, } j=2)\\
    \end{align*}
    Hence $\scal{\vv{w_1}}{\vv{w_2}} = \scal{\vv{w_{11}}}{\vv{w_{22}}} = - \scal{\vv{w_{21}}}{\vv{w_{12}}}$. Exchanging the indices in the previous reasoning, we also get 
    \[
    \scal{\vv{w_1}}{\vv{w_2}}=-\scal{\vv{w_{21}}}{\vv{w_{12}}}=-\scal{\vv{w_{12}}}{\vv{w_{21}}}
    \]
    So that we have:
    \[
        \scal{\vv{w_1}}{\vv{w_2}}=-\scal{\vv{w_{21}}}{\vv{w_{12}}}=-\overline{\scal{\vv{w_{21}}}{\vv{w_{12}}}}\in\R
    \]
    If we now replace $\vv{u_2}\in\sem{\sharp A}$ with $i\vv{u_2}\in\sem{\sharp A}$, the very same technique allows us to prove that $i\scal{\vv{w_1}}{\vv{w_2}}=\scal{\vv{w_1}}{i \vv{w_2}}\in\R$. Therefore, $\scal{\vv{w_1}}{\vv{w_2}}=0$.
\end{proof}

\begin{lemma}\label{lem:UnitPreserTens} 
Given a valid typing judgement of the form $\TYP{\Delta,x_{B_1}:\sharp A_1, y_{B_2}:\sharp A_2}{\vv{s}}{C}$, a substitution $\sigma\in\sem{\Delta}$, and value distributions $\vv{u_1},\vv{u_2}\in\sem{\sharp A}$ and $\vv{v_1},\vv{v_2}\in\sem{\sharp B}$, there are value distributions $\vv{w_1},\vv{w_2}\in\sem{C}$ such that:
\[
\begin{array}{c}
    \vv{s}\ansubst{\sigma}{}\ansubst{\vv{u_1}/x}{B_1}{\ansubst{\vv{v_1}/y}{B_2}}\eval\vv{w_1}\\
    \vv{s}\ansubst{\sigma}{}\ansubst{\vv{u_2}/x}{B_1}{\ansubst{\vv{v_2}/y}{B_2}}\eval\vv{w_2}\\
\end{array}
\]

And, $\scal{\vv{w_1}}{\vv{w_2}} = \scal{\vv{u_1}}{\vv{u_2}} \scal{\vv{v_1}}{\vv{v_2}}$.

\begin{proof}
    Let $\alpha=\scal{\vv{u_1}}{\vv{u_2}}$ and $\beta=\scal{\vv{v_1}}{\vv{v_2}}$. We observe that:
    \[
    \scal{\vv{u_1}}{\vv{u_2}+(-\alpha)\vv{u_1}} = \scal{\vv{u_1}}{\vv{u_2}} - \alpha \scal{\vv{u_1}}{\vv{u_1}} = \alpha - \alpha = 0
    \]
    And similarly that, $\scal{\vv{v_1}}{\vv{v_2}+ (-\beta) \vv{v_1}} = 0$. From \Cref{lem:VecRewrite}, we know that there are $\vv{u_0}\in\sem{\sharp A}$, $\vv{v_0}\in\sem{\sharp B}$ and $\lambda,\mu\in\C$ such that:
    \begin{align*}
        \vv{u_2} +(-\alpha)\vv{u_1} = \lambda\vv{u_0}& \text{ and } & \vv{v_2} + (-\beta)\vv{v_1} = \mu\vv{v_0} 
    \end{align*}
    For all $j,k\in\{0,1,2\}$, we have$\ansubst{\sigma}{}\ansubst{\vv{u_j}/x}{B_1}\ansubst{\vv{v_k}/y}{B_2}\in\sem{\Delta,x_{B_1}:\sharp A_1, y_{B_2}:\sharp A_2}$, hence there is $\vv{w_{jk}}\in\sem{C}$ such that:
    \[
    \vv{s}\ansubst{\sigma}{}\ansubst{u_j/x}{B_1}\ansubst{\vv{v_k}/y}{B_2}\eval\vv{w_{jk}}
    \]
    In particular, we can take $\vv{w_1}=\vv{w_{11}}$ and $\vv{w_2}=\vv{w_{22}}$. Now we observe that:
    \begin{enumerate}
        \item\label{A9:it1} $\lambda \vv{u_0} + \alpha\vv{u_1}=\vv{u_2} + (-\alpha) \vv{u_1} + \alpha \vv{u_1} = \vv{u_2} + 0 \vv{u_1}$, so that from the linearity of the substitution, linearity of evaluation and uniqueness of normal forms, we get:
        \[
        \lambda\vv{w_{0k}} + \alpha \vv{w_{1k}} = \vv{w_{2k}} + 0 \vv{w_{1k}} \qquad(\text{for all }k\in\{0,1,2\})
        \]
        
        \item\label{A9:it2} $\mu\vv{v_0} + \beta\vv{v_1}=\vv{v_2} + (-\beta) \vv{v_1} + \beta \vv{v_1} = \vv{v_2} + 0 \vv{v_1}$, so that from the linearity of the substitution, linearity of evaluation and uniqueness of normal forms, we get:
        \[
        \mu\vv{w_{j0}} + \beta \vv{w_{j1}} = \vv{w_{j2}} + 0 \vv{w_{j1}} \qquad(\text{for all }j\in\{0,1,2\})
        \]
        
        \item\label{A9:it3} $\scal{\vv{u_1}}{\lambda\vv{u_0}}=\scal{\vv{u_1}}{\vv{u_2}+(- \alpha)\vv{u_1}}=0$, so that from \Cref{lem:InnerProdSingleVar} we get:
        \[
        \scal{\vv{w_{1k}}}{\lambda\vv{w_{0k}}}= 0 \qquad(\text{for all }k\in\{0,1,2\})
        \]
        
        \item\label{A9:it4} $\scal{\vv{v_1}}{\mu\vv{v_0}}=\scal{\vv{v_1}}{\vv{v_2} + (-\beta)}\vv{v_1}=0$, so that from \Cref{lem:InnerProdSingleVar} we get:
        \[
        \scal{\vv{w_{j1}}}{\mu\vv{w_{j0}}}=0
        \]
                
        \item\label{A9:it5} $\scal{\vv{u_1}}{\lambda\vv{u_0}}=\scal{\vv{v_1}}{\mu\vv{v_0}}=0$ so that from \Cref{lem:OrthogonalSubstitution} we get:
        \[
        \scal{\vv{w_{11}}}{\lambda\mu\vv{w_{00}}}=0
        \]
        (Again the equality $\scal{\vv{w_{11}}}{\lambda\mu\vv{w_{00}}}$ is trivial when $\lambda=0$ or $\mu=0$. When $\lambda ,\mu\neq 0$ we deduce from the above that $\scal{\vv{u_1}}{\vv{u_0}}=\scal{\vv{v_1}}{\vv{v_0}}=0$, from which we get $\scal{\vv{w_{11}}}{\vv{w_{00}}}=0$ by \Cref{lem:OrthogonalSubstitution})
    \end{enumerate}

    From above, we get:
    \begin{align*}
        \vv{w_{22}} + 0\vv{w_{12}} + &0\vv{w_{01}} + 0\vv{w_{11}} \\
        &= \lambda\vv{w_{02}} + \alpha\vv{w_{12}} + 0\vv{w_{01}} + 0\vv{w_{11}} & (\text{from~\Cref{A9:it1}}, k =1)\\
        &= \lambda(\vv{w_{02}}+0\vv{w_{01}}) + \alpha(\vv{w_{12}}+0\vv{w_{11}})&\\
        &= \lambda(\mu\vv{w_{00}} + \beta\vv{w_{01}}) + \alpha(\mu\vv{w_{01}}+\beta\vv{w_{11}}) & (\text{from~\Cref{A9:it2}}, j=0,1)\\
        &= \lambda\mu\vv{w_{00}} + \lambda\beta\vv{w_{01}} + \alpha\mu\vv{w_{10}} + \alpha\beta\vv{w_{11}}
    \end{align*}
    Therefore:
    \begin{align*}
        &\scal{\vv{w_1}}{\vv{w_2}} \\
        &= \scal{\vv{w_{11}}}{\vv{w_{22}} + 0 \vv{w_{12}} + 0 \vv{w_{01}} + 0 \vv{w_{11}}}\\
        &= \scal{\vv{w_{11}}}{\lambda\mu\vv{w_{00}} + \lambda\beta\vv{w_{01}} + \alpha\mu\vv{w_{10}} + \alpha\beta\vv{w_{11}}}\\
        &=\scal{\vv{w_{11}}}{\lambda\mu\vv{w_{00}}} + \scal{\vv{w_{11}}}{\lambda\beta\vv{w_{01}}} + \scal{\vv{w_{11}}}{\alpha\mu\vv{w_{10}}} + \scal{\vv{w_{11}}}{\alpha\beta\vv{w_{11}}}\\
        &=\lambda\mu\scal{\vv{w_{11}}}{\vv{w_{00}}} + \lambda\beta\scal{\vv{w_{11}}}{\vv{w_{01}}} + \alpha\mu\scal{\vv{w_{11}}}{\vv{w_{10}}} + \alpha\beta\scal{\vv{w_{11}}}{\vv{w_{11}}}\\
        &= 0 + 0 + 0 + \alpha\beta = \scal{\vv{u_1}}{\vv{u_2}}\scal{\vv{v_1}}{\vv{v_2}}
    \end{align*}
    From~\Cref{A9:it5,A9:it3,A9:it4} with $j=1$ and concluding with the definition of $\alpha$ and $\beta$.
\end{proof}
\end{lemma}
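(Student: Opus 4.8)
The plan is to reduce the target bilinear identity $\scal{\vv{w_1}}{\vv{w_2}}=\scal{\vv{u_1}}{\vv{u_2}}\scal{\vv{v_1}}{\vv{v_2}}$ to the two already-established special cases: \Cref{lem:InnerProdSingleVar}, which preserves inner products when only one of the two substituted arguments varies, and \Cref{lem:OrthogonalSubstitution}, which preserves orthogonality when \emph{both} arguments vary. First I would set $\alpha:=\scal{\vv{u_1}}{\vv{u_2}}$ and $\beta:=\scal{\vv{v_1}}{\vv{v_2}}$ and perform one Gram--Schmidt step in each variable, observing that $\vv{u_2}+(-\alpha)\vv{u_1}\perp\vv{u_1}$ and $\vv{v_2}+(-\beta)\vv{v_1}\perp\vv{v_1}$. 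Since these residuals need not be unit vectors (and may even be $\vv 0$), I invoke \Cref{lem:VecRewrite} to write $\vv{u_2}+(-\alpha)\vv{u_1}=\lambda\vv{u_0}$ and $\vv{v_2}+(-\beta)\vv{v_1}=\mu\vv{v_0}$ with $\vv{u_0}\in\sem{\sharp A}$, $\vv{v_0}\in\sem{\sharp B}$, and $\lambda,\mu\in\C$.

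Next, for every $j,k\in\{0,1,2\}$, the substitution $\sigma$ followed by $\ansubst{\vv{u_j}/x}{B_1}\ansubst{\vv{v_k}/y}{B_2}$ lies in $\sem{\Delta,x_{B_1}:\sharp A_1,y_{B_2}:\sharp A_2}$, so validity of the typing judgement gives $\vv{w_{jk}}\in\sem{C}$ with $\vv s\ansubst{\sigma}{}\ansubst{\vv{u_j}/x}{B_1}\ansubst{\vv{v_k}/y}{B_2}\eval\vv{w_{jk}}$; I then take $\vv{w_1}:=\vv{w_{11}}$ and $\vv{w_2}:=\vv{w_{22}}$. The core of the argument is to transport the relevant linear identities from the $\vv{u_j}$ and $\vv{v_k}$ to the $\vv{w_{jk}}$: because basis-dependent substitution distributes over linear combinations (\Cref{lem:distributiveSubstitution}), reduction is linear on the decorating bases, and normal forms are unique (\Cref{rmk:determinism}), an equality such as $\lambda\vv{u_0}+\alpha\vv{u_1}=\vv{u_2}+0\vv{u_1}$ lifts to $\lambda\vv{w_{0k}}+\alpha\vv{w_{1k}}=\vv{w_{2k}}+0\vv{w_{1k}}$ for every $k$, and symmetrically the $\vv v$-identity lifts in the index $j$.

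I would then collect the orthogonality facts. From $\scal{\vv{u_1}}{\vv{u_2}+(-\alpha)\vv{u_1}}=0$ together with \Cref{lem:InnerProdSingleVar} (folding the $y$-substitution into the ambient context) I obtain $\scal{\vv{w_{1k}}}{\lambda\vv{w_{0k}}}=0$ for all $k$, and symmetrically $\scal{\vv{w_{j1}}}{\mu\vv{w_{j0}}}=0$ for all $j$; from the pair of orthogonalities $\vv{u_1}\perp\lambda\vv{u_0}$, $\vv{v_1}\perp\mu\vv{v_0}$ and \Cref{lem:OrthogonalSubstitution} I obtain $\scal{\vv{w_{11}}}{\lambda\mu\vv{w_{00}}}=0$, the subcases $\lambda=0$ or $\mu=0$ being immediate. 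Using the transported linear identities to rewrite $\vv{w_{22}}$ (up to the bookkeeping of the $0$-scaled terms) as $\lambda\mu\vv{w_{00}}+\lambda\beta\vv{w_{01}}+\alpha\mu\vv{w_{10}}+\alpha\beta\vv{w_{11}}$ and pairing against $\vv{w_{11}}$, the three cross terms vanish by the orthogonalities just listed and the diagonal term contributes $\alpha\beta\|\vv{w_{11}}\|^2=\alpha\beta$ (since $\vv{w_{11}}\in\sem{C}\subseteq\Sph$), giving $\scal{\vv{w_1}}{\vv{w_2}}=\alpha\beta=\scal{\vv{u_1}}{\vv{u_2}}\scal{\vv{v_1}}{\vv{v_2}}$.

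The main obstacle is the bookkeeping of the $3\times3$ family $\vv{w_{jk}}$ and verifying at each step that the linear identity really lifts through substitution: one must check that all nine substitutions land in $\sem{\Delta,x_{B_1}:\sharp A_1,y_{B_2}:\sharp A_2}$ (using that $\vv{u_0},\vv{v_0}$ are unit and of type $\sharp A$, $\sharp B$, with \Cref{thm:IdempotentSharp} lurking behind the scenes), apply linearity of substitution, linearity of reduction, and uniqueness of normal forms in the right order, and handle separately the degenerate cases where a residual vector collapses to $\vv 0$ so that \Cref{lem:VecRewrite} must be reapplied with a perturbed coefficient. None of this is deep, but it is precisely where a careless use of linearity would break the proof.
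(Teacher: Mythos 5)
Your proposal is correct and follows essentially the same route as the paper's proof: the Gram--Schmidt step producing $\lambda\vv{u_0}$ and $\mu\vv{v_0}$ via \Cref{lem:VecRewrite}, the $3\times3$ family $\vv{w_{jk}}$, lifting the linear identities through substitution, linearity of evaluation and uniqueness of normal forms, then killing the cross terms with \Cref{lem:InnerProdSingleVar} and \Cref{lem:OrthogonalSubstitution} and extracting $\alpha\beta$ from the diagonal term. No substantive difference from the paper's argument.
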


Now, we can restate and prove \Cref{thm:TypingRulesValidity}.
\begin{restatetheorem}[Restatement of \Cref{thm:TypingRulesValidity}]
  \itshape
  All the typing rules in \Cref{tab:TypingRules} are valid.
\end{restatetheorem}
\begin{proof}
    For each typing rule in \Cref{tab:TypingRules}~we have to show the typing judgement is valid starting from the premises:
    \begin{description}
    \item[Axiom] It is clear that $\sdom{x:A}\subseteq\{x\}=\dom{x:A}$. Moreover, given $\sigma\in\sem{x^B:A}$, we have $\sigma=\ansubst{\vv v/x}{B}$ for some $\vv{v}\in\sem{A}$. Therefore, $x\ansubst{\sigma}{}=x\ansubst{\vv v}{B}=\vv{v}\real A$.
    
    \item[UnitLam] If the hypothesis is valid, $\sdom{\Gamma,x^X:A}\subseteq \FV{\sum_{i=1}^{n}\alpha_i \vv{t_i}}\subseteq \dom{\Gamma,x^X:A}$. It follows that $\sdom{\Gamma}\subseteq \FV{\sum_{i=1}^{n}\alpha_i (\Lam{x}{X}{\vv{t_i}})}\subseteq \dom{\Gamma}$. Given $\sigma\in\sem{\Gamma}$, we want to show that $(\sum_{i=1}^{n}\alpha_i (\Lam{x}{X}{\vv{t_i}}))\ansubst{\sigma}{}\real A\Arr B$. Let $\vv v\in\sem{A}$, then:
    
    \begin{align*}
        (\sum_{i=1}^{n} \alpha_i(\Lam{x}{X}{\vv{t_i}}))\ansubst{\sigma}{} \vv v&= (\sum_{j=1}^{m} \beta_j (\sum_{i=1}^{n} \alpha_i (\Lam{x}{X}{\vv{t_i}}) [\sigma_i])) \vv{v} \\
        &= (\sum_{i=1}^{n}\sum_{j=1}^{m}\alpha_i\beta_j (\Lam{x}{X}{\vv{t_i}[\sigma_j]}))\vv v\\
        &\lra \sum_{i=1}^{n}\sum_{j=1}^{m}\alpha_i\beta_j \vv{t_i}[\sigma_j]\ansubst{\vv v/x}{X}\\
        &=\sum_{i=1}^{n}\alpha_i \vv{t_i}\ansubst{\sigma}{}\ansubst{\vv v/x}{X}\\
        &=(\sum_{i=1}^{n}\alpha_i \vv{t_i})\ansubst{\sigma}{}\ansubst{\vv v/x}{X}\qquad{\text{By \Cref{lem:distributiveSubstitution}}}
    \end{align*}
    
    Considering that $\ansubst{\sigma}{}\in\sem{\Gamma}$, then we have that $\ansubst{\sigma}{}\ansubst{\vv v/x}{X}\in\sem{\Gamma,x^X:A}$. Since we assume $\TYP{\Gamma, x^X:A}{\sum_{i=1}^{n}\alpha_i\vv{t_i}}{B}$, then $\vv{t_i}\ansubst{\sigma}{}\ansubst{\vv v/x}{X}\real B$. Finally, we can conclude that the distribution: $\sum_{i=1}^{n}\alpha_i (\Lam{x}{X}{\vv{t_i}})\in\sem{A\Arr B}$.

    \item[App] If the hypotheses are valid, then:
    \begin{itemize}
        \item $\sdom{\Gamma}\subseteq \FV{\vv s}\subseteq \dom{\Gamma}$ and $\vv s \ansubst{\sigma_\Gamma}{}\Vdash A\Arr B\ \forall \sigma_\Gamma\in\sem{\Gamma}$.
        \item $\sdom{\Delta}\subseteq \FV{\vv t}\subseteq \dom{\Delta}$ and $\vv t\ansubst{\sigma_\Delta}{}\Vdash A\ \forall\sigma_\Delta\in\sem{\Delta}$.
    \end{itemize}
    
    From this, we can conclude that $\sdom{\Gamma,\Delta}\subseteq \FV{\vv s \vv t}\subseteq \dom{\Gamma,\Delta}$. Given $\sigma\in\sem{\Gamma,\Delta}$, we can observe that $\sigma=\sigma_\Gamma,\sigma_\Delta$ for some $\sigma_\Gamma\in\sem{\Gamma}$ and $\sigma_\Delta\in\sem{\Delta}$. Then we have:
    
    \begin{align*}
        (\vv{t}\vv{s})\ansubst{\sigma}{} &= (\vv{t}\vv{s})\ansubst{\sigma_\Gamma}{}\ansubst{\sigma_\Delta}{}\\
        &=(\sum_{i=i}^{n}\alpha_i (\vv{t}\vv{s})[\sigma_{\Gamma i}])\ansubst{\sigma_\Delta}{}\\
        &=\sum_{j=1}^{m} \beta_j (\sum_{i=1}^{n} \alpha_i (\vv{t} \vv{s})[\sigma_{\Gamma i}])[\sigma_{\Delta j}]\\
        &=\sum_{i=1}^{n}\sum_{j=1}^{m} \alpha_i \beta_j \vv{t}\,[\sigma_{\Gamma i}][\sigma_{\Delta j}] \vv{s}\,[\sigma_{\Gamma i}][\sigma_{\Delta j}]\\
        &=\sum_{i=1}^{n}\sum_{j=1}^{m} \alpha_i \beta_j \vv{t}\,[\sigma_{\Gamma i}]\vv{s}\,[\sigma_{\Delta j}]\\
        &\equiv (\sum_{i=1}^{n}\alpha_i\vv{t}[\sigma_{\Gamma i}])(\sum_{j=1}^{m} \beta_j \vv{s}[\sigma_{\Delta j}])\\
        &=\vv{t}\ansubst{\sigma_\Gamma}{} \vv{s}\ansubst{\sigma_\Delta}{}\\
        &\eval (e^{i\theta_{1}} \vv{v}) (e^{i\theta_{2}} \vv{w})\qquad\text{Where: } \vv{v}\in\sem{A\Arr B}, \vv{w}\in\sem{A}\\
        &\equiv e^{i\theta} (\vv{v} \vv{w})\qquad\text{with: }\theta=\theta_1 + \theta_2\\
        &\lraneq e^{i\theta}\vv r\qquad\text{where: } \vv{r}\real B
    \end{align*}
    
    Then we can conclude that $(\vv{t}\vv{s})\ansubst{\sigma}{}\real B$.

    \item[Pair] If the hypotheses are valid, then:

    \begin{itemize}
        \item $\sdom{\Gamma}\subseteq \FV{\vv s}\subseteq \dom{\Gamma}$ and $\vv s \ansubst{\sigma_\Gamma}{}\Vdash A\ \forall \sigma_\Gamma\in\sem{\Gamma}$.
        \item $\sdom{\Delta}\subseteq \FV{\vv t}\subseteq \dom{\Delta}$ and $\vv t\ansubst{\sigma_\Delta}{}\Vdash B\ \forall \sigma_\Delta\in\sem{\Delta}$.
    \end{itemize}
    
    From this, we can conclude that $\sdom{\Gamma,\Delta}\subseteq \FV{(\vv s, \vv t)}\subseteq \dom{\Gamma,\Delta}$. Given $\sigma\in\sem{\Gamma,\Delta}$, we can observe that $\sigma=\sigma_\Gamma,\sigma_\Delta$ for some  $\sigma_\Gamma\in\sem{\Gamma}$ and $\sigma_\Delta\in\sem{\Delta}$. Then we have:

    \begin{align*}
        \Pair{\vv{t}}{\vv{s}}\ansubst{\sigma}{} &= \Pair{\vv{t}}{\vv{s}}\ansubst{\sigma_\Gamma}{}\ansubst{\sigma_\Delta}{}\\
        &=\sum_{j=1}^{m} \beta_j (\sum_{i=1}^{n} \alpha_i \Pair{\vv{t}}{\vv{s}}[\sigma_{\Gamma i}])[\sigma_{\Delta j}]\\
        &\equiv\sum_{i=1}^{n}\sum_{j=1}^{m} \alpha_i \beta_j \Pair{\vv{t}\,[\sigma_{\Gamma i}][\sigma_{\Delta j}]}{\vv{s}\,[\sigma_{\Gamma i}][\sigma_{\Delta j}]}\\
        &=\sum_{i=1}^{n}\sum_{j=1}^{m} \alpha_i \beta_j \Pair{\vv{t}\,[\sigma_{\Gamma i}]}{\vv{s}\,[\sigma_{\Delta j}]}\\
        &=\Pair{\sum_{i=1}^{n} \alpha_i \vv{t}\, [\sigma_{\Gamma i}]}{\sum_{j=1}^{m} \beta_j \vv{s}\, [\sigma_{\Delta j}]}\\
        &=\Pair{\vv{t}\ansubst{\sigma_\Gamma}{}}{\vv{s}\ansubst{\sigma_\Delta}{}}\\
        &\eval \Pair{e^{i\theta_1}\vv v}{e^{i\theta_2}\vv w}\qquad\text{where: }\vv{v}\in\sem{A}, \vv{w}\in\sem{B}\\
        &= e^{i\theta} \Pair{\vv{v}}{\vv{w}}\qquad\text{where: }\theta=\theta_1 + \theta_2
    \end{align*}
    
    From this we can conclude that $\Pair{\vv t}{\vv{s}}\ansubst{\sigma}{}\real A\times B$.
    
    \item[LetPair] If the hypotheses are valid, then:
    \begin{itemize}
        \item $\sdom{\Gamma}\subseteq \FV{\vv t} \subseteq \dom{\Gamma}$ and $\vv t \ansubst{\sigma_\Gamma}{}\Vdash A\times B\ \forall \sigma_\Gamma\in\sem\Gamma$
        \item $\sdom{\Delta, {x}^{X}:A, {y}^{Y}:B}\subseteq\FV{\vv s}$
        \item $\FV{\vv{s}}\subseteq \dom{\Delta, {x}^{X}:A, {y}^{Y}:B}$
        \item $\vv s \ansubst{\sigma_\Delta}{}\Vdash C\ \forall \sigma_\Delta\in\sem{\Delta, {x}^{X}:A, {y}^{Y}:B}$
    \end{itemize}
    From this, we can conclude that:
    \begin{itemize}
        \item $\sdom{\Gamma,\Delta}\subseteq\FV{\LetP{x}{X}{y}{Y}{\vv{s}}{\vv{t}}}$
        \item $\FV{\LetP{x}{X}{y}{Y}{\vv{s}}{\vv{t}}}\subseteq\dom{\Gamma,\Delta}$
    \end{itemize}
    
    Given $\sigma\in\sem{\Gamma,\Delta}$, we have that $\ansubst{\sigma}{}=\ansubst{\sigma_\Gamma}{},\ansubst{\sigma_\Delta}{}$ for some $\sigma_\Gamma\in\sem\Gamma$ and $\sigma_\Delta\in\sem\Delta$. Then we have:
    \begin{align*}
        (&\LetP{x}{X}{y}{Y}{\vv{t}}{\vv{s}})\ansubst{\sigma}{} = \\
        &(\LetP{x}{X}{y}{Y}{\vv{t}}{\vv{s}})\ansubst{\sigma_\Gamma}{}\ansubst{\sigma_\Delta}{}\\
        &= (\sum_{i=1}^{n}\alpha_i(\LetP{x}{X}{y}{Y}{\vv{t}}{\vv{s}})[\sigma_{\Gamma i}])\ansubst{\sigma_{\Delta j}}{}\\
        &\equiv (\LetP{x}{X}{y}{Y}{\sum_{i=1}^{n}\alpha_i[\sigma_{\Gamma i}]\vv{t}}{\vv{s}})\ansubst{\sigma_\Delta}{}\\
        &= (\LetP{x}{X}{y}{Y}{\vv{t}\ansubst{\sigma_\Gamma}{}}{\vv{s}})\ansubst{\sigma_\Delta}{}\\
        &\eval (\LetP{x}{X}{y}{Y}{e^{i\theta} \Pair{\vv{v}}{\vv{w}}}{\vv{s}})\ansubst{\sigma_\Delta}{}\\
        &\hspace*{4cm}{\text{Where: }}\vv{v}\in\sem{A},\vv{w}\in\sem{B}\\
        &\lra e^{i\theta_1} (\vv{s}\ansubst{\sigma_\Delta}{}\ansubst{\Pair{\vv{v}}{\vv{w}}/x\otimes y}{X\otimes Y})\\
        &= e^{i\theta_1} (\vv{s}\ansubst{\sigma_\Delta}{}\ansubst{\vv{v}/x}{X}\ansubst{\vv{w}/y}{Y})\\
        &\eval e^{i\theta_1}  (e^{i\theta_2}  \vv{u})\qquad\text{where: }\vv{u}\in\sem{C}\\
        &\equiv e^{i\theta}  \vv{u}\qquad\text{where: }\theta=\theta_1 + \theta_2
    \end{align*}
    
    Since $\ansubst{\sigma_\Delta}{}\ansubst{\vv{v}/x}{X}\ansubst{\vv{w}/y}{Y}\in\sem{\Delta,{x}^{X}:A,{y}^Y:B}$, then we can conclude that $(\LetP{x}{X}{y}{Y}{\vv{t}}{\vv{s}})\ansubst{\sigma}{}\real C$.

    \item[LetTens] If the hypotheses are valid then:
    \begin{itemize}
        \item $\sdom{\Gamma}\subseteq \FV{\vv t} \subseteq \dom{\Gamma}$ and $\vv t \ansubst{\sigma}{}\Vdash \sharp(A\times B)\ \forall \sigma\in\sem\Gamma$
        \item $\sdom{\Delta, {x}^{X}:\sharp A, {y}^{Y}:\sharp B}\subseteq \FV{\vv s}$
        \item $\subseteq \dom{\Delta, {x}^{X}:\sharp A, {y}^{Y}:\sharp B}$
        \item $\vv s \ansubst{\sigma}{}\Vdash \sharp C\ \forall \sigma\in\sem{\Delta, {x}^{X}:\sharp A, {y}^{Y}:\sharp B}$
    \end{itemize}
    
    From this we can conclude that:
    \begin{itemize}
        \item $\sdom{\Gamma,\Delta}\subseteq\FV{\LetP{x}{X}{y}{Y|}{\vv{t}}{\vv{s}}}$
        \item $\FV{\LetP{x}{X}{y}{Y}{\vv{t}}{\vv{s}}}\subseteq\dom{\Gamma,\Delta}$
    \end{itemize}
    
    Given $\sigma\in\sem{\Gamma,\Delta}$, we have that $\ansubst{\sigma}{}=\ansubst{\sigma_\Gamma}{},\ansubst{\sigma_\Delta}{}$ for some $\sigma_\Gamma\in\sem\Gamma$ and $\sigma_\Delta\in\sem\Delta$. Using the first hypothesis we have that, $\vv t\ansubst{\sigma_\Gamma}{}\real \sharp(A\times B)$, from \Cref{thm:SharpCharacterization} we have that:
    
    \[\vv t\ansubst{\sigma_\Gamma}{}\eval e^{i\theta_1}\vv{u}=e^{i\theta_1}(\sum_{k=1}^{l} \gamma_k \Pair{\vv{v_k}}{\vv{u_k}})\] 
    
    With:
    \begin{itemize}
        \item $\sum_{k=1}^{l} |\gamma_k|^2 = 1$
        \item $\forall k,\ \vv{v_k}\in\sem{A},\ \vv{u_k}\in\sem{B}$
        \item $\forall k\neq l, \scal{\Pair{\vv{v_k}}{\vv{u_k}}}{\Pair{\vv{v_l}}{\vv{u_l}}}= 0$
    \end{itemize}
    
    Then:
    \begin{align*}
        (&\LetP{x}{X}{y}{Y}{\vv{t}}{\vv{s}})\ansubst{\sigma}{} \\
        &= \LetP{x}{X}{y}{Y}{\vv{t}}{\vv{s}}\ansubst{\sigma_\Gamma}{}\ansubst{\sigma_\Delta}{}\\
        &=(\sum_{i=1}^{n}\alpha_i\LetP{x}{X}{y}{Y}{\vv{t}}{\vv{s}}\ [\sigma_{\Gamma i}])\ansubst{\sigma_{\Delta}}{}\\
        &\equiv (\LetP{x}{X}{y}{Y}{\sum_{i=1}^{n}\alpha_i\vv{t}\ [\sigma_{\Gamma i}]}{\vv{s}})\ansubst{\sigma_\Delta}{}\\
        &=(\LetP{x}{X}{y}{Y}{\vv{t}\ansubst{\sigma_\Gamma}{}}{\vv{s}})\ansubst{\sigma_\Delta}{}\\
        &\eval(\LetP{x}{X}{y}{Y}{e^{i\theta_1}\vv{u}}{\vv{s}})\ansubst{\sigma_\Delta}{}\\
        &\lra e^{i\theta_1}(\vv{s}\ansubst{\sigma_\Delta}{}\ansubst{\vv{u}/x\otimes y}{X\otimes Y})\\
        &=e^{i\theta_1} (\sum_{k=1}^{l}\gamma_k\vv{s}\ansubst{\sigma_\Delta}{}\ansubst{\vv{v_k}/x}{X}\ansubst{\vv{u_k}/y}{Y})\\
        &\eval e^{i\theta_1} (\sum_{k=1}^{l}\gamma_k e^{i\rho_k} \vv{w_k})\quad\text{where: }\vv{w_k}\in\sem{C}\\
    \end{align*}

    It remains to be seen that the term has norm-$1$, $\|\sum_{k=1}^{l}\gamma_k e^{i\rho_k} \vv{w_k}\|=1$. For that, we observe:
    \begin{align*}
        \|&\sum_{k=1}^{l}\gamma_k e^{i\rho_k} \vv{w_k}\| \\
        &= \scal{\sum_{k=1}^{l}\alpha_i e^{i\rho_k} \vv{w_k}}{\sum_{k'=1}^{l}\gamma_{k'} e^{i\rho_{k'}} \vv{w_{k'}}}\\
        &= \sum_{k=1}^{l}\sum_{k'}^{l}\overline{\gamma_k e^{i\rho_k}}\  \gamma_{k'} e^{i\rho_{k'}}\scal{\vv{w_k}}{\vv{w_{k'}}}\\
        &=\sum_{k=1}^{l}\sum_{k'=1}^{l}\overline{\gamma_k e^{i\rho_k}}\ \gamma_{k'} e^{i\rho_{k'}} \scal{\vv{v_k}}{\vv{v_{k'}}}\scal{\vv{u_k}}{\vv{u_{k'}}}\quad(\text{from \Cref{lem:UnitPreserTens}})\\
        &= \sum_{k=1}^{k}\sum_{k'=1}^{l}\overline{\gamma_k e^{i\rho_k}}\  \gamma_{k'} e^{i\rho_{k'}} \scal{\Pair{\vv{u_k}}{\vv{v_k}}}{\Pair{\vv{u_{k'}}}{\vv{v_{k'}}}}\quad(\text{from \Cref{prop:InnerProdPairs}})\\
        &=\sum_{k=1}^n \overline{\gamma_k e^{i\rho_k}}\ \gamma_k e^{i\rho_k} \scal{\Pair{\vv{v_k}}{\vv{u_k}}}{\Pair{\vv{v_k}}{\vv{u_k}}} \\
        & \quad + \sum_{k,k'=1; k\neq k'}^n \overline{\gamma_k e^{i\rho_k}}\  \gamma_{k'} e^{i\rho_{k'}} \scal{\Pair{\vv{v_k}}{\vv{u_k}}}{\Pair{\vv{v_{k'}}}{\vv{u_{k'}}}}\\
        &= \sum_{k=1}^n \overline{\gamma_k e^{i\rho_k}}\ \gamma_k e^{i\rho_k} + 0 \\
        &= \sum_{k=1}^{l} |\gamma_k|^2 |e^{i\rho_k}|^2 = 1
    \end{align*}

    Then $\sum_{k=1}^{l}\gamma_k e^{i\rho_k} \vv{w_k}\in\sem{\sharp C}$. Finally, we can conclude that: $(\LetP{x}{X}{y}{Y}{\vv{t}}{\vv{s}})\ansubst{\sigma}{}\real{\sharp C}$

    \item[Case] If the hypotheses are valid then:
    \begin{itemize}
        \item $\sdom{\Gamma}\subseteq \FV{\vv{t}}\subseteq \dom{\Gamma}$
        \item For every $\sigma_\Gamma\in\sem{\Gamma}$, $\vv{t}\ansubst{\sigma_\Gamma}{}\real\genbasis{\vv{v_i}}{i=1}{n}$
        \item For every $i\in\{0,\dotsb ,n\}, \sdom{\Delta}\subseteq \FV{\vv{s_i}}\subseteq \dom{\Delta}$
        \item For every $i\in\{0,\dotsb ,n\}, \sigma_\Delta\in\sem{\Delta}$, $\vv{s_i}\ansubst{\sigma_\Delta}{}\real A$
    \end{itemize}

    From this we can conclude that:

    \begin{itemize}
        \item $\sdom{\Gamma,\Delta}\subseteq \FV{\gencase{\vv{t}}{\vv{v_1}}{\vv {v_n}}{\vv{s_1}}{\vv{s_n}}}$
        \item $\FV{\gencase{\vv{t}}{\vv{v_1}}{\vv {v_n}}{\vv{s_1}}{\vv{s_n}}}\subseteq \dom{\Gamma,\Delta}$
    \end{itemize}
    
    Then, given $\sigma\in\sem{\Gamma,\Delta}$, we have that $\ansubst{\sigma}{}=\ansubst{\sigma_\Gamma}{}\ansubst{\sigma_\Delta}{}$ for some $\sigma_\Gamma\in\sem{\Gamma}$ and $\sigma_\Delta\in\sem{\Delta}$. Using the first hypothesis we have that, $\vv{t}\ansubst{\sigma_\Gamma}{}\eval e^{i\theta_1}\vv{v_k}$ for some $k\in\{1,\dotsb ,n\}$. From the second hypothesis we have that $\vv{s_i}\ansubst{\sigma_\Delta}{}\eval e^{i\rho_i}\vv{u_i}\in\sem{A}$ for $i\in\{1,\dotsb , n\}$. Therefore:

    \begin{align*}
        (&\gencase{\vv{t}}{\vv{v_1}}{\vv{v_n}}{\vv{s_1}}{\vv{s_n}})\ansubst{\sigma}{}\\ 
        &= (\gencase{\vv{t}}{\vv{v_1}}{\vv{v_n}}{\vv{s_1}}{\vv{s_n}})\ansubst{\sigma_\Gamma}{}\ansubst{\sigma_\Delta}{}\\
        &= (\sum_{i=1}^{n}\alpha_i \gencase{\vv{t}[\sigma_{\Gamma i}]}{\vv{v_1}}{\vv{v_n}}{\vv{s_1}}{\vv{s_n}})\ansubst{\sigma_\Delta}{} \\
        &\equiv (\gencase{\sum_{i=1}^{n} \alpha_i \vv{t}[\sigma_{\Gamma i}]}{\vv{v_1}}{\vv{v_n}}{\vv{s_1}}{\vv{s_n}})\ansubst{\sigma_\Delta}{}\\
        &=(\gencase{\vv{t}\ansubst{\sigma_\Gamma}{}}{\vv{v_1}}{\vv{v_n}}{\vv{s_1}}{\vv{s_n}})\ansubst{\sigma_\Delta}{}\\
        &\eval(\gencase{e^{i\theta_1}\vv{v_k}}{\vv{v_1}}{\vv{v_n}}{\vv{s_1}}{\vv{s_n}})\ansubst{\sigma_\Delta}{}\\
        &\lraneq e^{i\theta_1}(\vv{s_k}\ansubst{\sigma_\Delta}{})\\
        &\eval e^{i\theta_1}(e^{i\rho_k} \vv{u_k})\qquad\text{Where: }\vv{u_k}\in\sem{A}\\
        &\equiv e^{i\theta} \vv{u_k}\qquad\text{With: }\theta=\theta_1 +\theta_2
    \end{align*}
    
    Since we pose no restriction on $k$, we can conclude that: $(\gencase{\vv{t}}{\vv{v_1}}{\vv{v_n}}{\vv{s_1}}{\vv{s_n}})\ansubst{\sigma}{}\real A$

    \item[UnitCase] If the hypotheses are valid, then:
    \begin{itemize}
        \item $\sdom{\Gamma}\subseteq \FV{\vv{t}}\subseteq \dom{\Gamma}$
        \item For every $\sigma_\Gamma\in\sem{\Gamma}$, $\vv{t}\ansubst{\sigma_\Gamma}{}\real\sharp\genbasis{\vv{v_i}}{i=1}{n}$
        \item For every $i\in\{0,\dotsb ,n\}$, $\sdom{\Delta}\subseteq \FV{\vv{s_i}}\subseteq \dom{\Delta}$
        \item For every $i\in\{0,\dotsb ,n\}, \sigma_\Delta\in\sem{\Delta}$, $\vv{s_i}\ansubst{\sigma_\Delta}{}\real A$
    \end{itemize}
    
    From this we can conclude that:
    
    \begin{itemize}
        \item $\sdom{\Gamma,\Delta}\subseteq \FV{\gencase{\vv{t}}{\vv{v_1}}{\vv{v_n}}{\vv{s_1}}{\vv{s_n}}}$
        \item $\FV{\gencase{\vv{t}}{\vv{v_1}}{\vv{v_n}}{\vv{s_1}}{\vv{s_n}}}\subseteq \dom{\Gamma,\Delta}$
    \end{itemize}
    
    Then, given $\sigma\in\sem{\Gamma,\Delta}$, we have that $\ansubst{\sigma}{}=\ansubst{\sigma_\Gamma}{}\ansubst{\sigma_\Delta}{}$ for some $\sigma_\Gamma\in\sem{\Gamma}$ and $\sigma_\Delta\in\sem{\Delta}$. Using the first hypothesis we have that, $\vv{t}\ansubst{\sigma_\Gamma}{}\real\sharp\genbasis{\vv{v_i}}{i=1}{n}$, then $\vv{t}\ansubst{\sigma_\Gamma}{}\eval e^{i\theta_1} \vv{u}\equiv e^{i\theta_1} (\sum_{i=1}^{n}\beta_i \vv{v_i})$ where $\sum_{i=1}^{n}|\beta_i|^2=1$. From the second hypothesis we have that $\vv{s_i}\ansubst{\sigma_\Delta}{}\eval e^{i\rho_i} \vv{u_i}\in\sem{A}$ for $i\in\{1,\dotsb ,n\}$ and $u_i\perp u_j$ if $i\neq j$. Therefore:

    \begin{align*}
        (&\gencase{\vv{t}}{\vv{v_1}}{\vv{v_n}}{\vv{s_1}}{\vv{s_n}})\ansubst{\sigma}{}\\ 
        &= (\gencase{\vv{t}}{\vv{v_1}}{\vv{v_n}}{\vv{s_1}}{\vv{s_n}})\ansubst{\sigma_\Gamma}{}\ansubst{\sigma_\Delta}{}\\
        &=(\sum_{i=1}^{n}\alpha_i \gencase{\vv{t}[\sigma_{\Gamma i}]}{\vv{v_1}}{\vv{v_n}}{\vv{s_1}}{\vv{s_n}})\ansubst{\sigma_\Delta}{} \\
        &\equiv (\gencase{\sum_{i=1}^{n} \alpha_i \vv{t}[\sigma_{\Gamma i}]}{\vv {v_1}}{\vv{v_n}}{\vv{s_1}}{\vv{s_n}})\ansubst{\sigma_\Delta}{}\\
        &=(\gencase{\vv{t}\ansubst{\sigma_\Gamma}{}}{\vv{v_1}}{\vv{v_n}}{\vv{s_1}}{\vv{s_n}})\ansubst{\sigma_\Delta}{}\\
        &\eval(\gencase{e^{i\theta_1} \vv{u}}{\vv{v}}{\vv{w}}{\vv{s_1}}{\vv{s_2}})\ansubst{\sigma_\Delta}{}\\
        &\lra e^{i\theta_1} (\sum_{i=1}^{n}\beta_i s_i)\ansubst{\sigma_\Delta}{}\\
        &= e^{i\theta_1} (\sum_{j=1}^{n}\delta_j (\sum_{i=1}^{n}\beta_i \vv{s_i})[\sigma_{\Delta j}])\\
        &\equiv e^{i\theta_1} (\sum_{i,j=1}^{n}\beta_i\delta_j\vv{s_i}[\sigma_{\Delta j}])\\
        &= e^{i\theta_1} (\sum_{i=1}^{n}\beta_i \vv{s_i}\ansubst{\sigma_\Delta}{})\\
        &\eval e^{i\theta_1} (\sum_{i=1}^{n}\beta_i e^{i\rho_i} \vv{u_i})
    \end{align*}
    
    It remains to be seen that: $\|\sum_{i=1}^{n}\beta_i e^{i\rho_i} \vv{u_i}\|=1$:
    \begin{align*}
        \|\sum_{i=1}^{n}\beta_i e^{i\rho_i} \vv{u_i}\| &= \scal{\sum_{i=1}^{n}\beta_i e^{i\rho_i} \vv{u_i}}{\sum_{i=1}^{n}\beta_i e^{i\rho_i} \vv{u_i}}\\
        &= \sum_{i,j=1}^{n}\overline{\beta_i e^{i\rho_i}}\beta_j e^{i\rho_j} \scal{\vv{u_i}}{\vv{u_j}}\\
        &= \sum_{i=1}^{n}\overline{\beta_i e^{i\rho_i}}\beta_i e^{i\rho_i} \scal{\vv{u_i}}{\vv{u_i}}\\
        &\qquad + \sum_{i,j=1; i\neq j}^{n}\overline{\beta_i e^{i\rho_i}}\beta_j e^{i\rho_j} \scal{\vv{u_i}}{\vv{u_j}}\\
        &= \sum_{i=1}^{n}|\beta_i|^2 |e^{i\rho_i}|^2  + 0\\
        &= \sum_{i=1}^{n}|\beta_i|^2 = 1
    \end{align*}

    Then we can conclude that $\sum_{i=1}^{n}\beta_i e^{i\rho_i}\vv{u_i}\in\sem{\sharp A}$ and finally: $(\gencase{\vv{t}}{\vv{v_1}}{\vv{v_n}}{\vv{s_1}}{\vv{s_n}})\ansubst{\sigma}{}\real\sharp A $
  
    \item[Sum] If the hypothesis is valid then for every $i$, $\sdom{\Gamma}\subseteq\FV{\vv{t_i}}\subseteq\dom{\Gamma}$.
    
    From this we can conclude that $\sdom{\Gamma}\subseteq\sum_{i=1}^{n}\alpha_i \vv{t_i}\subseteq\dom{\Gamma}$. Given $\sigma\in\sem{\Gamma}$, we have for every $i$, $\vv{t_i}\ansubst{\sigma}{}\eval e^{i\rho_i} \vv{v_i}$ where $\vv{v_i}\in\sem{A}$. Moreover, for every $i\neq j$, $\vv{v_i}\perp\vv{v_j}$ and $\sum_{i=1}^{n}|\alpha_i|^2=1$. Then:
    \begin{align*}
    (\sum_{i=1}^{n}\alpha_i\vv{t_i})\ansubst{\sigma}{} 
    &= \sum_{j=1}^{m}\beta_j(\sum_{i=1}^{n}\alpha_i \vv{t_i})[\sigma_j]\\
    &\equiv \sum_{i=1}^{n} \alpha_i \sum_{j=1}^{m} \beta_j \vv{t_i}[\sigma_j]\\
    &=\sum_{i=1}^{n} \alpha_i \vv{t_i}\ansubst{\sigma}{}\\
    &\eval \sum_{i=1}^{n} \alpha_i e^{i\rho_i} \vv{v_i}\\
    \end{align*}

    It remains to be seen that $\|\sum_{i=1}^{n} \alpha_i e^{i\rho_i} \vv{v_i}\|=1$:
    \begin{align*}
    &\|\sum_{i=1}^{n} \alpha_i e^{i\rho_i} \vv{v_i}\| \\
    &=\scal{\sum_{i=1}^{n} \alpha_i e^{i\rho_i}\vv{v_i}}{\sum_{i=1}^{n} \alpha_i e^{i\rho_i} \vv{v_i}}\\
    &= \sum_{i=i}^{n}\sum_{j=1}^{n} \overline{\alpha_i e^{i\rho_i}}\alpha_j e^{i\rho_j} \scal{\vv{v_i}}{\vv{v_j}}\\
    &=\sum_{i=1}^{n} \overline{\alpha_i e^{i\rho_i}}\alpha_i e^{i\rho_i} \scal{\vv{v_i}}{\vv{v_i}} + \sum_{\substack{i,j=1\\i\neq j}}^{n} \overline{\alpha_i e^{i\rho_i}}\alpha_j e^{i\rho_j} \scal{\vv{v_i}}{\vv{v_j}}\\
    &=\sum_{i=1}^{n}|\alpha_i|^2 |e^{i\rho_i}|^2+ 0\\
    &=\sum_{i=1}^{n}|\alpha_i|^2 = 1\\
    \end{align*}

    Then we can conclude that $\sum_{i=1}^{n}\alpha_i e^{i\rho_i}\vv{v_i}\in\sem{\sharp A}$ and finally $(\sum_{i=1}^{n}\alpha_i\vv{t_i})\ansubst{\sigma}{}\real\sharp A$.

    \item[Contr] If the hypothesis is valid, we have that $\sdom{\Gamma, x^X:\basis{X}, y^X:\basis{X}}\subseteq\FV{\vv{t}}\subseteq\dom{\Gamma,x^X:\basis{X}, y^X:\basis{X}}$ and given $\sigma\in\sem{\Gamma, x^X:\basis{X}, y^X:\basis{X}}$, then $\vv{t}\ansubst{\sigma}{}\in\sem{B}$. Then, we have that $\sdom{\Gamma, x^X:\basis{X}, y^X:\basis{X}}=\sdom{\Gamma, x^X:\basis{X}}$. Therefore:
    
    \[
    \sdom{\Gamma, x^X:\basis{X}}\subseteq\FV{\vv{t}}[x/y]\subseteq\dom{\Gamma, x^X:\basis{X}}
    \]

    Given $\sigma\in\sem{\Gamma, x^X:\basis{X}}$, we observe that $\ansubst{\sigma}{}=\ansubst{\vv{v}/x}{X}\ansubst{\sigma_\Gamma}{}$ with $\sigma_\Gamma\in\sem{\Gamma}$ and $\vv{v}\in\sem{\basis{X}}$. Since $\vv{v}\in\sem{\basis{X}}$, we know that $\vv{t}[\vv v/z] =\vv{t}\ansubst{\vv{v}/z}{X}$ for any variable $z$. Then we have:
    \begin{align*}
        \vv{t}[x/y]\ansubst{\sigma}{} &= \vv{t}[x/y]\ansubst{\vv{v}/x}{X}\ansubst{\sigma_\Gamma}{}\\
        &=\vv{t}[x/y][\vv{v}/x]\ansubst{\sigma_\Gamma}{}\\
        &=\vv{t}[\vv{v}/y][\vv{v}/x]\ansubst{\sigma_\Gamma}{}\\
        &=\vv{t}\ansubst{\vv{v}/y}{X}\ansubst{\vv{v}/x}{X}\ansubst{\sigma_\Gamma}{}    
    \end{align*}
    
    Since $\ansubst{\vv{v}/y}{X}\ansubst{\vv{v}/x}{X}\ansubst{\sigma}{}\in\sem{\Gamma, x^X:\basis{X}, y^X:\basis{X}}$, we get: $\vv{t}\ansubst{\vv{v}/y}{X}\ansubst{\vv{v}/x}{X}\\\ansubst{\sigma_\Gamma}{}\eval e^{i\theta}\vv{w}\in\sem{B}$
    Then we can finally conclude that $\vv{t}[x/y]\ansubst{\sigma}{}\real B$.

    \item[Weak] Given $\sigma\in\sem{\Gamma,x^X:\basis{X}}$, we observe that $\ansubst{\sigma}=\ansubst{\sigma_\Gamma}{}\ansubst{\vv{v}/x}{X}$ for some $\sigma_\Gamma\in\sem{\Gamma}$ and $\vv{v}\in\sem{\basis{X}}$. Using the first hypothesis, we know that $\vv{t}\ansubst{\sigma_\Gamma}{}\eval e^{i\theta}\vv{w}$ where $\vv{w}\in\sem{B}$. Then we have:
    \[
    \vv{t}\ansubst{\sigma}{}=\vv{t}\ansubst{\sigma_\Gamma}{}\ansubst{\vv{v}/x}{X}\eval e^{i\theta}\vv{w}\ansubst{\vv{v}/x}{X}
    \]
    Since $\vv{v}\in\sem{\basis{X}}$, $\vv{w}\ansubst{\vv{v}/x}{X}=\vv{w}[\vv{v}/x]=\vv{w}$ and $\vv{w}\in\sem{B}$, then we can finally conclude that $\vv{t}\ansubst{\sigma}{}\real B$.
    
    \item[Sub] Trivial since the set of realizers of ${A}$ is included in the set of realizers of ${B}$. 

    \item[Equiv] It follows from definition and the fact that the reduction commutes with the congruence relation.
    
    \item[Phase] It follows from the definition of type realizers.
    \end{description}
\end{proof}

\begin{lemma}[Substitution]\label{lem:Substitution}
  Let $\Gamma$, $\Delta$ be contexts, $A$ and $B$ types, and $X$ an orthonormal basis.
  If
  \(
    \TYP{\Gamma,\,x^{X}\!:\!A}{\vv{t}}{B}
    \quad\text{and}\quad
    \TYP{\Delta}{\vv{v}}{A}
  \) can be derived using the set of rules in \Cref{tab:TypingRules},
  and the substitution $\vv{t}\ansubst{\vv{v}/x}{X}$ is defined,
  then
  \(
    \TYP{\Gamma,\,\Delta}{\vv{t}\ansubst{\vv{v}/x}{X}}{B}
  \) can also be derived by the same set of rules.
\end{lemma}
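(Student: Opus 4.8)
The plan is to proceed by induction on the derivation of the first premise, $\TYP{\Gamma, x^{X}:A}{\vv{t}}{B}$, using only the rules of \Cref{tab:TypingRules}; in each case I exhibit a derivation of $\TYP{\Gamma,\Delta}{\vv{t}\ansubst{\vv{v}/x}{X}}{B}$ from the induction hypothesis and the second premise $\TYP{\Delta}{\vv{v}}{A}$ (tacitly assuming, as the notation $\Gamma,\Delta$ suggests, that the two contexts have disjoint domains). The two recurring tools are \Cref{lem:distributiveSubstitution}, which commutes the basis-dependent substitution with $\C$-linear combinations, and the \rnam{Equiv} rule, which reconciles a substituted term with its normalised syntactic shape — for this one also needs that $\vv{t}\equiv\vv{s}$ implies $\vv{t}\ansubst{\vv{v}/x}{X}\equiv\vv{s}\ansubst{\vv{v}/x}{X}$, which is immediate from the definition of the substitution as $\sum_i\alpha_i(\cdot)[\vv{b_i}/x]$ together with the fact that ordinary capture-avoiding substitution preserves $\equiv$. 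The base case \rnam{Axiom} is direct: $\Gamma$ is empty, $B=A$, $x\ansubst{\vv{v}/x}{X}\equiv\vv{v}$, so \rnam{Equiv} applied to $\TYP{\Delta}{\vv{v}}{A}$ concludes.

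For the context-splitting rules (\rnam{App}, \rnam{Pair}, \rnam{LetPair}, \rnam{LetTens}, \rnam{Case}, \rnam{UnitCase}) the variable $x$ lies in exactly one premise context; I apply the induction hypothesis there, leave the other premises untouched, and re-apply the same rule, closing with \rnam{Equiv} because applying the term former to the substituted subterm is congruent to the substitution applied to the whole term (here the congruence clauses of \Cref{tab:Congruence} distributing application, $\mathsf{let}$ and $\mathsf{case}$ over linear combinations do the work). For the purely structural rules \rnam{Sub}, \rnam{Phase}, \rnam{Equiv} the term is unchanged up to a scalar, so the induction hypothesis followed by the same rule suffices. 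The rule \rnam{UnitLam} needs one extra step: since $\equiv$ does not commute with $\lambda$, the substituted term $(\sum_i\alpha_i\Lam{y}{Y}{\vv{t_i}})\ansubst{\vv{v}/x}{X}$ is the \emph{double} sum $\sum_{i,j}\alpha_i\beta_j\Lam{y}{Y}{\vv{t_i}[\vv{b_j}/x]}$ (with $\vv{v}\equiv\sum_j\beta_j\vv{b_j}$ over $X$, or over the pure values when $X=\AbsBasis$); I therefore re-apply \rnam{UnitLam} with index set the pairs $(i,j)$, whose premise asks for $(\sum_i\alpha_i\vv{t_i})\ansubst{\vv{v}/x}{X}$ typed in the extended context — exactly what the induction hypothesis gives, modulo \Cref{lem:distributiveSubstitution} and \rnam{Equiv}.

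There remain \rnam{Weak} and \rnam{Contr}, whose distinguished variable carries a basis type $\basis{X'}$, and which may introduce a variable not free in $\vv{t}$. The point here is that $\TYP{\Delta}{\vv{v}}{\basis{X'}}$ with $\vv{v}$ a value distribution — hence a normal form — forces $\vv{v}\equiv e^{i\theta}\vv{b}$ for some $\vv{b}\in X'$; in particular $\vv{v}$ is closed, $\Delta$ consists only of basis-typed variables, and $\vv{t}\ansubst{\vv{v}/x}{X'}\equiv e^{i\theta}\,\vv{t}[\vv{b}/x]$ collapses to an ordinary substitution up to a phase. One then concludes by weakening the (sub-)derivation by $\Delta$ — admissible since $\Delta$ is all basis types — and applying \rnam{Phase} and \rnam{Equiv} (for \rnam{Contr} the induction hypothesis is invoked twice, substituting $\vv{b}$ for both copies before re-applying \rnam{Contr}). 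I expect the main obstacle to be bookkeeping rather than anything conceptual: because the basis-expanded form $\sum_i\alpha_i\vv{t}[\vv{b_i}/x]$ of the substitution does not commute with the binders $\lambda$, $\mathsf{let}$ and $\mathsf{case}$, the substituted term is systematically a linear combination of abstractions, $\mathsf{let}$- or $\mathsf{case}$-expressions rather than a single one, and turning it back into a typing derivation — by grouping the summands appropriately and re-applying the relevant introduction rule, as in \rnam{UnitLam}, with \rnam{Equiv} (and, in the $\mathsf{let}$/$\mathsf{case}$-body subcases, \rnam{Sum}) inserted where needed — is the only part of the argument that must be carried out with care.
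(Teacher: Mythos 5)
Your proposal is, in substance, the paper's argument recast as an induction on the typing derivation rather than on the structure of $\vv t$: where the paper inverts the last term constructor (implicitly appealing to generation lemmas) and re-applies the corresponding rule after the induction hypothesis, you case on the last rule used, which obliges you to treat \rnam{Weak}, \rnam{Contr}, \rnam{Sub}, \rnam{Equiv}, \rnam{Phase} explicitly but spares you the inversion. The core mechanism is identical: commute $\ansubst{\vv v/x}{X}$ with each construct via \Cref{lem:distributiveSubstitution}, close with \rnam{Equiv}, and handle \rnam{UnitLam} by re-indexing the double sum --- that case is fine. One local remark: in the \rnam{Weak}/\rnam{Contr} cases, $\vv v\equiv e^{i\theta}\vv b$ does not follow from $\vv v$ being a normal form alone; it follows from the definedness of $\ansubst{\vv v/x}{X'}$ (which forces $\vv v$ to be congruent to a combination of elements of $X'$, hence essentially closed) together with the semantic soundness of the rules (\Cref{thm:TypingRulesValidity}), and the claim that $\Delta$ is basis-typed needs the same semantic detour; this is acceptable, since the paper injects semantic facts into its own corresponding steps.

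The one genuine soft spot is exactly the step you flag at the end: when $x$ occurs in the body of a $\mathsf{let}$ or in the branches of a $\mathsf{case}$, the substituted term is a sum of binder-terms which the congruence of \Cref{tab:Congruence} cannot regroup (it distributes $\mathsf{let}$ and $\mathsf{case}$ over the scrutinee, not over the body or branches), so \rnam{Equiv} alone cannot reduce it to a single $\mathsf{let}$/$\mathsf{case}$ typed by \rnam{LetPair}/\rnam{Case}. Your proposed repair via \rnam{Sum} is not justified as stated: \rnam{Sum} needs pairwise orthogonality judgements between the summands and a unit-norm condition, and it produces a $\sharp$-type rather than the target type, none of which your induction hypotheses supply; making it work would require either establishing the orthogonality semantically (in the style of \Cref{lem:OrthogonalSubstitution}, which is only available when the substituted variable carries a $\sharp$-type) or arguing that in every derivable instance with a basis-typed variable the distribution $\vv v$ collapses to a single basis element up to phase, so the sum is a single term. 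To be fair, the paper's own proof compresses precisely this subcase into the claim that the case where $x$ lies in the body context is ``analogous'' to the scrutinee case, which it is not quite for the reason you identify; so your attempt is no less complete than the published one on this point, but you should be aware that the \rnam{Sum} patch, as written, does not yet close it.
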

\begin{proof}
  By induction on $\vv{t}$.
  \begin{description}
    \item[$x\not\in\FV{\vv{t}}$:] Then $x\not\in\sdom{\Gamma,\,x^{X}\!:\!A}$. By a straightforward generation lemma we have that $\TYP{\Gamma}{\vv{t}}{B}$, and we can derive $\TYP{\Gamma,\Delta}{\vv{t}}{B}$ via the $\textsc{Weak}$ rule. Notice that if $x\not\in\FV{\vv{t}}$, every variable in $\Delta$ has a non-linear type and $\vv{t}\ansubst{\vv{v}/x}{X}=\vv{t}$.
    
    \item[$\vv{t}=x$:] Then every variable in $\Gamma$ has a non-linear type and $A\leq B$. This means we can derive $\TYP{\Gamma,\Delta}{\vv{v}}{B}$, by rules $\textsc{Weak}$ and $\textsc{Sub}$.
    
    \item[$\vv{t}=(\Lam{y}{Y}{\vv{s}})$:] Then $\TYP{\Gamma,\,y^{Y}\!:\!C,\,x^{X}\!:\!A}{\vv{s}}{D}$, with $C\Arr D\leq B$. By induction hypothesis, $\TYP{\Gamma,\,y^{Y}\!:\!C}{\vv{s}\ansubst{\vv{v}/x}{X}}{D}$. This means we can derive $\TYP{\Gamma,\Delta}{(\Lam{y}{Y}{\vv{s}})\ansubst{\vv{v}/X}{X}}{B}$, by rules $\textsc{Sub}$ and $\textsc{UnitLam}$.
    
    \item[$\vv{t}=\vv{s_1}\,\vv{s_2}$:] Then we have that $\TYP{\Gamma_1}{\vv{s_1}}{C\Arr D}$, and $\TYP{\Gamma_2}{\vv{s_2}}{C}$ with $D\leq B$ and $\Gamma_1,\Gamma_2 = \Gamma,x^{X}\!:\!A$. We consider the case where $\Gamma_1 = \Gamma_1',\,x^{X}\!:\!A$. Then, by inductive hypothesis $\TYP{\Gamma_1'\Delta}{\vv{s_1}\ansubst{\vv{v}/x}{X}}{C\Arr D}$. This means we can derive $\TYP{\Gamma_1',\Gamma_2,\Delta}{(\vv{s_1}\,\vv{s_2})\ansubst{\vv{v}/x}{X}}{B}$, by rules $\textsc{Sub}$ and $\textsc{App}$. The case where $\Gamma_2 = \Gamma_2',\,x^{X}\!:\!A$ is analogous.
    
    \item[$\vv{t}=\Pair{s_1}{s_2}$:] Then we have that $\TYP{\Gamma_1}{s_1}{C}$, and $\TYP{\Gamma_2}{s_2}{D}$ with $C\times D\leq B$ and $\Gamma_1,\,\Gamma_2 = \Gamma,x^{X}\!:\!A$. We consider the case where $\Gamma_1 = \Gamma_1',\,x^{X}\!:\!A$. Then, by inductive hypothesis $\TYP{\Gamma_1',\,\Delta}{s_1}{C}$. This means we can derive $\TYP{\Gamma_1',\,\Gamma_2,\,\Delta}{\Pair{s_1}{s_2}\ansubst{\vv{v}/x}{X}}{B}$, by rules $\textsc{Sub}$ and $\textsc{Pair}$. The case where $\Gamma_2 = \Gamma_2',\,x^{X}\!:\!A$ is analogous.
    
    \item[$\vv{t}=\LetP{y}{Y}{z}{Z}{\vv{s_1}}{\vv{s_2}}$:] Then we have two possibilities:
      \begin{enumerate}
        \item\label{case:subs_letp_1} Either, $\TYP{\Gamma_1}{\vv{s_1}}{C\times D}$, and $\TYP{\Gamma_2,\,y^{Y}\!:\!C,\,z^{Z}\!:\!D}{\vv{s_2}}{E}$ with $E\leq B$.
        \item\label{case:subs_letp_2} Or, $\TYP{\Gamma_1}{\vv{s_1}}{\sharp(C\times D)}$, and $\TYP{\Gamma_2,\,y^{Y}\!:\!\sharp{C},\,z^{Z}\!:\!\sharp{D}}{\vv{s_2}}{E}$ with $\sharp E\leq B$.
      \end{enumerate}
      In either way, we consider the case where $\Gamma_1 = \Gamma_1',\,x^{X}\!:\!A$. Then, by inductive hypothesis $\TYP{\Gamma_1'\,\Delta}{\vv{s_1}\ansubst{\vv{v}/x}{}}{C\times D}$ in case \ref{case:subs_letp_1} ($\sharp (C\times D)$ in case \ref{case:subs_letp_2}). This means we can derive $\TYP{\Gamma_1',\,\Gamma_2,\Delta}{(\LetP{y}{Y}{z}{Z}{\vv{s_1}}{\vv{s_2}})\ansubst{\vv{v}/x}{X}}{B}$ by rules $\textsc{Sub}$ and $\textsc{LetPair}$ (or, $\textsc{LetTens}$). The case where $\Gamma_2 = \Gamma_2',\,x^{X}\!:\!A$ is analogous.

    \item[$\vv{t}=\gencase{\vv{s}}{\vv{w_1}}{\vv{w_n}}{\vv{r_1}}{\vv{r_n}}$:] Then we have two possibilities:
    \begin{enumerate}
      \item\label{case:subs_case_1} Either, $\TYP{\Gamma_1}{\vv{s_1}}{\basis{\{\vv{v_i}\}_{i=1}^n}}$, and for all $i\in\{0,\dotsb,n\}$, $\TYP{\Gamma_2}{\vv{s_i}}{C}$ with $C\leq B$.
      \item\label{case:subs_case_2} Or, $\TYP{\Gamma_1}{\vv{s_1}}{\sharp\basis{\{\vv{v_i}\}_{i=1}^n}}$, and for all $i,j\in\{0,\dotsb,n\}$, with $i\neq j$ $\SORTH{\Gamma_2}{\vv{s_i}}{\vv{s_j}}{C}$ with $\sharp C\leq B$.
    \end{enumerate}
    In either way, we consider the case where $\Gamma_1 = \Gamma_1',\,x^{X}\!:\!A$. Then, by inductive hypothesis $\TYP{\Gamma_1'\,\Delta}{\vv{s_1}\ansubst{\vv{v}/x}{}}{\basis{\{\vv{v_i}\}_{i=1}^n}}$ in case \ref{case:subs_case_1} ($\sharp \basis{\{\vv{v_i}\}_{i=1}^n}$ in case \ref{case:subs_case_2}). This means we can derive $\TYP{\Gamma_1',\,\Gamma_2,\Delta}{(\gencase{\vv{s}}{\vv{w_1}}{\vv{w_n}}{\vv{r_1}}{\vv{r_n}})\ansubst{\vv{v}/x}{X}}{B}$ by rules $\textsc{Sub}$ and $\textsc{Case}$ (or, $\textsc{UnitCase}$). Since orthogonality is preserved by substitutions in $\sem{\Gamma_2}$, the case where $\Gamma_2 = \Gamma_2',\,x^{X}\!:\!A$ is analogous.

    \item[$\vv{t}=\sum_{i=1}^{n}\alpha_i \vv{s_i}$:] Then we have two posibilities:
    \begin{enumerate}
      \item\label{case:subs_sum_1} For all $i\in\{0,\dotsb, n\}, \vv{s_i}=(\Lam{y}{Y}{\vv{r_i}})$, and:\\
      $\TYP{\Gamma,\,x^{X}\!:\!A}{\sum_{i=1}^{n}\alpha_i(\Lam{y}{Y}{\vv{r_i}})}{C\Arr D}$ with $C\Arr D\leq B$.
      \item\label{case:subs_sum_2} For all $i\in\{0,\dotsb, n\}$, with $i\neq j$, $\SORTH{\Gamma,\,x^{X}\!:\!A}{\vv{s_i}}{\vv{s_j}}{C}$, $\sum_{i=1}^{n}|\alpha_i|^2=1$, and $\sharp C\leq B$.
    \end{enumerate}
    
    In case \ref{case:subs_sum_1}, by inductive hypothesis we have that:\\
    $\TYP{\Gamma,\,\Delta,\,y^{Y}\!:\!C}{\sum_{i=1}^{n}\alpha_i \vv{r_i}\ansubst{\vv{v}/x}{X}}{D}$. This means we can derive \\$\TYP{\Gamma,\Delta}{\sum_{i=1}^{n}\alpha_i (\Lam{y}{Y}{\vv{r_i}})\ansubst{\vv{v}/x}{X}}{B}$ by rules $\textsc{Sub}$ and $\textsc{UnitLam}$.

    In case \ref{case:subs_sum_2}, by inductive hypothesis we have that for all $i\in\{0,\dotsb,n\}$, and $i\neq j$ $\SORTH{\Gamma,\Delta}{\vv{s_i}\ansubst{\vv{v}/x}{X}}{\vv{s_j}\ansubst{\vv{v}/x}{X}}{C}$. Since orthogonality is preserved by substitutions in $\sem{\Gamma,\Delta}$, this means we can derive $\TYP{\Gamma,\Delta}{\sum_{i=1}^n\alpha_i \vv{s_i}\ansubst{\vv{v}/x}{X}}{B}$ by rules $\textsc{Sub}$ and $\textsc{Sum}$.

    \item[$\vv{t}=\alpha\vv{s}$:] Then $\alpha=e^{i\theta}$, and $\TYP{\Gamma,\,x^{X}\!:\!A}{\vv{s}}{C}$ with $C\leq B$. By induction hypothesis, $\TYP{\Gamma}{\vv{s}\ansubst{\vv{v}/x}{X}}{C}$. This means we can derive $\TYP{\Gamma,\Delta}{e^{i\theta}\vv{s}\ansubst{\vv{v}/X}{X}}{\\B}$, by rules $\textsc{Sub}$ and $\textsc{Phase}$.
      \qedhere
  \end{description}
\end{proof}

\begin{restatetheorem}[Restatement of \Cref{thm:SubjectReduction}]
  If $\TYP{\Gamma}{\vv{t}}{A}$ can be derived using the set of rules in
  \Cref{tab:TypingRules} and $\vv{t}\to\vv{u}$, then
  $\TYP{\Gamma}{\vv{u}}{A}$ can also be derived by the same set of rules.
\end{restatetheorem}
\begin{proof}
  We proceed by induction on the derivation of the elementary reduction
  $\lraneq$. The congruence closure to obtain $\lra$ is handled trivially via
  the \textsc{Equiv} rule. We only give the basis cases as the inductive cases (the contextual cases) are straightforward.
  \begin{itemize}
    \item Let $(\Lam{x}{X}{\vv t})\vv v \lraneq \vv t\ansubst{\vv v/x}{X}$:
      Assume $\TYP{\Gamma,\Delta}{(\Lam{x}{X}{\vv t})\,\vv v}{B}$. By
      \textsc{App}, we have $\TYP{\Gamma}{\Lam{x}{X}{\vv t}}{A\Arr B}$ and
      $\TYP{\Delta}{\vv v}{A}$.  From $\TYP{\Gamma}{\Lam{x}{X}{\vv t}}{A\Arr B}$
      and \textsc{UnitLam}, we get $\TYP{\Gamma,x^{X}:A}{\vv t}{B}$. By
      \Cref{lem:Substitution},
      using $\TYP{\Delta}{\vv v}{A}$ and the fact that $\vv t\ansubst{\vv
      v/x}{X}$ is defined, we conclude $\TYP{\Gamma,\Delta}{\vv t\ansubst{\vv
      v/x}{X}}{B}$, as required.

    \item Let $\LetP{x}{X}{y}{Y}{\vv v}{\vv t} \lraneq \vv t\ansubst{\vv
      v/x\otimes y}{X\otimes Y}$: Assume
      $\TYP{\Gamma,\Delta}{\LetP{x}{X}{y}{Y}{\vv v}{\vv t}}{C}$.  By
      \textsc{LetPair} we have $\TYP{\Gamma}{\vv v}{A\times B}$ and
      $\TYP{\Delta,\,x^{X}:A,\,y^{Y}:B}{\vv t}{C}$.  Decompose $\vv v$ (in
      $X\otimes Y$) as required by the definition of $\ansubst{\vv v/x\otimes
      y}{X\otimes Y}$; by \Cref{lem:Substitution}
      applied twice
      (first for $x$, then for $y$), we conclude $\TYP{\Gamma,\Delta}{\vv
      t\ansubst{\vv v/x\otimes y}{X\otimes Y}}{C}$.

    \item Let $\gencase{\vv{v_k}}{\vv{v_1}}{\vv{v_n}}{\vv{t_1}}{\vv{t_n}}
      \lraneq \vv{t_k}$: Assume
      $\TYP{\Gamma,\Delta}{\gencase{\vv{v_k}}{\vv{v_1}}{\vv{v_n}}{\vv{t_1}}{\vv{t_n}}}{A}$.
      By \textsc{Case} we have $\TYP{\Gamma}{{\vv{v_k}}}{\basis{\{\vv
      v_i\}_{i=1}^n}}$ and $\TYP{\Delta}{\vv t_i}{A}$ for all $i$.  Thus, we are
      done.

    \item Let
      $\gencase{\sum_{i=1}^n\alpha_i\vv{v_i}}{\vv{v_1}}{\vv{v_n}}{\vv{t_1}}{\vv{t_n}}
      \lraneq \sum_{i=1}^n\alpha_i\vv{t_i}$: Assume
      $\TYP{\Gamma,\Delta}{\gencase{\sum_{i=1}^n\alpha_i\vv{v_i}}{\vv{v_1}}{\vv{v_n}}{\vv{t_1}}{\vv{t_n}}}{A}$.
      By \textsc{UnitCase} we have $\TYP{\Gamma}{\vv t}{\sharp\basis{\{\vv
      v_i\}}}$ and, for all $i\neq j$, $\SORTH{\Delta}{\vv t_i}{\vv t_j}{A}$.
      Then the reduct $\sum_{i=1}^{n}\alpha_i \vv{t_i}$ is typed by \textsc{Sum}
      as $\sharp A$ (using the orthogonality premises and the normalisation
      condition ensured by the semantics of $\sharp$), hence
      $\TYP{\Gamma,\Delta}{\sum_{i=1}^{n}\alpha_i \vv{t_i}}{\sharp A}$.
      \qedhere
  \end{itemize}
\end{proof}

\end{document}